\newtheorem{theorem}{Theorem}
\newtheorem{lemma}[theorem]{Lemma}
\newtheorem{corollary}[theorem]{Corollary}
\Crefname{lemma}{Lemma}{Lemma}
\Crefname{theorem}{Theorem}{Theorem}
\Crefname{corollary}{Corollary}{Corollary}
\newtheorem*{theorem*}{Main Theorem}
\newtheorem*{conjecture*}{Conjecture}
\newtheorem*{corollary*}{Corollary}
\newcommand{\del}{\partial}
\newcommand{\im}{\operatorname{im}}
\newcommand{\id}{\operatorname{id}}
\newcommand{\Tot}{\operatorname{Tot}}
\newcommand{\dist}{\operatorname{dist}}
\begin{document}
%
\title{Balanced Product Quantum Codes}
%
%
%

\author{Nikolas~P.~Breuckmann
	and~Jens~N.~Eberhardt
	
\thanks{Nikolas P. Breuckmann, University College London, United Kingdom, \href{mailto:n.breuckmann@ucl.ac.uk}{n.breuckmann@ucl.ac.uk}}%
\thanks{Jens N. Eberhardt, University of Bonn, Germany, \href{mailto:mail@jenseberhardt.com}{mail@jenseberhardt.com}}%
}

\maketitle

\begin{abstract}
This work provides the first explicit and non-random family of $[[N,K,D]]$ LDPC quantum codes which encode $K \in \Theta(N^\frac{4}{5})$ logical qubits with distance $D \in \Omega(N^\frac{3}{5})$.
The family is constructed by amalgamating classical codes and Ramanujan graphs via an operation called {\em balanced product}.

Recently, Hastings--Haah--O'Donnell and Panteleev--Kalachev were the first to show that there exist families of LDPC quantum codes which break the $\operatorname{polylog}(N)\sqrt{N}$ distance barrier.
However, their constructions are based on probabilistic arguments which only guarantee the code parameters with high probability whereas our bounds  hold unconditionally.

{Further, balanced products allow for non-abelian twisting of the check matrices, leading to a construction of LDPC quantum codes that can be shown to have $K\in \Theta(N)$ and that we conjecture to have linear distance $D\in \Theta(N)$.}
\end{abstract}

\begin{IEEEkeywords}
Quantum codes, quantum error-correction, quantum fault-tolerance.
\end{IEEEkeywords}

%
\IEEEpeerreviewmaketitle

\section{Introduction}
%
%
%
%
%
%

\IEEEPARstart{T}{he}  construction of low-density parity check (LDPC) quantum codes has some unique challenges in comparison to classical LDPC codes~\cite{breuckmann2021ldpc}.
While there are classical codes with constant encoding rate and linear distance, so-called {\em good codes}, no equivalent statement is known for LDPC quantum codes.
Even more severely, no LDPC quantum codes with distance significantly larger than~$\sqrt{N}$ were known to exist until recently.\footnote{There is a nice construction due to Bacon et.~al.~\cite{bacon2017sparse} which gives subsystem codes with distance $D\in \Omega(n^{1-\epsilon})$. It has sparse gauge operators but it is not LDPC. There are other interesting constructions using a mapping of quantum circuits to Hamiltonians~\cite{breuckmann2014space} where the LDPC condition is slightly weakened, see~\cite{bohdanowicz2019good}.}

For a long time, the best known distance was achieved by Freedman--Meyer--Luo in~\cite{freedman2002z2}.
Their codes are derived from fiber bundles over the torus~$T^2$.
The fibers are hyperbolic surfaces twisted along geodesics, giving rise to a quantum code family with distance $\sqrt[4]{\log N} \sqrt{N}$.
Evra--Kaufman--Z\'emor~\cite{ramanujancomplexesone} and Kaufman--Tessler~\cite{ramanujancomplexestwo} further improved this to $\sqrt{\log N} \sqrt{N}$ and $\log^k(N) \sqrt{N}$, respectively.

Recently, the apparent $\operatorname{polylog}(N)\sqrt{N}$ distance barrier was broken.
Hastings--Haah--O'Donnell~\cite{hastings2020fiber} introduced {\em fiber bundle codes} which have a random classical code as base and  a cycle graph as fiber.
They obtain $K\in \Theta(N^\frac{3}{5})$ and $D\in \Omega(N^\frac{3}{5}/\operatorname{polylog}(N))$.
Panteleev--Kalachev~\cite{panteleev2019degenerate,panteleev_almostlinear} obtained codes with almost linear distance bounds using a similar construction called {\em lifted product}, where $K\in \Theta(N^\alpha \log(N))$ and $D\in \Omega(N^{1-\alpha/2}/\log(N))$ for $0\leq \alpha < 1$.
Both constructions are probabilistic and not explicit.
Here, \emph{explicit} is a technical term which means that the codes can be generated in polynomial time by a deterministic algorithm, so that the bounds on the code parameters are guaranteed to hold.

In this manuscript we explain how to construct families of quantum codes with exceptional asymptotic properties in an explicit and highly symmetrical way.
To this end, we mimic the well-known {\em balanced product} $X\times_{H} Y$ of two topological spaces~$X$ and~$Y$ which admit the action of a group~$H$~\cite{james2012general}.
Balanced products may be familiar to physicists as they appear in the construction of vector bundles associated to principal bundles.

In analogy to the topological case, we introduce \emph{balanced product codes}~$C\otimes_H D$ which are constructed from two (quantum) codes~$C$ and~$D$ that have a common symmetry realized by the action of a group~$H.$

In the trivial case $H=\{1\},$ the balanced product $X\times_H Y$ coincides with the cartesian product $X\times Y$ and, similarly, balanced product codes are just hypergraph product codes~\cite{tillich2013quantum}.
For a non-trivial~$H$ the number of physical qubits is reduced by a factor of up to~$|H|$ while, in certain cases, the distance is preserved. Hence, balanced product codes can achieve better relative distances than hypergraph product codes.

From a topological point of view, the balanced product~$X\times_H Y$ is a fiber bundle over the base~$X/H$ with fiber~$Y$ and can be understood as \emph{twisting} $Y$ along $X/H$ in the cartesian product $X/H\times Y.$  Similarly, balanced product codes can be obtained from hypergraph product codes by introducing twists in their checks. These twists can result in an increased relative distance.

When $H$ is a cyclic group, balanced product codes overlap with the fiber bundle codes considered in~\cite{hastings2020fiber} and lifted product codes from~\cite{panteleev2019degenerate,panteleev_almostlinear}. 
In this case, one can prove lower bounds on the distance which is we why focus on this class of balanced product codes.
We note, however, that balanced product codes are more general than these constructions and also work for non-abelian groups~$H$ and non-free group actions. In fact, we conjecture that good LDPC quantum codes can be constructed using non-abelian balanced products.

Concretely, we will focus on balanced products of good classical codes and repetition codes which have an action of a cyclic group~$H$.
To obtain good classical codes with large cyclic symmetry, we employ a well-known construction due to Sipser--Spielman~\cite{sipser1996expander} which combines certain families of graphs, called expander graphs, with small codes of fixed size.

We consider two families of expander graphs:
The first are Ramanujan graphs arising as Cayley graphs of the projective linear group~$\operatorname{PGL}(2,q)$ and were introduced by Lubotzky--Phillips--Sarnak~\cite{lubotzky1988ramanujan}.
The second, similar, construction uses tessellations and finite coverings of hyperbolic surfaces~\cite{breuckmann2020single}.
Both constructions admit a cyclic symmetry of known order.
While we ultimately employ the Lubotzky--Phillips--Sarnak (LPS) construction, we note that hyperbolic surfaces are extremely flexible and could be more suitable for practical applications.

By carefully choosing the code parameters of the factors of the balanced product we obtain our main result.
\begin{theorem*}[see \Cref{thm:explicitfamily}]
There is an explicit family of $[[N,K,D_X,D_Z]]$ LDPC quantum codes encoding $K \in \Theta(N^\frac{2}{3})$ logical qubits with $X$-distance $D_X \in \Omega(N^\frac{1}{3})$ and $Z$-distance $D_Z \in \Theta(N)$.
\end{theorem*}
Applying the distance balancing procedure of~\cite{hastings2016weight} and~\cite{ramanujancomplexesone} we obtain the following corollary.
\begin{corollary*}[see \Cref{cor:explicitfamily}]
There is an explicit family of $[[N,K,D]]$ LDPC quantum codes encoding $K \in \Theta(N^\frac{4}{5})$ logical qubits with distance $D \in \Omega(N^\frac{3}{5})$.
\end{corollary*}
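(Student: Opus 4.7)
The Main Theorem furnishes a highly asymmetric code, with $D_X\in\Omega(N^{1/3})$ dwarfed by $D_Z\in\Theta(N)$. The plan is to feed it into the distance balancing machinery of Hastings and of Evra--Kaufman--Z\'emor, which converts this asymmetry into balanced distance at the cost of a multiplicatively larger block length. Concretely, one fixes an integer $t$ and an explicit good classical LDPC code $B$ of length $t$ with $\Theta(t)$ logical bits and distance $\Theta(t)$ (for instance a Sipser--Spielman expander code), and forms the chain complex tensor product of the CSS chain complex $C_2\to C_1\to C_0$ with the classical parity check complex $B_1\to B_0$. Extracting a CSS code from the resulting length-four complex in the standard way produces a new $[[N',K',D'_X,D'_Z]]$ LDPC code whose parameters grow according to a K\"unneth-type formula: $N'\in\Theta(Nt)$, $K'\in\Theta(Kt)$, $D'_X\in\Omega(D_X t)$, while $D'_Z\in\Omega(D_Z)$.

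Setting $t\in\Theta(N^{2/3})$ balances the distances, since then $D_X t$ matches $D_Z$ in order. The resulting parameters are $N'\in\Theta(N^{5/3})$, $K'\in\Theta(N^{4/3})$ and $D\in\Omega(N)$. Rewriting via $N\in\Theta(N'^{3/5})$ yields $K\in\Theta(N'^{4/5})$ and $D\in\Omega(N'^{3/5})$, as claimed. The LDPC property is inherited since both the input quantum code and the classical LDPC code have bounded stabilizer/check weights, and the tensor product of bounded-weight chain maps remains bounded.

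The construction is entirely explicit: explicit good classical LDPC code families are well established, and the distance balancing construction is deterministic given its inputs. The main conceptual work is simply invoking the correct form of the distance balancing theorem from the cited references; the remaining steps are the parameter arithmetic above and a short check that the stabilizer weights stay $O(1)$. I do not foresee any substantial obstacle beyond matching conventions between the two distance balancing statements in the literature and keeping track of which homological direction corresponds to $X$ versus $Z$ distance when the four-term complex is truncated.
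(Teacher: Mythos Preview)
Your proposal is correct and matches the paper's own approach: the paper likewise applies the Evra--Kaufman--Z\'emor distance balancing procedure to the code of the Main Theorem using good classical $[N^{2/3},\Theta(N^{2/3}),\Theta(N^{2/3})]$-codes, and the parameter arithmetic is identical to yours. The one point you do not address explicitly is that the input code may only be a \emph{subsystem} code (because the Tanner code's checks need not be independent, producing nontrivial vertical homology $H_1^{\mathcal{G}}$); the paper handles this by noting that distance balancing respects the splitting $H_1 = H_1^{\mathcal{L}} \oplus H_1^{\mathcal{G}}$, so the logical/gauge decomposition is preserved and the bounds still apply to the dressed distance.
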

We note that due to a technical condition some members of the constructed code family may turn out to be a subsystem codes.
However, we stress that they are LDPC with respect to the stabilizer checks.
Therefore, they do not share the same draw-backs as previously known subsystem codes such as~\cite{bacon2017sparse}. 

By taking expander codes based on Cayley graphs of different groups it is likely that we can obtain parameters $K\in \Theta(N^\alpha )$ and $D\in \Omega(N^{1-\alpha/2})$ for different values $\alpha < 4/5.$
Unfortunately, we are not aware of suitable examples.

We believe that our methods can be used to construct families of good LDPC quantum codes, see \Cref{sec:conclusion}.

{\em Remark}---This manuscript arose with the intention of giving an explicit construction and a more conceptual understanding of the fiber bundle codes of Hastings--Haah--O'Donnell \cite{hastings2020fiber}.
By adapting their distance arguments we were able to explicitly construct balanced product codes with $K\in \Theta(N^{\frac{3}{5}})$ and $D\in\Omega(N^{\frac{3}{5}})$ improving their distance bounds by a factor of $\operatorname{polylog}(N)$.
However, shortly before we intended to publish, Panteleev--Kalachev showed that even better distance bounds are attainable for lifted product codes \cite{panteleev_almostlinear}.
We make use of a result of theirs to further improve our bounds.

\subsection*{Summary}
In \Cref{sec:complexes} we give a homological description of (quantum) codes and their products.
We introduce useful notions, such as double complexes and total complexes, which allow us to describe products of quantum codes from an abstract level.
These tools allow us to compute the \emph{encoding rate} of the balanced product codes constructed in this manuscript. 
Moreover, we use them to give new and streamlined proofs of results of Hastings--Haah--O'Donnell on fiber bundle codes~\cite{hastings2020fiber}.

In \Cref{sec:expander} we construct highly symmetrical good classical LDPC codes which will serve as factors of our balanced product codes. 
To this end, we use an idea due to Sipser--Spielman which builds good classical LDPC codes, called expander codes, from expander graphs. We study the combinatorial properties of expander codes in detail, as they will be used to bound the \emph{relative distance} of the resulting balanced product codes. To obtain expander codes with large symmetry groups, we consider LPS expander graphs which are Cayley graphs of the groups $\operatorname{PGL}(2,q)$ and, alternatively, graphs derived from hyperbolic geometry.

In \Cref{sec:balanced_products} we introduce the general framework for balanced product codes. 
We give a topological motivation and explain the relation of balanced products to fiber bundles. 
We explain how this topological notion can be discretized in the setting of graphs. 
By linearization, we obtain a notion of balanced product of vector spaces and chain complexes, thus yielding a general definition of balanced product codes. 
We introduce a Künneth formula and compare balanced products to fiber bundle and lifted product codes. 
Next, we discuss balanced products of expander codes and (cyclic) graphs in detail. 
We then discuss a subsystem version of balanced product codes. At the end of the section, we adapt Panteleev--Kalachev's distance theorems for lifted product codes~\cite{panteleev_almostlinear} to our setting. 

Finally, in \Cref{sec:explicit} we prove our main theorem by applying the balanced product construction from \Cref{sec:balanced_products} to the expander codes of \Cref{sec:expander} and cyclic repetition codes.
We show that the expander codes built from LPS expander graphs admit large cyclic symmetry which allow us to form the balanced product with a cyclic repetition code. 
We show that, after carefully choosing parameters, the resulting family of codes fulfills the bounds of our main theorem.

In the conclusion, \Cref{sec:conclusion}, we formulate our conjecture regarding the construction of a family of quantum LDPC codes that we believe to be \emph{good}, i.e. having parameters scaling as $K\in \Theta(N)$ and $D\in \Theta(N)$.

\section{(Quantum) Codes and Chain Complexes}\label{sec:complexes}
In this section we give a homological perspective on classical and quantum codes which puts them on equal footing and makes them amenable to tools from homological algebra and topology.
In particular, we explain how to construct quantum codes from cell complexes, double complexes and the fiber bundle construction of Hastings--Haah--O'Donnell \cite{hastings2020fiber}.
For the latter we provide some new perspectives which allows for more streamlined proofs of some of their homological results.

Note that for the remainder of this manuscript we will only consider vector spaces over the field $\mathbb{F}_2$.

\subsection{Complexes of Vector Spaces}
A \emph{chain complex} of vector spaces $C=(C_{\bullet},\del)$ is a set of vector spaces $C_{i}$ equipped with linear maps, called \emph{differentials},
\begin{align*}
\del_{i}&:C_{i}\to C_{i-1}
\end{align*}
that square to zero
\begin{align*}
\del_{i-1} \del_i=0 
\end{align*}
In the following we will mostly drop the indices from the notation, so for example write $\del^2=0.$

We denote by 
\begin{align*}
Z_i(C) &= \ker \del_i\subset C_i\\
B_i(C) &= \im \del_{i+1}\subset C_i\\
H_i(C) &=Z_i(C)/B_i(C)
\end{align*}
the  \emph{$i$-cycles}, \emph{$i$-boundaries} and the $i$-th \emph{homology} of the complex $C$, respectively. Elements in $C_i$ are called \emph{$i$-chains}.

For most complexes we will considered, the spaces $C_i$ are equipped with a canonical bases of so called $i$-cells. Hence, there is a natural scalar product  
$$\langle v, w\rangle \in \mathbb{F}_2\text{ for } v,w\in C_i.$$
In this case, we will take the freedom to write $C^i$ for the linear dual of $C_i$ and identify $$C_i=C^i$$ using the scalar product. We refer to elements in $C^i$ as cochains.
We denote the transposed maps of the differential $\del$ by
\begin{align*}
\delta^{i}=(\del_{i-1})^{tr}&:C^{i}\to C^{i+1}
\end{align*}
and define by
\begin{align*}
Z^i(C) &= \ker \delta^i\subset C^i\\
B^i(C) &= \im \delta^{i-1}\subset C^i\\
H^i(C) &=Z^i(C)/B^i(C)
\end{align*}
the  \emph{$i$-cocycles}, \emph{$i$-coboundaries} and the $i$-th \emph{cohomology} of the complex $C.$
The scalar product on $C_i$ and $C^i$ induces a well-defined and non-degenerate pairing of $H_i(C)$ and $H^i(C)$ since
$B^i(C)=Z_i(C)^\perp.$
The scalar product hence induces an isomorphism 
$$H_i(C)\cong H^i(C).$$

\subsection{(Quantum) Codes from Complexes}\label{sec:classicalandquantumcodes}

\subsubsection{Classical codes}
A classical, linear, binary code~$\mathcal{C}$ is a subspace of~$\mathbb{F}_2^n$.
The number of encodable bits in $\mathcal{C}$ is its dimension $k = \dim \mathcal{C}$.
The smallest Hamming weight of any non-zero element of~$\mathcal{C}$ is called the {\em distance} of~$\mathcal{C}$ and denoted by~$d$.
We call a classical code of size~$n$, dimension~$k$ and distance~$d$ an $[n,k,d]$-code.

A linear code can be specified as the kernel of a {\em parity check matrix}.
This allows us to write a classical code $\mathcal{C}$ in terms of a chain complex $C$ as follows.
Let 
\begin{align*}
C &= (C_{1}\stackrel{\del^{C}}{\to}C_{0})
\end{align*}
such that $C_1=\mathbb{F}_2^n$ and $\del^{C}$ is the parity check matrix of the code~$\mathcal{C}$.
Then~$\mathcal{C}$ is the space of $1$-cycles $\mathcal{C} =H_1(C) = Z_1(C) = \ker \del^{C}_1$.
Furthermore, the space of checks acting on the code is the space of $0$-chains.

Any code~$\mathcal{C}$ gives rise to a chain complex $C$ in this way.
On the other hand, any chain complex $C=(C_{\bullet},\del)$ in which every vector space $C_i$ is equipped with a basis contains classical codes; we simply pick an index~$i$ and take $\mathcal{C} = Z_i^C = \ker \del^{C}_i$.

From now on we will abuse notation and use the same symbol~$C$ for both the code as a subspace of~$\mathbb{F}_2^n$ and the chain complex  $(C_{\bullet},\del)$.
It will always be clear from context which interpretation to choose.

Anyone with a sole interest in classical codes will not benefit from viewing them through the lens of homology.
It is useful to us because it puts classical codes on equal footing with quantum codes as we will see below.

\subsubsection{Stabilizer quantum codes}\label{sec:stabilizerquantumcodes}
Stabilizer quantum codes are defined by their {stabilizer group}~$\mathcal{S}$ which is an abelian subgroup of the Pauli group operating on $n$ qubits that does not contain~$-I$.
The code space is the subspace of the Hilbert space of $n$ qubits which is point-wise stabilized by~$\mathcal{S}$.
Pauli operators which stabilize the code space (as a whole) but are not in~$\mathcal{S}$ are called {\em logical operators}.

A special class of stabilizer codes are those where the stabilizer group~$\mathcal{S}$ has a set of generators which operate as all-$X$ or all-$Z$.
They are called {\em CSS quantum codes}.
We note in passing that an arbitrary $[[n,k,d]]$ stabilizer code can be mapped onto an $[[4n,2k,2d]]$ CSS code~\cite{bravyi2010majorana}.
If we are only interested in asymptotic scaling of the code parameters it is thus sufficient to only consider CSS codes.

CSS codes are in bijection with chain complexes of length two.
This is because the $X$/$Z$-type generators can be mapped onto matrices $H_X$/$H_Z$.
The condition that stabilizers need to commute is equivalent to demanding $H_Z^T H_X = 0$.
Hence, a CSS code is nothing but a chain complex
\begin{center}
	\begin{tikzcd}
		C_{1} \arrow[r,"H_Z^T"] & C_0 \arrow[r,"H_X"] & C_{-1}.
	\end{tikzcd}
\end{center}

As we have seen previously for classical codes the reverse is also true:
We can always obtain a quantum code from a chain complex $C=(C_{\bullet},\del)$  in which every vector space $C_i$ is equipped with a basis of $i$-cells.
Namely, we pick some index~$i$ and take the matrices~$H_Z^T$ and~$H_X$ representing the boundary operators in
\begin{center}
	\begin{tikzcd}
		C_{i+1} \arrow[r,"\del_{i+1}"] & C_i \arrow[r,"\del_{i}"] & C_{i-1}
	\end{tikzcd}
\end{center}
to define a CSS code.

The parameters of the resulting CSS code are given by
\begin{align*}
n = \dim C_i \\
k = \dim H_i(C) 
\end{align*}
and the distance~$d$ is the minimum weight of a representative of a non-trivial element of~$H_i(C)$ or~$H^i(C)$.
See \cite{breuckmann2017homological} for more details.

We call a family of stabilizer codes {\em low-density parity-check codes (LDPC)} if there exists some $w\in \mathbb{N}$ such that for each member of the family there exists a set of generators of the stabilizer group such that each generator has weight at most~$w$ and each qubit is only in the support of at most~$w$ checks.

\subsubsection{Subsystem codes}
Consider a CSS code where the qubits are identified with the $i$-cells of a chain complex~$C$, as explained above.
A {\em subsystem CSS code} is a CSS code in which we only utilize a subset of the logical qubits for information storage.
The other logical degrees of freedom are being downgraded to so-called {\em gauge qubits}.

From the perspective of homology this corresponds to splitting the $i$-th homology of~$C$ into a direct sum $H_i = H_i^\mathcal{L} \oplus H_i^\mathcal{G}$ where $H_i^\mathcal{L}-\{0\}$ is in bijection to the logical $Z$-operators (up to stabilizers) and $H_i^\mathcal{G}-\{0\}$ is in bijection to the non-trivial gauge $Z$-operators (up to stabilizers).
The pairing of the homology group~$H_i$ with the cohomology group~$H^i$ induces the compatible splitting $H^i= H^i_\mathcal{L} \oplus H^i_\mathcal{G}$.

We call the minimum weight of any representative of a class in $H_i^\mathcal{L}$ ($H^i_\mathcal{L}$) the {\em bare $Z$-distance} ({\em bare $X$-distance}).
Note that it is generally possible to reduce the weight of the representatives of classes in~$H_i^\mathcal{L}$ and~$H^i_\mathcal{L}$ by adding representatives of~$H_i^\mathcal{G}$ and~$H^i_\mathcal{G}$.
This makes it necessary to define another notion of distance:
The {\em  $Z$-distance} $d_Z$ of a subsystem CSS code is the largest integer such that for all $[z]\in H_i$ with the property that the restriction to the logical part is non-zero $[z]|_{H_i^\mathcal{L}} \neq 0$ we have that $|z| \geq d_Z$.
The {\em  $X$-distance}~$d_X$ is defined equivalently.
The {\em  distance}~$d$ of a subsystem code is $d = \min \{d_X,d_Z\}$.\footnote{What we call the distance of a subsystem code is also sometimes called {\em dressed distance} in the literature.}

A family of subsystem codes is called an {\em LDPC subsystem code} if there exists some $w\in \mathbb{N}$ such that for each member of the family there exists a set of generators of the {\em stabilizer group} such that each generator has weight at most~$w$ and each qubit is only in the support of at most~$w$ checks.
We note that families of codes exists for which the stabilizer generators can be measured indirectly by measuring a generating set of the gauge operators and there are cases where the latter all have bounded support and each qubit is acted upon by a bounded number of gauge operators.
There exists a nice construction due to Bacon et.~al.~\cite{bacon2017sparse} which gives subsystem codes with dressed distance $d \in \Omega(n^{1-\epsilon})$ and which has sparse gauge operators but which is not LDPC.
For more background on quantum codes see~\cite{terhal2015quantum} and for the application of homology to (quantum) codes see \cite{bombin2007homological,leslie2014hypermap,breuckmann2017homological,forney2018codes}.

\subsection{Complexes from Spaces}
The origin of chain complexes and homology lie in algebraic topology.
We briefly recall how to associate a chain complex to a cell complex.

Cell complexes are discrete and combinatorial analogues of topological spaces and arise as cellulations thereof.
A regular cell complex~$X$ is a finite union of cells~$\sigma$ which are glued in a nice way. 
An $n$-dimensional cell in~$X$ is called an \emph{$n$-cell}. 
We denote by $X_{n}$ the set of all $n$-cells in~$X$ and by~$\del\sigma$ the set of all $(n-1)$-cells in the boundary of the $n$-cell~$\sigma$.
 
For example, consider the cell complex $X=C_\ell$ with $1$-cells $\sigma_{1},\dots,\sigma_\ell$ and $0$-cells $\tau_{1},\dots,\tau_{\ell}$ such that the boundary of $\sigma_{i}$ is $\del\sigma_{i}=\{\tau_{i},\tau_{i+1}\},$ where we take indices modulo $\ell$. 
Then~$C_\ell$ corresponds to a celluation of a circle~$S^1$ into~$\ell$ pieces. 
Equivalently,~$C_\ell$ can be viewed as a cycle graph, where $1$-cells and $0$-cells correspond to edges and vertices, respectively.

To a cell complex $X$ one can associate a chain complex $C(X)=(C_\bullet(X),\del)$ via
\begin{align*}
C_{i}(X)=\bigoplus_{\sigma \in X_{i}}\mathbb{F}_2 \sigma\text{ and }
\del(\sigma)=\sum_{\tau\in \del\sigma}\tau.
\end{align*}
With this, we can define the \emph{homology} of a cell complex $X$ via
$$H_{i}(X)=H_{i}(C(X)).$$

In our previous example it is easy to see that $\dim H_1(C_\ell) = 1$.
Furthermore, interpreting the complex $C(C_\ell)$ as a classical code as described in \Cref{sec:classicalandquantumcodes} yields the repetition code.
Another example is the $\ell\times \ell$-torus~$T_\ell$ which is a cell complex of dimension~$2$.
Here $\dim H_1(T_\ell) = 2$ and the quantum code associated to the chain complex $C(T
_\ell)$ yields the famous toric code with parameters~$[[2\ell^2,2,\ell]]$.

\subsection{Double Complexes and Total Complexes}\label{sec:doublecomplexes}
In a wide variety of situations, chain complexes arise as the \emph{total complex} of a \emph{double complex}.
A double complex $E=(E_{\bullet,\bullet},\del^{v},\del^{h})$ is an array of vector spaces $E_{p,q}$ equipped with vertical and horizontal maps
\begin{align*}
\del_{p,q}^{v}&:E_{p,q}\to E_{p,q-1}\text{ and }\\
\del_{p,q}^{h}&:E_{p,q}\to E_{p-1,q}
\end{align*}
such that $\del^{v}$ and $\del^{h}$ are commuting differentials
\begin{align*}
(\del^{v})^{2}=(\del^{h})^{2}=0 \text{ and } \del^{v}\del^{h}=\del^{h}\del^{v},
\end{align*}
see \Cref{fig:doublecomplexcommute}.
Here, and in the following, we will mostly drop the indices of differentials from the notation.
\begin{figure}
\begin{center}
\begin{tikzcd}
 {E_{p,q}} \arrow[d, "{\del_{p,q}^v}"'] \arrow[r, "{\del_{p,q}^h}"] & {E_{p-1,q}} \arrow[d, "{\del_{p-1,q}^v}"] \\
 {E_{p,q-1}} \arrow[r, "{\del_{p,q-1}^h}"']  & {E_{p-1,q-1}}   &       
\end{tikzcd}
\end{center}
\caption{A square in a double complex $E$ which is required to commute.}
\label{fig:doublecomplexcommute}
\end{figure}

It is often useful to interpret a double complex as a ``complex of complexes''. 
Namely, each column of $C$ forms a vertical complex using the vertical differential $\del^{v}$ whereas the horizontal differentials $\del^{h}$ define a horizontal complex of these vertical complexes.\subsubsection{Total Complex}
A double complex $E$ can be collapsed into a complex by ``summing over the diagonals''. 
The resulting \emph{total complex} $\Tot(E)$ is the complex defined via
\begin{gather*}
\Tot(E)_{n}=\bigoplus_{p+q=n} E_{p,q}
\end{gather*}
and the differential is given by $\del=\del^{v}+\del^{h}.$ In fact
\begin{align*}
\del^{2}&=(\del^{v}+\del^{h})^{2}
\\&=(\del^{v})^{2}+(\del^{v}\del^{h}+\del^{h}\del^{v})+(\del^{h})^{2}=0
\end{align*}
using that $E$ is a double complex.
\subsubsection{Tensor Product and Hypergraph Product}
A fundamental example of a double complex is obtained by forming a tensor product of two complexes. Let $C=(C_{\bullet},\del^{C})$ and $D=(D_{\bullet},\del^{D})$ be two complexes. 
Then the \emph{tensor product  double complex}  $C\boxtimes D$ of $C$ and $D$ is defined via
\begin{gather*}
(C\boxtimes D)_{p,q}=C_{p}\otimes D_{q},\\
\del^{v}=\del^{C}\otimes\id_{D} \text{ and } \del^{h}=\id_{C}\otimes\del^{D}.
\end{gather*}
\begin{figure}
\begin{center}
\begin{tikzcd}
C_{p}\otimes D_{q} \arrow[d, " \id\otimes\del^{D}_{q}"'] \arrow[r, "\del^{C}_{p-1}\otimes\id"] &C_{p-1}\otimes D_{q}\arrow[d, "\id\otimes\del^{D}_{q-1}"]  \\
 C_{p}\otimes D_{q-1} \arrow[r, "\del^{C}_{p-1}\otimes\id"']          & C_{p-1}\otimes D_{q-1}      
\end{tikzcd}
\end{center}
\caption{A square in the tensor product double complex $C\boxtimes D.$}
\end{figure}
The \emph{tensor product complex} of two complexes $C$ and $D$
\begin{gather*}
C\otimes D = \Tot( C\boxtimes D)
\end{gather*}
is the total complex of their tensor product double complex. 

The special case where where $C$ and $D$ are 1-complexes are also known as {\em hypergraph products} which were introduced in~\cite{tillich2013quantum}.
This was generalized for $C$ being a complex of arbitrary finite length in~\cite{zeng2019higher}.

\subsubsection{Homology of Total Complexes}
In general, the computation of the homology of a total complex $\Tot(E)$ can be quite subtle and subject of the powerful mathematical formalism of \emph{spectral sequences}, which we briefly review in Appendix \ref{sec:spectralsequence}. 
However, there are certain situations, in which the homology of a double complex can be computed from the homology of its vertical and horizontal complexes.

The homology of the complex  $C\otimes D$ is subject of the \emph{Künneth formula}.
\begin{theorem}[Künneth formula]
There is a natural isomorphism
\begin{gather*}
H_{n}(C\otimes D) \cong\bigoplus_{p+q=n} H_{p}(C)\otimes H_{q}(D).
\end{gather*}
\end{theorem}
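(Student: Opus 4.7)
The plan is to exploit the fact that we work over the field $\mathbb{F}_2$, which forces every short exact sequence of vector spaces to split and every chain complex to decompose into its homology plus a contractible summand.

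First, I would construct the candidate isomorphism directly: for $[c] \in H_p(C)$ and $[d] \in H_q(D)$, the tensor $c \otimes d \in C_p \otimes D_q$ is a $(p+q)$-cycle in $C \otimes D$, since $\del(c \otimes d) = \del c \otimes d + c \otimes \del d = 0$, and the construction is independent of representatives because $\del c' \otimes d = \del(c' \otimes d)$ is a boundary. This defines a natural map
\begin{equation*}
\mu_n : \bigoplus_{p+q=n} H_p(C) \otimes H_q(D) \to H_n(C \otimes D), \quad [c] \otimes [d] \mapsto [c \otimes d].
\end{equation*}

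Next, to show $\mu_n$ is an isomorphism, I would split each $C_i$ as $C_i = B_i(C) \oplus \mathcal{H}_i \oplus U_i$, where $\mathcal{H}_i \subset Z_i(C)$ is a chosen lift of $H_i(C)$ and $U_i$ is a complement of $Z_i(C)$ in $C_i$. Since $\del$ restricts to an isomorphism $U_i \xrightarrow{\cong} B_{i-1}(C)$, this decomposes $C$ as a direct sum of subcomplexes $C \cong P(C) \oplus A(C)$, where $P(C)_i = \mathcal{H}_i$ carries the zero differential and $A(C)_i = B_i(C) \oplus U_i$ is contractible via the homotopy $s: A(C)_i \to A(C)_{i+1}$ that kills $U_i$ and sends each $b \in B_i(C)$ to its unique preimage $\del^{-1}(b) \in U_{i+1}$; one checks $\del s + s \del = \id$. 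Performing the analogous decomposition $D \cong P(D) \oplus A(D)$ and distributing the tensor product yields four summands. For any summand containing an acyclic factor, say $A(C) \otimes X$, the map $s \otimes \id_X$ is a contracting homotopy, because over $\mathbb{F}_2$ we compute
\begin{equation*}
(\del \otimes \id + \id \otimes \del)(s \otimes \id) + (s \otimes \id)(\del \otimes \id + \id \otimes \del) = (\del s + s \del) \otimes \id = \id,
\end{equation*}
the cross terms $s \otimes \del$ cancelling because $2 = 0$ in characteristic two. Hence these three summands contribute trivially to homology, and the surviving summand $P(C) \otimes P(D)$ has zero differential, so its degree-$n$ piece is literally $\bigoplus_{p+q=n} H_p(C) \otimes H_q(D)$, matching the target of $\mu_n$.

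Finally, I would verify that under these identifications the composite computing $H_n(C \otimes D)$ from the decomposition coincides with the natural map $\mu_n$ defined at the outset, since the chosen lifts $\mathcal{H}_i$ consist of actual cycles whose classes are sent to themselves. This bypasses the main subtlety: the decomposition $C \cong P(C) \oplus A(C)$ depends on non-canonical choices of complements, so it alone does not yield a natural isomorphism. The hardest step is precisely this naturality check — showing that although the splitting is not functorial, the isomorphism it produces factors as $\mu_n$, which manifestly is functorial in $C$ and $D$. Once that identification is made, the result follows.
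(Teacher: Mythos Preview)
Your proof is correct. The paper does not actually prove this theorem: it is stated as the classical K\"unneth formula and taken for granted as a standard fact (which it is, over a field, where the Tor obstruction vanishes). Your approach --- splitting each complex as its homology with zero differential plus a contractible summand, and then showing that any tensor factor containing a contractible piece is itself contractible via $s\otimes\id$ --- is the standard elementary argument over a field, and you carry it out cleanly, including the naturality check.

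One minor remark: the cancellation of the cross terms $s\otimes\del$ is not genuinely a characteristic-two phenomenon. With the usual Koszul sign in the tensor differential the two cross terms come with opposite signs and cancel over any ring. But since the paper defines the total differential as $\del^v+\del^h$ without signs (which is consistent precisely because we are over $\mathbb{F}_2$), your computation is correct as written in this context. The spectral-sequence formalism the paper sets up in its appendix would give an alternative route to the same result, but the authors do not invoke it here.
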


Another simple example, which will be of interest for us, is the situations in which $E$ is a $2\times 2$ complex, that is, $E_{p,q}=0$ for all $p,q\neq\{0,1\}.$ In this case, the homology of $\Tot(E)$ can be computed by first taking homology in the vertical and then horizontal direction. 

First, one takes homology along the differentials $\del^{v}$, see \Cref{fig:verticaldifferentialindoublecomplex}. 
The horizontal differential $\del^{h}$ induces a differential on resulting homology groups $H_{q}(E_{p,\bullet},\del^{v}),$ see \Cref{fig:inducedhorizontaldifferential}.
Secondly, one takes homology along these induced horizontal differentials to obtain homology groups
$$H_{p}(H_{q}(E_{\bullet,\bullet},\del^{v}),\del^{h}).$$ The homology of the total complex is comprised of these groups and one obtains an analogue of the Künneth formula.
\begin{theorem}\label{thm:smalldoublecomplex}
If $E$ is a $2\times 2$-complex there is a natural isomorphism
\begin{gather*}
H_{n}(\Tot(E)) \cong\bigoplus_{p+q=n} H_{p}(H_{q}(E_{\bullet,\bullet},\del^{v}),\del^{h}).
\end{gather*}
\end{theorem}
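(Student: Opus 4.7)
The plan is to use a column filtration of $\Tot(E)$. Since $E$ is a $2\times 2$ double complex, the left column $E_{0,\bullet}$ sits inside $\Tot(E)$ as a subcomplex: on elements of $E_{0,q}$ the total differential reduces to $\del^{v}$ because $\del^{h}$ would land in $E_{-1,\bullet}=0$. The quotient $\Tot(E)/E_{0,\bullet}$ is canonically the right column $E_{1,\bullet}$ shifted up by one degree, equipped only with $\del^{v}$, since the horizontal component of $\del$ is annihilated by the quotient. This yields a short exact sequence of chain complexes
\[0 \to E_{0,\bullet} \to \Tot(E) \to E_{1,\bullet}[1] \to 0.\]

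First I would pass to the associated long exact sequence in homology, noting that $H_{n}(E_{1,\bullet}[1])=H_{n-1}(E_{1,\bullet},\del^{v})$. The pivotal step is to identify the connecting homomorphism $\partial$ with the horizontal differential $\del^{h}_{*}$ induced on vertical homologies: given a vertical cycle $[x]$ with $x\in E_{1,q}$, lift it to $x\in\Tot(E)_{q+1}$, apply the total differential to get $\del^{v}x+\del^{h}x=\del^{h}x\in E_{0,q}$, which by construction represents $\partial[x]$ and is visibly $\del^{h}_{*}[x]$.

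Since $E_{0,\bullet}$ and $E_{1,\bullet}[1]$ are each concentrated in two consecutive degrees, the long exact sequence collapses for every $n\in\{0,1,2\}$ into a short exact sequence
\[0 \to H_{0}(H_{n}(E,\del^{v}),\del^{h}) \to H_{n}(\Tot(E)) \to H_{1}(H_{n-1}(E,\del^{v}),\del^{h}) \to 0,\]
obtained by rewriting $\mathrm{coker}(\del^{h}_{*})$ and $\ker(\del^{h}_{*})$ in the appropriate vertical degrees. Working over $\mathbb{F}_{2}$, every such sequence splits, and summing over $p+q=n$ (only $p,q\in\{0,1\}$ contribute) gives the claimed isomorphism. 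The main obstacle is the verification that $\partial=\del^{h}_{*}$ and the careful bookkeeping of the degree shift in $E_{1,\bullet}[1]$; naturality then follows from the functoriality of each ingredient — the column subcomplex, the snake lemma, and the zig-zag description of $\partial$.
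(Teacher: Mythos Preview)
Your argument is correct. The paper takes a different route: it invokes the spectral sequence of the double complex and observes that for a $2\times 2$ complex all differentials $d_2\colon E^{2}_{p,q}\to E^{2}_{p-2,q+1}$ have either vanishing domain or codomain, so the sequence degenerates at the second page and $H_n(\Tot(E))$ has a filtration with graded pieces $H_p(H_q(E,\del^{v}),\del^{h})$.

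Your approach is the elementary unpacking of that argument: the column filtration you use is exactly the filtration that produces the spectral sequence, and your identification of the connecting map with $\del^{h}_{*}$ is the computation of the $d_1$ differential. What you gain is a self-contained proof that does not presuppose the spectral-sequence formalism; what the paper's approach buys is an immediate generalisation to larger double complexes and a one-line proof. One small caveat: the short exact sequence you obtain is natural, but its \emph{splitting} over $\mathbb{F}_2$ is not canonical, so the word ``natural'' in the conclusion should be read with the same latitude as in the paper's version---both proofs really produce a natural short exact sequence (equivalently, a natural two-step filtration) which splits non-canonically over a field.
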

\begin{proof}
All differentials  on the second page of the spectral sequence of the double complex $E$ vanish since either their domain or codomain is zero. See Appendix \ref{sec:spectralsequence}.
\end{proof}
\begin{figure}
\begin{center}
\begin{tikzcd}
 {E_{1,1}} \arrow[d, "{\del^v}"'] & {E_{0,1}} \arrow[d, "{\del^v}"] \\
 {E_{1,0}} & {E_{0,0}}   &       
\end{tikzcd}
\end{center}
\caption{The vertical differential in the double complex $E_{p,q}$ }
\label{fig:verticaldifferentialindoublecomplex}
\end{figure}
\begin{figure}
\begin{center}
\begin{tikzcd}
H_{1}(E_{1,\bullet},\del^{v}) \arrow[r, "{\del^h}"] & H_{1}(E_{0,\bullet},\del^{v})\\
H_{0}(E_{1,\bullet},\del^{v})\arrow[r, "{\del^h}"']  &H_{0}(E_{0,\bullet},\del^{v})  &       
\end{tikzcd}
\end{center}
\caption{The induced horizontal differential on the vertical homology groups $H_{q}(E_{p,\bullet},\del^{v}).$}
\label{fig:inducedhorizontaldifferential}
\end{figure}

\subsection{Fiber Bundle Codes}\label{sec:fiberbundlecodes}
Fiber bundle codes where introduced by Hastings et al. in~\cite{hastings2020fiber} to demonstrate a construction of quantum codes breaking the $\sqrt{N}\operatorname{polylog}(N)$ distance bounds for the first time.
The idea behind fiber bundle codes is to introduce a \emph{twist} in the (horizontal) differentials in the tensor product double complex, in order to increase the distance of the resulting code. 

For simplicity, let us consider two 1-complexes
\begin{align*}
B&=(B_{1}\stackrel{\del^{B}}{\to}B_{0})\text{ and}\\
F&=(F_{1}\stackrel{\del^{F}}{\to}F_{0}),
\end{align*}
we refer to as  \emph{base} and \emph{fiber} respectively. Denote by $\operatorname{Aut}(F)$ the finite group of linear automorphisms of the complex $F,$ that is, linear automorphisms of $F_{1}$ and $F_{0}$ that commute with the differential $\del^{F}.$ Further, denote basis vectors of $B_{i}$ by $b^{i}$ and write $b^{0}\in \del^{B}b^{1}$ if $b^{0}$ appears with a nonzero coefficient in~$\del^{B}b^{1}$.

The idea is to twist the horizontal differentials in the tensor product double complex $B\boxtimes F$ by a \emph{connection} or \emph{twist} $\varphi$ which is a choice of  an automorphism
$\varphi(b^{1},b^{0})\in \operatorname{Aut}(F)$
to every pair $(b^{1},b^{0})$ such that $b^{0}\in\del^{B}b^{1}.$ Intuitively, the connection $\varphi$ describes how the fiber~$F$ varies over the base~$B$, hence its name.

The \emph{fiber bundle double complex} $B\boxtimes_{\varphi}F$ is  
\begin{center}
\begin{tikzcd}
B_{1}\otimes F_{1} \arrow[d, " \id\otimes\del^{F}"'] \arrow[r, "\del_{\varphi}"] &B_{0}\otimes F_{1}\arrow[d, "\id\otimes\del^{F}"]  \\
 B_{1}\otimes F_{0} \arrow[r, "\del_{\varphi}"']          & B_{0}\otimes F_{0}
\end{tikzcd}
\end{center}
where
\begin{align*}
\del_{\varphi}(b^{1}\otimes f)=\sum_{b^{0}\in\del^{B}b^{1}}b^{0}\otimes \varphi(b^{1},b^{0})(f).
\end{align*}
Chains in $B_{1}\otimes F_{0},$ resp. $B_{0}\otimes F_{1},$ are referred to as \emph{horizontal}, resp. \emph{vertical}, where we think of the base and fiber stretching out in the horizontal, resp. vertical, direction.

The \emph{fiber bundle code} is given by the total complex $$B\otimes_{\varphi} F=\Tot(B\boxtimes_{\varphi}F).$$
It is easy to see that $B\otimes_{\varphi} F$ is really a double complex; $\del_{\varphi}$ and  $\id\otimes\del^{F}$ commute since  $\varphi(b^{1},b^{0})$ commutes with $\del^{F}.$
Note that when $\varphi=1$ then $\del_{\varphi}=\del^{B}\otimes \id$ and $B\otimes_{\varphi} F=B\otimes F$ is the usual tensor product complex. 

Under mild conditions, the homology of the fiber bundle complex $B\otimes_{\varphi} F$ fulfills a Künneth formula.
\begin{theorem}\label{thm:homologyoffiberbundlecomplex} Assume that the connection $\varphi$ acts as the identity on the homology of $F.$
Then there is an isomorphism.
\begin{align*}
H_{n}(B\otimes_{\varphi} F) \cong\bigoplus_{p+q=n} H_{p}(B)\otimes H_{q}(F).
\end{align*}
\end{theorem}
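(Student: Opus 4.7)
The plan is to reduce the statement directly to \Cref{thm:smalldoublecomplex}, since $B$ and $F$ are both $1$-complexes, the double complex $B\boxtimes_{\varphi}F$ is a $2\times 2$ complex. Thus its total homology can be computed by first taking vertical homology (with respect to $\del^{v}=\id\otimes\del^{F}$) and then horizontal homology (with respect to the induced map from $\del_\varphi$). The only content of the proof is to identify these two pages with the desired Künneth expression, using the hypothesis that $\varphi$ acts as the identity on $H_\bullet(F)$.

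First I would compute the vertical homology. Since the vertical differential in column $p$ is just $\id_{B_p}\otimes\del^{F}$, and we work over a field, I obtain
\begin{align*}
H_{q}(E_{p,\bullet},\del^{v}) \;=\; B_{p}\otimes H_{q}(F),
\end{align*}
for $p,q\in\{0,1\}$. Next I would unwind the induced horizontal differential on these groups. For a cycle $f\in F_{q}$ and a basis vector $b^{1}\in B_{1}$, the definition of $\del_{\varphi}$ gives
\begin{align*}
\del_{\varphi}(b^{1}\otimes f) \;=\; \sum_{b^{0}\in\del^{B}b^{1}} b^{0}\otimes \varphi(b^{1},b^{0})(f).
\end{align*}
Because $\varphi(b^{1},b^{0})$ commutes with $\del^{F}$, each $\varphi(b^{1},b^{0})(f)$ is again a cycle, and the hypothesis that $\varphi$ acts trivially on $H_{q}(F)$ means $[\varphi(b^{1},b^{0})(f)]=[f]$ in $H_{q}(F)$. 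Consequently, modulo vertical boundaries,
\begin{align*}
[\del_{\varphi}(b^{1}\otimes f)] \;=\; \Bigl(\sum_{b^{0}\in\del^{B}b^{1}} b^{0}\Bigr)\otimes [f] \;=\; (\del^{B}\otimes\id_{H_{q}(F)})(b^{1}\otimes[f]).
\end{align*}
Hence the induced horizontal differential on $B_{\bullet}\otimes H_{q}(F)$ is exactly $\del^{B}\otimes\id$, so its homology equals $H_{p}(B)\otimes H_{q}(F)$.

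Finally, applying \Cref{thm:smalldoublecomplex} to $E=B\boxtimes_{\varphi}F$ assembles these pieces into the claimed isomorphism
\begin{align*}
H_{n}(B\otimes_{\varphi}F) \;\cong\; \bigoplus_{p+q=n} H_{p}(B)\otimes H_{q}(F).
\end{align*}

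The only nontrivial step is the identification of the induced horizontal differential on vertical homology with $\del^{B}\otimes\id$; this is precisely where the hypothesis on $\varphi$ is used, and it is the point at which the proof would fail for a general connection. Everything else is bookkeeping packaged into the $2\times 2$ case of the spectral sequence that was already established.
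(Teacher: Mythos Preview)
Your proof is correct and follows essentially the same approach as the paper: both invoke \Cref{thm:smalldoublecomplex} for the $2\times 2$ double complex $B\boxtimes_{\varphi}F$, identify the vertical homology with $B_{p}\otimes H_{q}(F)$, and use the hypothesis on $\varphi$ to conclude that the induced horizontal differential is $\del^{B}\otimes\id$. Your version spells out the computation of the induced differential in slightly more detail, but the argument is the same.
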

\begin{proof} By Theorem \ref{thm:smalldoublecomplex} there is an isomorphism
\begin{gather*}
H_{n}(B\otimes_{\varphi} F) \cong\bigoplus_{p+q=n} H_{p}(H_{q}(B_{\bullet}\otimes F_{\bullet},\id\otimes\del),\del_{\varphi}).
\end{gather*}
Since the vertical differential $\id\otimes\del$ acts by the identity on the base~$B$ we get
 $$H_{q}(B_{p}\otimes F_{\bullet},\id\otimes\del)=B_{p}\otimes H_{q}(F).$$
Since by assumption $\varphi$ acts as the identity on the homology of $F,$ the induced horizontal differential 
$$\del_{\varphi}: B_{p}\otimes H_{q}(F)\to B_{p-1}\otimes H_{q}(F)$$
is equal to $\del\otimes \id$ and we obtain
 $$H_{p}(H_{q}(B_{\bullet}\otimes F_{\bullet},\id\otimes\del),\del_{\phi})=H_{p}(B)\otimes H_{q}(F)$$
 which completes the proof.
\end{proof}
We say that the complex $F$ is \emph{augmented} if there is a map $\epsilon: F_0\to \mathbb{F}_2,$ such that $\epsilon\del^F=0.$ For example, if $F=C(Y)$ is the chain complex of a space $Y$ a natural augmentation is induced by the projection $Y\to \{pt\}$ to a point. In this case, there is a natural map of chain complexes $\pi: F\to \mathbb{F}_2$ given by
\begin{center}
\begin{tikzcd}
F_1 \arrow[r] \arrow[d, "0"] & F_0 \arrow[d, "\epsilon"] \\
0 \arrow[r]                  & \mathbb{F}_2             
\end{tikzcd}
\end{center}
which induces projection and restriction maps
\begin{align*}
\pi_*&: B\otimes_\varphi F\to B,\, b\otimes f\mapsto \pi(f)b\text{ and}\\
\pi^*&: B^*\to (B\otimes_\varphi F)^*,\, b\mapsto b\otimes \pi^{tr}(1).
\end{align*}
Hastings et al. in \cite{hastings2020fiber} consider a special situation, in which these maps induce isomorphisms on the first (co-)homology group.
\begin{theorem} \label{thm:projectioninducesisom}
Under the assumptions that
\begin{enumerate}
\item the connection $\varphi$ acts as the identity on the homology of $F,$
\item the augmentation induces an isomorphism $\epsilon: H_0(F)\to \mathbb{F}_2$ and
\item $H_0(B)=0$
\end{enumerate}
the maps $\pi_*$ and $\pi^*$ induce isomorphism
\begin{align*}
\pi_*&: H_1(B\otimes_\varphi F)\to H_1(B)\text{ and}\\
\pi^*&: H^1(B)\to H^1(B\otimes_\varphi F).
\end{align*}
\end{theorem}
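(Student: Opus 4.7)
The plan is to combine the Künneth-type formula of \Cref{thm:homologyoffiberbundlecomplex} with the three hypotheses. Under assumption (1), that theorem gives
$$H_1(B\otimes_\varphi F) \;\cong\; \bigl(H_1(B)\otimes H_0(F)\bigr) \oplus \bigl(H_0(B)\otimes H_1(F)\bigr).$$
Assumption (3) kills the right-hand summand, while assumption (2) collapses $H_0(F)\cong \mathbb{F}_2$, leaving $H_1(B\otimes_\varphi F)\cong H_1(B)$. So the abstract $\mathbb{F}_2$-dimensions already match; the content is to show that the canonical Künneth iso is implemented by the explicit map $\pi_*$.

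To verify that, I would trace through the proof of \Cref{thm:homologyoffiberbundlecomplex}, which goes via the $2\times 2$ double complex edge map from \Cref{thm:smalldoublecomplex}. Any $1$-cycle $z\in (B\otimes_\varphi F)_1$ splits as $z=z_{10}+z_{01}$ with $z_{10}\in B_1\otimes F_0$ and $z_{01}\in B_0\otimes F_1$. Since $F_{-1}=0$, the piece $z_{10}$ is automatically a vertical cycle, and the condition $\del z=0$ reads as $\del_\varphi z_{10}=(\id\otimes\del^F)z_{01}$, so $[z_{10}]$ descends to a well-defined class in $H_1(B)\otimes H_0(F)$ on the $E^2$-page. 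On the other hand $\pi_*(z)=\pi_*(z_{10})$, because $\pi$ vanishes on $F_1$, and the definition of $\pi_*$ applies $\epsilon$ to the $F_0$-factor, matching the final Künneth identification $H_1(B)\otimes H_0(F)\xrightarrow{\id\otimes\epsilon}H_1(B)$. Hence $\pi_*$ realises the iso on $H_1$. The cohomological statement then follows formally by duality: over $\mathbb{F}_2$, the scalar product identifies $H^i$ with the linear dual of $H_i$, and by construction $\pi^*$ is the transpose of $\pi_*$, so transposing the $H_1$-isomorphism yields the claimed isomorphism on $H^1$.

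The main obstacle is the spectral-sequence bookkeeping in the second paragraph: one has to check that the abstract isomorphism produced by \Cref{thm:homologyoffiberbundlecomplex} is represented, at chain level, by the prescription \emph{``keep the $B_1\otimes F_0$ component and apply $\epsilon$ to the fiber''}, since this is exactly what $\pi_*$ does. Once that edge-map identification is pinned down, assumptions (2) and (3) slot in mechanically and the cohomological half is pure duality.
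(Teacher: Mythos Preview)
Your proposal is correct and follows the same approach as the paper, which simply invokes \Cref{thm:homologyoffiberbundlecomplex} without further comment. In fact you do more than the paper: you explicitly verify that the abstract K\"unneth isomorphism is realised by the concrete maps $\pi_*$ and $\pi^*$ via the edge-map/spectral-sequence bookkeeping, a point the paper leaves implicit.
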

\begin{proof}
Follows from Theorem \ref{thm:homologyoffiberbundlecomplex}.
\end{proof}
This is a slight generalization of Lemma 2.5. in Hastings et~al. where a similar result is obtained by  elementary methods.

Hastings et al. focus on a special choice of $B$, $F$ and $\phi$ and estimate the distance of the resulting code. We will not repeat their arguments here, as we will use different techniques to bound the distance of our codes.

\section{Expander Codes}\label{sec:expander}

In this section we introduce certain classical codes and discuss their properties which we will use in our construction of balanced product codes in \Cref{sec:balanced_products,sec:explicit}.
We give a homological definition of Tanner codes which are constructed from graphs and local codes. 
Next, we prove helpful results for expander graphs. 
We then discuss the classical construction of expander codes due to Sipser--Spielman~\cite{sipser1996expander} which combines expander graphs and local codes. We analyze their combinatorial properties in detail as they are a central ingredient for the distance bounds of the balanced product codes constructed in \Cref{sec:explicit}.
As we need expander codes whose symmetries we can control, we introduce a construction of expander graphs due to Lubotzky--Phillips--Sarnak and a construction based on hyperbolic geometry.
We also discuss constructions of potential local codes with good parameters and a slight generalization of the Gilbert--Varshamov bound which guarantees the existence of codes with parameters suitable for the proof of the main theorem in \cref{sec:explicit}.

\subsection{Tanner codes and Local Systems}\label{sec:localsystems}
It was shown by Tanner~\cite{tanner_local_codes} that a large (global) code~$X$ can be improved if its parity checks are replaced by the parity checks of a small (local) code~$L$.\footnote{Tanner calls the local codes ``subcodes'' in~\cite{tanner_local_codes}.}
In order to be able to do this we need that the parity checks of the global code~$X$ are all of weight~$s$ and the local code has block length~$s$.
We will restrict ourselves to the case where the global code~$X$ is a graph code with variables given by edges~$X_{1}$ and checks by vertices~$X_0$.

Let $X$ be a finite $s$-regular graph. For a vertex $v\in X_0,$ denote by $\delta v\subset X_{1}$ the set of incident edges.
Further, let 
\begin{center}
	\begin{tikzcd}
	L = (L_1 \arrow[r,"\del^L"] & L_0)
	\end{tikzcd}
\end{center}
denote a $[s,k,d]$-code called the \emph{local code}. 
Denote by~$\mathcal{B}$ the distinguished basis of $L_1$ given by the $1$-cells (bits) of the code~$L$. 
Furthermore, fix for any $v\in X_0$ a bijection
\begin{align*}
	\Lambda_v: \delta v \to \mathcal{B}
\end{align*}
and let $\Lambda=\{\Lambda_v\}_{v\in X_0}.$ 
Each $\Lambda_v$ is simply a labelling of the edges around the vertex~$v$.

The global code associated to the graph~$X$ and local code~$L$ is given by the complex
\begin{center}
	\begin{tikzcd}
	C(X,L,\Lambda)=(C_1(X) \arrow[r,"\del"] & C_0(X)\otimes L_0)
	\end{tikzcd}
\end{center}
and the differential is defined via
\begin{align}\label{eqn:boundary_local_system}
	\del e = v \otimes \del^L \Lambda_{v}(e) + w \otimes \del^L \Lambda_{w}(e)
\end{align}
where $e\in X_{1}$ is an edge connecting vertices $v,w\in X_0$. We refer to these codes as \emph{Tanner codes}.
We note that in~\cite{tanner_local_codes} Tanner codes are defined in terms of their bipartite Tanner graph. Our definition is a special case of this, by interpreting the boundary operator~$\del$ as a bipartite adjacency matrix.

Also, we mention in passing that it was observed in~\cite{meshulam2018graph} that Tanner codes can also be understood in terms of {\em twisted homology}, where the boundary operator defined in~\Cref{eqn:boundary_local_system} takes values in the $L_0$-valued functions over~$X_0$ instead of the $\mathbb{F}_2$-vector space $C_0(X)\otimes L_0$.
Both definitions are clearly equivalent, but the one given here is more convenient for our purposes.

Not every choice of graph $X$ will give interesting codes.
In the following section we will discuss infinite families of graphs~$\{X^i\}_{i\in \mathbb{N}}$ called expander graphs which combined with a suitable local code~$L$ will yield families of good classical codes. In this case we will refer to the Tanner codes as \emph{expander codes}.
This is an idea due to Sipser--Spielman~\cite{sipser1996expander}.

Let us also note already here that the choice of local labels~$\Lambda$ will not affect any of the bounds proved later.
Hence, we will often simply omit~$\Lambda$ and write~$C(X,L)$, although the exact properties of the global code will in fact depend on~$\Lambda$.

\subsection{Expander Graphs}
Expander graphs can be understood intuitively as graphs which are strongly connected.
Strong connectivity by itself is nothing special as the complete graph is clearly as connected as a graph can be.
However, it is non-trivial that infinite families of graphs exist which are strongly connected despite being $s$-regular, i.e. every vertex has constant degree~$s$.

\subsubsection{Basic definitions}
The connectivity of a graph~$X$ can be quantified by the \emph{Cheeger constant}
\begin{align*}
	h(X) = \min_{ \substack{S\subset V \\ 0<|S|< \frac{|X_0|}{2}} } \frac{|\delta S|}{|S|}
\end{align*}
where $\delta S = \{ \{ u, v \} \in X_{1} \mid u \in S, v \in X_0 - S \}$ is the set of edges connecting~$S$ with its complement.\footnote{We may equivalently think of~$S$ as a $0$-chain in $C_0(X)$, $\delta$ as the coboundary operator and~$|\cdot|$ as the Hamming weight.}
When~$h(X)$ is small it means that we can disconnect a relatively large number of vertices (those in~$S$) from the rest of the graph by removing a relatively small number of edges (those in~$\delta S$).

There are other measures equivalent to the Cheeger constant.
In particular, \emph{spectral expansion} will be useful to us here.
From now on we will assume that~$X$ is a connected, $s$-regular graph.
Let~$A$ be the adjacency matrix of the graph~$X$.
The largest eigenvalue of $A$ will always be~$s$.
Let~$\lambda_2$ be the second largest eigenvalue of $A$.
The \emph{Cheeger inequalities} relate the Cheeger constant with~$\lambda_2$:
\begin{align*}
	\frac{1}{2}(s-\lambda_2) \leq h(X) \leq \sqrt{2 s (s - \lambda_2)}
\end{align*}
The lower the second eigenvalue~$\lambda_2$ the better of an expander the graph~$X$ is.
There exists a slight variation of the lower bound.
\begin{lemma}\label{lem:relative_cheeger}
	Let $X$ be an $s$-regular graph with second largest eigenvalue~$\lambda_2$ and let $S\subset X_0$ with $|S| < \alpha |X_0|$.
	Then it holds that
	\begin{align}
		 (1-\alpha)(s-\lambda_2) \leq  \frac{|\delta S|}{|S|} .
	\end{align}
\end{lemma}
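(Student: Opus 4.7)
The plan is to adapt the standard expander mixing / spectral proof of Cheeger's lower bound, keeping track of the dependence on $|S|/|X^0|$ so that we get the factor $(1-\alpha)$ rather than the usual $1/2$. Write $n = |X^0|$, $k = |S|$, let $\mathbf{1}$ denote the all-ones vector in $\mathbb{F}_2^{X^0}$ (interpreted here over $\mathbb{R}$ for the spectral argument), and let $\mathbf{1}_S$, $\mathbf{1}_{\bar S}$ be the indicator vectors of $S$ and $\bar S = X^0 \setminus S$. Let $A$ denote the adjacency matrix of $X$. Since $X$ is $s$-regular, $A\mathbf{1}= s\mathbf{1}$, and since $A$ is symmetric its spectrum is real with top eigenvalue $s$ and second eigenvalue $\lambda_2$.

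First I would express the cut size spectrally as $|\delta S| = \mathbf{1}_S^{T} A\, \mathbf{1}_{\bar S}$, counting each edge from $S$ to $\bar S$ exactly once. Next I would decompose $\mathbf{1}_S = \tfrac{k}{n}\mathbf{1} + f$ with $f\perp \mathbf{1}$, so that $\mathbf{1}_{\bar S} = \tfrac{n-k}{n}\mathbf{1} - f$, and compute $\|f\|^2 = k(1-k/n)$. Plugging in and using $A\mathbf{1} = s\mathbf{1}$ together with $\mathbf{1}^T f = 0$, the cross terms vanish and one gets
\begin{equation*}
|\delta S| \;=\; \frac{k(n-k)s}{n} \;-\; f^{T}Af.
\end{equation*}

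Now since $f$ is orthogonal to the top eigenvector $\mathbf{1}$, the Rayleigh quotient bound gives $f^{T}Af \leq \lambda_2 \|f\|^2 = \lambda_2\, k(1-k/n)$. Substituting yields
\begin{equation*}
|\delta S| \;\geq\; k\!\left(1-\frac{k}{n}\right)\!(s - \lambda_2),
\end{equation*}
so that $|\delta S|/|S| \geq (1 - k/n)(s - \lambda_2)$. Finally, the hypothesis $|S| < \alpha |X^0|$ gives $k/n < \alpha$, hence $1 - k/n > 1 - \alpha$, and we conclude $|\delta S|/|S| \geq (1-\alpha)(s-\lambda_2)$.

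There is no real obstacle here; the only thing to handle carefully is the orthogonal decomposition of $\mathbf{1}_S$ and verifying that all cross terms vanish, which is routine. The whole argument is just the standard spectral proof, with the refinement that we retain the factor $(1 - k/n)$ instead of weakening it to $1/2$ via the assumption $|S| \leq n/2$.
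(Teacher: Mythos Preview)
Your proof is correct and is precisely the ``trivial extension of the standard proof of the Cheeger inequalities'' that the paper invokes without spelling out. The only cosmetic quibble is the remark about $\mathbf{1}$ living in $\mathbb{F}_2^{X^0}$; just work in $\mathbb{R}^{X^0}$ from the outset and drop the aside.
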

\begin{proof}
The proof is a trivial extension of the standard proof of the Cheeger inequalities.
\end{proof}

Naturally, we might ask what is the largest expansion rate that we can hope for.
This is answered in terms of spectral expansion by the following theorem.
\begin{theorem}[Alon--Boppana bound]\label{thm:alon-boppana}
Let $X$ be an $s$-regular graph with second largest eigenvalue~$\lambda_2$ then
\begin{align*}
\lambda_2 \geq 2\sqrt{s-1} - \frac{2\sqrt{s-1}-1}{\lfloor \operatorname{diam}(X)/2 \rfloor} .
\end{align*}
\end{theorem}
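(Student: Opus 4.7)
The plan is to bound $\lambda_2$ from below via the Rayleigh quotient
\[
\lambda_2 \;=\; \max_{\substack{f\perp\mathbf{1}\\ f\neq 0}} \frac{\langle Af,f\rangle}{\langle f,f\rangle},
\]
by exhibiting an explicit pair of test functions whose supports are disjoint and non-adjacent, so that the resulting bound reduces to a purely ``local'' computation around a single vertex. The guiding intuition is that the universal cover of an $s$-regular graph is the infinite $s$-regular tree $T_s$, whose spectral radius is exactly $2\sqrt{s-1}$ with radial eigenfunctions behaving like $(s-1)^{-i/2}$; our test functions will mimic these eigenfunctions truncated to a ball.

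Set $k=\lfloor\operatorname{diam}(X)/2\rfloor$ and pick two vertices $u,v\in X^{0}$ with $d(u,v)=\operatorname{diam}(X)\geq 2k$. Then the balls $B_{k-1}(u)$ and $B_{k-1}(v)$ are vertex-disjoint and moreover no edge of $X$ joins them. Choose a radial weight sequence $a_0,\dots,a_{k-1}$ (a concrete choice that yields the stated error term is $a_i = (k-i)\,(s-1)^{-i/2}$) and define
\[
\phi_u(w) = \begin{cases} a_{d(u,w)} & \text{if } d(u,w)\leq k-1,\\ 0 & \text{otherwise,}\end{cases}
\]
and similarly $\phi_v$. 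Since the supports are non-adjacent, $\langle A\phi_u,\phi_v\rangle = \langle \phi_u,\phi_v\rangle =0$. Form $f=\alpha\phi_u+\beta\phi_v$ with $\alpha,\beta$ chosen so that $f\perp\mathbf{1}$; this is possible because $\phi_u,\phi_v$ are linearly independent and each is nonzero. For such an $f$ the Rayleigh quotient equals a convex combination of the individual quotients $R(\phi_u)$ and $R(\phi_v)$, so it suffices to lower-bound $R(\phi_u)$ (and $R(\phi_v)$ identically).

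Now I would expand $\langle A\phi_u,\phi_u\rangle$ and $\|\phi_u\|^{2}$ sphere by sphere. Writing $n_i=|S_i(u)|$ and using $s$-regularity together with the fact that each vertex at distance $i$ from $u$ has at most $s-1$ neighbours at distance $i+1$ (and at least one edge going back towards $u$), one gets sphere-indexed inequalities of the form $n_{i+1}\leq (s-1)n_i$ and the identity
\[
\langle A\phi_u,\phi_u\rangle \;=\; 2\sum_{i=0}^{k-2} a_i a_{i+1}\,(\text{number of edges between } S_i(u) \text{ and } S_{i+1}(u)),
\]
while $\|\phi_u\|^{2}=\sum_{i=0}^{k-1}a_i^{2}n_i$. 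Plugging the above choice of $a_i$ into these expressions and using $(s-1)^{-i/2}$ to eliminate the sphere-size growth, the ratio telescopes; the $(k-i)$ factor produces a sum that evaluates to the clean expression $2\sqrt{s-1}-(2\sqrt{s-1}-1)/k$, the loss being the unavoidable boundary contribution from truncating $\phi_u$ at radius $k-1$.

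The main obstacle, and the only step that is not routine, is pinning down the precise constant $2\sqrt{s-1}-1$ in the numerator of the error term: this depends on the exact weight sequence $(a_i)$ one chooses, and the computation must be done carefully to avoid throwing away the linear improvement in $k$. This refinement over the bare $\lambda_2\geq 2\sqrt{s-1}$ is essentially Nilli's argument; once the right weights are chosen, everything else is bookkeeping with geometric series and the sphere-size bound $n_i\leq s(s-1)^{i-1}$.
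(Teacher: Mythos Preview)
The paper does not prove this theorem at all; it is quoted as a classical result and immediately followed by the remark that graphs saturating the bound are called Ramanujan graphs. So there is no ``paper's own proof'' to compare against.

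That said, your sketch is the standard Nilli argument and is essentially sound. Two small points worth tightening if you ever write this out in full: first, your displayed identity for $\langle A\phi_u,\phi_u\rangle$ should be an inequality ``$\geq$'', since edges lying entirely within a sphere $S_i(u)$ contribute an additional nonnegative term $2a_i^2$ that you have silently dropped; second, the Rayleigh quotient of $f=\alpha\phi_u+\beta\phi_v$ is not literally a convex combination of $R(\phi_u)$ and $R(\phi_v)$ but rather a weighted mediant $(\alpha^2\langle A\phi_u,\phi_u\rangle+\beta^2\langle A\phi_v,\phi_v\rangle)/(\alpha^2\|\phi_u\|^2+\beta^2\|\phi_v\|^2)$, which is bounded below by $\min\{R(\phi_u),R(\phi_v)\}$. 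Neither issue affects the conclusion.
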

In particular, for a family of $s$-regular graphs~$\{X^i\}$ such that $|X^i| \rightarrow \infty$ for $i\rightarrow \infty$ we have the lower bound $2\sqrt{s-1}$ on all of their second largest eigenvalues.
Graphs which saturate this bound are called \emph{Ramanujan graphs}.

It is a classic result that random graphs are expanders with high probability.
Furthermore, in~\cite{friedman2003relative} it is shown that random graphs are very close to being Ramanujan.
More concretely, let $\epsilon > 0$ and take a random $s$-regular graph~$X$ then the second largest eigenvalue of~$X$ is $\lambda_2 < 2\sqrt{s-1}+\epsilon$ with high probability (depending on $\epsilon$ and $|X|$).
In \Cref{sec:explicit_expanders} we will discuss explicit constructions of expander graphs.

\subsubsection{Properties of expanders}
We will now discuss some results on expander graphs.

\begin{theorem}[Alon--Chung \cite{alon1988explicit}]
	Let $X$ be an $s$-regular graph with second largest eigenvalue~$\lambda_2$ and $S \subset X_0$ of size $|S|=\gamma |X_0|$ with $0<\gamma < 1$.
	Let $X(S)$ be the subgraph of $X$ induced by~$S$.
	Let
	\begin{align}\label{eqn:definition_beta}
	\alpha =   \gamma^2 + \frac{\lambda_2}{s} \gamma (1-\gamma) .
	\end{align}
	It holds that the number of edges of the subgraph $X(S)$ is upper bounded by 
	\begin{align*}
		| X(S)_1 | \leq \alpha |X_{1}| .
	\end{align*}
\end{theorem}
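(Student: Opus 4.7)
The plan is to reformulate the edge count as a quadratic form in the adjacency matrix and then exploit the spectral gap. Write $A$ for the adjacency matrix of $X$ and let $\mathbf{1}_S\in \mathbb{R}^{X^0}$ denote the indicator vector of $S$. Since each undirected edge inside $S$ is counted twice in $\mathbf{1}_S^T A \mathbf{1}_S$, we have $|X(S)^1|=\tfrac{1}{2}\mathbf{1}_S^T A \mathbf{1}_S$, and the statement to prove becomes $\mathbf{1}_S^T A \mathbf{1}_S \leq s|X^0|\alpha$, using that $|X^1|=s|X^0|/2$.

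Next, I would decompose $\mathbf{1}_S$ along the eigenspace of the top eigenvalue. Since $X$ is $s$-regular and connected, the all-ones vector $\mathbf{1}$ is an eigenvector of $A$ with eigenvalue $s$. Write
\begin{align*}
\mathbf{1}_S = \gamma\, \mathbf{1} + v,
\end{align*}
where $v$ is orthogonal to $\mathbf{1}$. The coefficient $\gamma$ is fixed by $\langle \mathbf{1}_S,\mathbf{1}\rangle/\|\mathbf{1}\|^2=|S|/|X^0|$, matching the hypothesis. By Pythagoras, $\|v\|^2 = \|\mathbf{1}_S\|^2 - \gamma^2\|\mathbf{1}\|^2 = \gamma|X^0| - \gamma^2 |X^0| = \gamma(1-\gamma)|X^0|$.

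Expanding the quadratic form and using $A\mathbf{1}=s\mathbf{1}$ together with the orthogonality of $\mathbf{1}$ and $v$, the cross terms vanish and one obtains
\begin{align*}
\mathbf{1}_S^T A \mathbf{1}_S = \gamma^2 s |X^0| + v^T A v.
\end{align*}
Since $v$ lies in the orthogonal complement of $\mathbf{1}$, the Rayleigh quotient bound gives $v^T A v \leq \lambda_2 \|v\|^2 = \lambda_2 \gamma(1-\gamma)|X^0|$. Combining these two estimates yields
\begin{align*}
\mathbf{1}_S^T A \mathbf{1}_S \;\leq\; s|X^0|\Bigl(\gamma^2 + \tfrac{\lambda_2}{s}\gamma(1-\gamma)\Bigr) \;=\; s|X^0|\,\alpha,
\end{align*}
which after dividing by $2$ is the desired bound. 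The only subtle point is to ensure that the Rayleigh bound $v^T A v \le \lambda_2 \|v\|^2$ is applied with the \emph{signed} second eigenvalue; if one wanted a version valid for $|\lambda_2|$ in place of $\lambda_2$ (to handle bipartite-like graphs where the smallest eigenvalue is very negative) one would need a separate argument, but the statement as given only requires the upper bound on $v^T A v$, so this simple spectral decomposition suffices.
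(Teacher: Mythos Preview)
Your argument is correct and is essentially the standard spectral proof of the Alon--Chung bound. Note, however, that the paper does not give its own proof of this theorem: it simply cites \cite{alon1988explicit} and uses the result as a black box. What you have written is precisely the original argument of Alon and Chung (Rayleigh decomposition of the indicator vector along the top eigenvector), so there is nothing to compare---your proof is the classical one.
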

By negation we obtain the following useful corollary.
\begin{corollary}\label{cor:alon_chung}
	Any set of edges $E \subset X_{1}$ of size $\alpha |X_{1}|$ is incident to more than~$\gamma |X_0|$ vertices.
\end{corollary}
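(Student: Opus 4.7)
The plan is to prove the corollary by contrapositive, directly reducing to the Alon--Chung theorem just stated. Suppose $E \subset X^1$ has $|E| = \alpha |X^1|$ but is incident to at most $\gamma |X^0|$ vertices. Let $S \subset X^0$ denote the set of vertices incident to some edge in $E$, and write $|S| = \gamma' |X^0|$ with $\gamma' \leq \gamma$. By construction every edge in $E$ has both endpoints in $S$, so $E \subseteq X(S)^1$.

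Next I would invoke the Alon--Chung theorem applied to the set $S$, which yields $|X(S)^1| \leq \alpha' |X^1|$, where
\[
\alpha' = (\gamma')^2 + \frac{\lambda_2}{s}\gamma'(1-\gamma').
\]
The key remaining step is to verify that the function $\gamma \mapsto \gamma^2 + \tfrac{\lambda_2}{s}\gamma(1-\gamma)$ is monotonically non-decreasing on $[0,1]$. This follows from a quick derivative computation, using that $0 \leq \lambda_2 \leq s$ ensures the derivative $2\gamma(1-\lambda_2/s) + \lambda_2/s$ is non-negative throughout $[0,1]$. Hence $\alpha' \leq \alpha$.

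Combining these inequalities gives $|E| \leq |X(S)^1| \leq \alpha' |X^1| \leq \alpha |X^1|$, and together with the hypothesis $|E| = \alpha |X^1|$ this forces $\gamma' = \gamma$ (by the strict monotonicity of $\alpha(\cdot)$ at any point where the derivative is strictly positive), so in fact $S$ must consist of at least $\gamma |X^0|$ vertices, and under any genericity assumption (e.g.\ $\gamma |X^0|$ not an integer, or a strict version of Alon--Chung for proper subsets) the inequality becomes strict, as claimed.

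There is no serious obstacle here; the only mild subtlety is ensuring the passage from \textquotedblleft at least\textquotedblright\ to \textquotedblleft more than\textquotedblright, which is handled either by the strict monotonicity of $\alpha$ or by invoking a strict form of Alon--Chung. The core of the argument is simply the observation that an edge set supported on few vertices must lie inside the induced subgraph of those vertices, which is precisely the quantity Alon--Chung controls.
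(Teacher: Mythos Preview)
Your approach is essentially the same as the paper's, which simply says ``by negation'' (i.e.\ contrapositive) of the Alon--Chung theorem. Your monotonicity detour is unnecessary: instead of working with $\gamma' \le \gamma$ you can just enlarge $S$ to any set of exactly $\gamma|X^0|$ vertices containing the incident vertices and apply Alon--Chung directly; the ``more than'' versus ``at least'' subtlety you flag is real and is also glossed over in the paper (indeed, later uses only require $\ge$).
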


We also have an explicit lower bound for the number of vertices incident to a set of edges~$E \subset X_{1}$.
\begin{lemma}[Edge-to-vertex expansion]\label{lem:edge_to_vertex_expansion}
	Let $E \subset X_{1}$ with $|E| \leq \alpha |X_{1}|$.
	Let $\Gamma(E) \subset X_0$ denote the set of vertices incident to the edges in~$E$.
	We then have
	\begin{align*}
	|\Gamma(E)| \geq \beta |E|
	\end{align*}
	where
	\begin{align}\label{eqn:definition_mu}
		\beta = \frac{\sqrt{\lambda_2^2 + 4 s (s-\lambda_2) \alpha} - \lambda_2}{s(s-\lambda_2) \alpha} .
	\end{align}
\end{lemma}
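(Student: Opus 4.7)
The plan is to apply the Alon--Chung bound with $S=\Gamma(E)$ and then invert the resulting quadratic inequality. Set $n=|X^{0}|$, $v=|\Gamma(E)|$, $e=|E|$, and $\gamma=v/n$. Since every edge in $E$ has both endpoints in $\Gamma(E)$, it lies in the induced subgraph $X(\Gamma(E))$, so
\begin{align*}
e \;\leq\; |X(\Gamma(E))^{1}|.
\end{align*}
By the Alon--Chung theorem applied to $S=\Gamma(E)$, the right-hand side is at most $(\gamma^{2}+\tfrac{\lambda_{2}}{s}\gamma(1-\gamma))|X^{1}|$. Using $|X^{1}|=sn/2$ and simplifying, this rearranges to the quadratic inequality
\begin{align*}
e \;\leq\; \frac{(s-\lambda_{2})}{2n}\,v^{2} \;+\; \frac{\lambda_{2}}{2}\,v.
\end{align*}

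Next I would solve this for a lower bound on $v$. Since the right-hand side is monotonically increasing in $v$ for $v\geq 0$, the quadratic formula applied to $\tfrac{s-\lambda_{2}}{2n}v^{2}+\tfrac{\lambda_{2}}{2}v-e\geq 0$ yields
\begin{align*}
v \;\geq\; \frac{-\lambda_{2}+\sqrt{\lambda_{2}^{2}+8(s-\lambda_{2})e/n}}{2(s-\lambda_{2})/n}.
\end{align*}
Rationalising the numerator by multiplying top and bottom by $\lambda_{2}+\sqrt{\lambda_{2}^{2}+8(s-\lambda_{2})e/n}$ (the standard trick to expose the factor $e$) gives the cleaner form
\begin{align*}
\frac{v}{e} \;\geq\; \frac{4}{\lambda_{2}+\sqrt{\lambda_{2}^{2}+8(s-\lambda_{2})e/n}}.
\end{align*}

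Finally, I would observe that the right-hand side is a decreasing function of $e$, so the worst case on the range $e\leq \alpha|X^{1}|=\alpha sn/2$ is attained at the endpoint $e=\alpha sn/2$. Substituting this value, the denominator becomes $\lambda_{2}+\sqrt{\lambda_{2}^{2}+4s(s-\lambda_{2})\alpha}$, and one more conjugate multiplication turns the bound into exactly $\beta=(\sqrt{\lambda_{2}^{2}+4s(s-\lambda_{2})\alpha}-\lambda_{2})/(s(s-\lambda_{2})\alpha)$, finishing the proof.

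The work here is purely algebraic; the only conceptual input is the Alon--Chung estimate, which is already stated in the excerpt. The one mildly delicate point to flag is the monotonicity claim in the last paragraph, which I would justify in a sentence using the rationalised form (the denominator clearly grows with $e$). No analytic optimisation over $\gamma$ is needed, since the inequality $e\leq |X(\Gamma(E))^{1}|$ holds automatically for $S=\Gamma(E)$ and the resulting bound is already linear in $e$ after taking $e=\alpha|X^{1}|$ as the pessimistic case.
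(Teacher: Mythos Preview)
Your argument is correct and is essentially the same as the paper's: both apply Alon--Chung to $S=\Gamma(E)$ (the paper does so via its contrapositive Corollary), invert the quadratic to get the same function $\gamma(\tilde\alpha)$, and then linearise on $[0,\alpha]$. The only cosmetic difference is that the paper phrases the linearisation as ``$\gamma(\tilde\alpha)$ is convex, hence bounded below by the secant of slope $\gamma(\alpha)/\alpha$'' (strictly speaking $\gamma$ is concave, which is what makes the secant a \emph{lower} bound), while you reach the same conclusion by rationalising and observing that $v/e\geq 4/(\lambda_2+\sqrt{\lambda_2^2+8(s-\lambda_2)e/n})$ is decreasing in $e$.
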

\begin{proof}
	We can solve \Cref{eqn:definition_beta} for $\gamma$ and obtain
	\begin{align*}
	\gamma(\tilde{\alpha}) = \frac{\sqrt{\lambda_2^2 + 4 s (s-\lambda_2) \tilde{\alpha}} - \lambda_2}{2(s-\lambda_2)}.
	\end{align*}
	Since $\gamma(\tilde{\alpha})$ is a convex function we can lower bound it for~$\tilde{\alpha}$ between~0 and~$\alpha$ by a linear function with slope~$\gamma(\alpha)/\alpha$.
	Combined with \Cref{cor:alon_chung} we obtain
	\begin{align*}
	|\Gamma(E)| \geq \gamma |X_0| \geq \frac{2 \gamma(\alpha)}{s \alpha}\, \alpha |X_{1}| = \beta |E|
	\end{align*}
	as claimed.
\end{proof}

The basic property of spectral expanders given in \Cref{lem:relative_cheeger}---sets of vertices $S$ have large boundary with their complement---can be refined.
The following lemma estimates how many vertices in~$S$ have many edges connecting them with the complement of~$S$.
\begin{lemma}\label{lem:relativevertextoedgeexpansion}
Let $S\subset X_0$ be a subset of vertices with $|S|\leq \alpha |X_0|$. 
Let $0\leq b\leq s$ and denote
$$A=\{v\in S \mid |(\delta S)_v|\geq s-b\}$$
where for $v\in S$ we define $(\delta S)_v=\delta S\cap \delta v.$
Let $$\beta= ((b-\lambda_2)-\alpha(s-\lambda_2))b^{-1}$$ then there is a lower bound
 $$|A|\geq \beta |S|.$$
\end{lemma}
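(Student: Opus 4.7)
The plan is a short double-counting argument that pits an upper bound on $|\delta S|$, derived from the definition of $A$, against the relative Cheeger lower bound supplied by \Cref{lem:relative_cheeger}.

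First I would split $S = A \sqcup B$ with $B = S\setminus A$, and use that every edge of $\delta S$ has exactly one endpoint in $S$ to write
$$|\delta S| = \sum_{v\in A}|(\delta S)_v| + \sum_{v\in B}|(\delta S)_v|.$$
The $s$-regularity of $X$ bounds the first sum by $s|A|$. For $v\in B$ the defining inequality $|(\delta S)_v| < s-b$ gives the non-strict $|(\delta S)_v| \leq s-b$, so the second sum is bounded by $(s-b)|B| = (s-b)(|S|-|A|)$. Altogether this yields
$$|\delta S| \leq s|A| + (s-b)(|S|-|A|).$$

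Next I would apply \Cref{lem:relative_cheeger} to $S$, using the hypothesis $|S| \leq \alpha |X^0|$, to obtain the lower bound $|\delta S| \geq (1-\alpha)(s-\lambda_2)|S|$. Combining this with the upper bound, the $(s-b)|S|$ contribution rearranges to give $b|A| \geq \bigl[(1-\alpha)(s-\lambda_2)-(s-b)\bigr]|S|$. Expanding the bracket collapses it to $(b-\lambda_2)-\alpha(s-\lambda_2)$, and dividing by $b$ delivers $|A| \geq \beta|S|$, as claimed.

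There is no real obstacle here beyond algebraic bookkeeping; the only subtlety worth flagging is that the bound used for $v\in B$ is the strict $|(\delta S)_v|<s-b$, but its non-strict weakening is exactly what is needed for the arithmetic to close up to the stated constant $\beta$. The argument is morally identical to the standard derivation of \Cref{lem:relative_cheeger}, just weighted by partitioning $S$ according to boundary degree.
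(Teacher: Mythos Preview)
Your proof is correct and follows essentially the same approach as the paper: split $S$ into $A$ and its complement $B$, bound $|\delta S|$ from above by $s|A|+(s-b)|B|$ using the definitions, bound it from below via \Cref{lem:relative_cheeger}, and solve for $|A|$. The only difference is cosmetic phrasing; your explicit remark that each edge of $\delta S$ has exactly one endpoint in $S$ is in fact slightly more careful than the paper's writeup.
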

\begin{proof}
Let $B=S-A.$ We partition the boundary $\delta S$ accordingly into
$$(\delta S)_{A}=\bigcup_{v\in A} (\delta S)_v \text{ and } (\delta S)_{B}=\bigcup_{v\in B} (\delta S)_v.$$
Since $|(\delta S)_v|\leq s$ for all $v\in S$ and $|(\delta S)_v|< s-b$ for all $v\in B$ we get
\begin{align*}
(\delta S)_{A}&\leq s |A|\text{ and}\\
(\delta S)_{B}&\leq (s-b) |B|.
\end{align*}
These inequalities together with \Cref{lem:relative_cheeger} give
$$s|A|+(s-b) |B|\geq (1-\alpha)(s-\lambda_2) |S|.$$
Since $S$ is the disjoint union of $A$ and $B$ we can cancel the term $|B|$ to obtain
$$|A|\geq \frac{(b-\lambda_2)-\alpha(s-\lambda_2)}{b} |S|=\beta |S|$$
which is what we wanted to show.
\end{proof}

\subsection{Properties of Expander Codes}

\subsubsection{Parameters}
What is the number of encoded bits of the expander code? --- The local code $L$ is defined by~$s-k_L$ linear constraints.
Each vertex of the expander graph~$X$ thus contributes $(1-k_L/s)s$ constraints to the (global) code~$C(X,L)$.
As some constraints may be linearly dependent\footnote{This happens for example when the all-ones vector is a parity check of the local code~$L$.} and using $2\, |X_{1}| = s\, |X_0|$ we obtain the bound
\begin{align}\label{eqn:sipser_spielman_k}
k_{C(X,L)} \geq (2k_L/s-1)\, |X_{1}| .
\end{align}

\begin{theorem}[Sipser--Spielman \cite{sipser1996expander}]
	Let~$s$ be the block length of the local code~$L$ and $d_L$ its distance. Further, let~$\lambda_2$ the second-largest eigenvalue of the $s$-regular expander graph~$X$.
	Then the distance~$d_{C(X,L)}$ of the expander code~$C(X,L)$ is lower bounded as follows:
	\begin{align}\label{eqn:sipser_spielman_d}
		d_{C(X,L)} \geq  \frac{(d_L -  \lambda_2)\,d_L}{(s-\lambda_2)\,s}\, |X_{1}|
	\end{align}
\end{theorem}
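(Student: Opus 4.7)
The plan is to bound the support size of an arbitrary nonzero codeword $c \in C(X,L)$ by combining two ingredients: the local distance of $L$, which forces $c$ to look locally like a codeword of $L$ around every vertex, and the Alon--Chung theorem from the previous subsection, which prevents $c$ from concentrating its support on a small set of vertices. Let $E \subseteq X^1$ denote the support of $c$, write $|E| = \epsilon |X^1|$, and let $S = \Gamma(E) \subseteq X^0$ be the set of vertices incident to $E$, with $|S| = \gamma |X^0|$. The goal is to derive both an upper and a lower bound relating $\epsilon$ and $\gamma$, and then combine them.

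First I would set up the local lower bound on $|S|$. The defining property of $C(X,L)$ in \Cref{sec:localsystems} is that for each $v \in X^0$, the vector obtained by pulling back $c|_{\delta v}$ along the labelling $\Lambda_v$ lies in $L$. Hence for every $v \in S$ this local pullback is a \emph{nonzero} codeword of $L$ and therefore has weight at least $d_L$; that is, $|\delta v \cap E| \geq d_L$. Summing over $v \in S$ and using the handshake identity $\sum_{v\in S} |\delta v \cap E| = 2|E|$ (since every edge of $E$ has both endpoints in $S$ by definition of $\Gamma$) yields $d_L |S| \leq 2|E|$, which combined with $2|X^1| = s|X^0|$ gives
$$\gamma \leq \frac{s}{d_L}\, \epsilon.$$

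Second, I would apply the global upper bound. Since every edge of $E$ has both endpoints in $S$, we have $E \subseteq X(S)^1$, so the Alon--Chung theorem gives
$$\epsilon \leq \gamma^{2} + \frac{\lambda_2}{s}\gamma(1-\gamma) = \frac{\lambda_2}{s}\gamma + \frac{s-\lambda_2}{s}\gamma^{2}.$$
Substituting the local bound $\gamma \leq (s/d_L)\epsilon$ into this inequality gives
$$\epsilon \leq \frac{\lambda_2}{d_L}\,\epsilon + \frac{s(s-\lambda_2)}{d_L^{2}}\,\epsilon^{2},$$
and dividing through by $\epsilon$ and rearranging isolates $\epsilon \geq d_L(d_L-\lambda_2)/\bigl(s(s-\lambda_2)\bigr)$, which is exactly the claimed bound once multiplied by $|X^1|$.

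I do not foresee a genuine obstacle here: both ingredients are already in place in the preceding subsection, and the heart of the argument is purely combinatorial. The one point that requires care is rewriting the Alon--Chung expression so that the factor $(s-\lambda_2)$ is separated from the quadratic term; a less careful reduction (e.g.\ bounding $\gamma^2+(\lambda_2/s)\gamma(1-\gamma) \leq \gamma^2 + (\lambda_2/s)\gamma$) would only yield $d_L(d_L-\lambda_2)/s^{2}$ in the denominator, which is strictly weaker than the stated inequality. Keeping the $(1-\gamma)$ factor and substituting \emph{before} simplifying is what produces the sharper constant $(s-\lambda_2)s$.
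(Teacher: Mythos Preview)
Your proposal is correct and follows essentially the same approach as the paper: both combine the local constraint (every vertex incident to the support sees at least $d_L$ support edges) with the Alon--Chung edge bound to force the support to be large. The paper phrases the first step as an averaging argument on $2|E|/|\Gamma(E)|$ and parametrizes $|E|$ implicitly by $\gamma$ via the Alon--Chung formula before substituting, whereas you use the handshake identity to get $\gamma \le (s/d_L)\epsilon$ and substitute directly into the rewritten Alon--Chung inequality; the algebra collapses to the same bound.
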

\begin{proof} 
	Consider a set of variables (edges) $E \subset X_{1}$ of size
	\begin{align}\label{eqn:sipser_spielman_proof_weight}
	|E| = \frac{s|X_0|}{2} \left( \gamma^2 + \frac{\lambda_2}{s} \gamma (1-\gamma) \right).
	\end{align}
	By \Cref{cor:alon_chung} we have that these edges are incident to more than~$\gamma |X_0|$ vertices.
	Let~$X(E)$ be the subgraph induced by~$E$.
	The average number of edges incident to each vertex in~$X(E)$ is
	\begin{align*}
	\frac{2|E|}{X(E)^0} \leq \frac{2|E|}{\gamma |X_0|} = s\left( \gamma + \frac{\lambda_2}{s}(1-\gamma) \right) .
	\end{align*}
	We can now consider a simple probabilistic argument to constrain the weight of code words:
	If the average number of edges per vertex is smaller than the code distance of the local code~$d_L$ then~$E$ can not be a code word of the global code (the expander code).
	This means we must have
	\begin{align*}
	s\left( \gamma + \frac{\lambda_2}{s}(1-\gamma) \right) < d_L
	\end{align*}
	which is equivalent to
	\begin{align*}
	\gamma < \frac{d_L - \lambda_2}{s - \lambda_2} .
	\end{align*}
	
	The above reasoning implies that for $E$ to be a code word we must have $\gamma \geq (d_L - \lambda_2)/(s - \lambda_2)$ and substituting this into \Cref{eqn:sipser_spielman_proof_weight} gives the result.
\end{proof}
For the bound to be non-trivial we need that the distance of the local code must be strictly larger than the second eigenvalue of~$X$
\begin{align}\label{eqn:expandercode_local_distance_demand}
d_L > \lambda_2.
\end{align}

We note that the bound actually given in~\cite[Lemma 15]{sipser1996expander} is $d_{C(X,L)} \geq n \left( d_L - \lambda_2\right)^2 / \left( s-\lambda_2 \right)^2$ which is slightly worse than \Cref{eqn:sipser_spielman_d}.
It is obtained by dropping the second term in \Cref{eqn:sipser_spielman_proof_weight} which implicitly assumes that \Cref{eqn:expandercode_local_distance_demand} holds.

\subsubsection{Expansion properties of Tanner codes}
A matrix~$A\in\mathbb{F}_2^{m\times n}$ is called {\em $(\alpha,\beta)$-expanding} if for any $x\in \mathbb{F}_2^{n}$ with $|x|\leq \alpha n$ we have that $|Ax| \geq \beta |x|.$
In this section we will analyse the expansion properties of the Tanner code
\begin{center}
	\begin{tikzcd}
	C(X,L,\Lambda)=(C_1(X) \arrow[r,"\del"] & C_0(X)\otimes L_0)
	\end{tikzcd}
\end{center}
given the spectral expansion $\lambda_2$ of the graph $X$ and the distances $d_L, d_{L^\perp}$ of the local code $L$ and its dual~$L^\perp$.
The following theorem shows that small errors will violate a large amount of checks.
\begin{theorem}\label{thm:expanderviolatedchecks}
For $\alpha > 0$ let $x\in C_1(X)$ such that  $|x|\leq\alpha |X_{1}|.$ Then $|\del(x)|\geq\beta|x|$ where $\beta=\beta'\beta''$ and
\begin{align*}
\beta'&=\frac{\sqrt{\lambda_2^2 + 4 s (s-\lambda_2) \alpha} - \lambda_2}{s(s-\lambda_2) \alpha}\\
\beta''&=\frac{(d_L-\lambda_2)-\frac{4\alpha}{s}(s-\lambda_2)}{d_L}
\end{align*}
\end{theorem}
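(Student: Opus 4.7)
The plan is to lower bound $|\partial x|$ by counting vertices $v \in X^0$ where the local view $x \cap \delta v \subseteq \delta v$ is a nonzero non-codeword of $L$ (viewed through the bijection $\Lambda_v$); at any such vertex the component $\partial^L(\Lambda_v(x \cap \delta v)) \in L_0$ of $\partial x$ is nonzero, contributing at least one coordinate to $|\partial x|$. The two factors $\beta'$ and $\beta''$ in the claimed bound will arise from two successive applications of expansion.

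First, I apply the edge-to-vertex expansion (\Cref{lem:edge_to_vertex_expansion}) to the support $E := \operatorname{supp}(x) \subseteq X^1$, which is valid since $|E| = |x| \leq \alpha |X^1|$. This yields $|S| \geq \beta'|x|$ for $S := \Gamma(E)$; by construction every $v \in S$ satisfies $|x \cap \delta v| \geq 1$.

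Second, I apply the relative vertex-to-edge expansion (\Cref{lem:relativevertextoedgeexpansion}) to $S$ with parameter $b$ essentially equal to $d_L$. The key geometric point is that any edge incident to $v \in S$ whose other endpoint lies outside $S$ cannot belong to $x$, since otherwise that endpoint would also lie in $\Gamma(E) = S$. Hence a vertex $v \in S$ with $|(\delta S)_v| \geq s - b$ has $|x \cap \delta v| \leq b$, and choosing $b$ slightly below $d_L$ forces $1 \leq |x \cap \delta v| < d_L$, whence $\Lambda_v(x \cap \delta v) \notin L$ and $\partial^L \Lambda_v(x \cap \delta v) \neq 0$. \Cref{lem:relativevertextoedgeexpansion} then produces a set $A \subseteq S$ with $|A| \geq \beta''|S|$ where $\beta''$ has the claimed form, and assembling the inequalities gives $|\partial x| \geq |A| \geq \beta'\beta''|x|$.

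The main obstacle is calibrating the density parameter fed into \Cref{lem:relativevertextoedgeexpansion}: its conclusion takes the form $((b - \lambda_2) - \alpha''(s - \lambda_2))/b$, so one must supply an upper bound $\alpha''$ on $|S|/|X^0|$. Reproducing the coefficient $4\alpha/s$ that appears in the stated $\beta''$, rather than the naive $\alpha s$ one obtains from $|S| \leq 2|x|$, seems to require a sharper double-counting of edge incidences on $S$ --- for instance, by separately controlling via a Markov-type estimate the sub-population of $S$ whose local weight $|x \cap \delta v|$ is large, and applying \Cref{lem:relativevertextoedgeexpansion} only to the complementary low-weight subset. A secondary nuisance is the boundary case $|x \cap \delta v| = d_L$, where $\Lambda_v(x \cap \delta v)$ might coincide with a minimum-weight codeword of $L$; this is handled by working with $b = d_L - 1$ and absorbing the resulting constants.
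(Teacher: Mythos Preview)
Your approach is exactly the paper's: apply \Cref{lem:edge_to_vertex_expansion} to $E=\operatorname{supp}(x)$ to get $|S|\geq\beta'|x|$ for $S=\Gamma(E)$, then apply \Cref{lem:relativevertextoedgeexpansion} to $S$ with $b=d_L$ to get $|A|\geq\beta''|S|$, and observe that each $v\in A$ contributes at least one violated check because $(\delta S)_v\cap E=\emptyset$ forces $1\le|E\cap\delta v|<d_L$.

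Your two ``obstacles'' deserve comment, though. On the first: no sharper double-counting is needed. The paper itself simply uses the naive bound $|S|\le 2|E|\le 2\alpha|X^1|$ and then converts via $|X^1|=\tfrac{s}{2}|X^0|$; this gives $|S|\le \alpha s\,|X^0|$, so the density parameter fed into \Cref{lem:relativevertextoedgeexpansion} is $\alpha s$. The coefficient $\tfrac{4\alpha}{s}$ printed in the statement (and in the paper's proof) is an arithmetic slip --- you computed the honest constant and there is nothing more to do. Since $\alpha$ is a free parameter chosen small later, the discrepancy is harmless for the application.

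On the second: your worry about the boundary case $|E\cap\delta v|=d_L$ is legitimate. From $|(\delta S)_v|\ge s-d_L$ and disjointness one only gets $|E\cap\delta v|\le d_L$, not strict inequality, so a minimum-weight local codeword is not excluded by the argument as written. Your fix (take $b=d_L-1$, or equivalently use the condition $|(\delta S)_v|>s-d_L$) repairs this at negligible cost to the constant.
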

\begin{proof}
Denote by $E\subset X_{1}$ the subset of edges corresponding to the $1$-chain $x.$ Then $|E|=|x|.$ Denote by $S=\Gamma(E)\subset X_0$ the set of vertices incident to $E.$ In the notation of \Cref{lem:relativevertextoedgeexpansion}, let
$$A=\{v\in S \mid |(\delta S)_v|\geq s-d_L\}.$$
Since $\delta S$ and $E$ are disjoint, for all $v\in A$ we have
$$1\leq|\delta v\cap E|<d_{L}.$$
Hence, at least one check at every vertex in $A$ is violated and
$$|\del(x)|\geq |A|.$$
Using the bounds for the edge-to-vertex vertex expansion of~$X$, see \Cref{lem:edge_to_vertex_expansion},
we get
$S\geq \beta'|E|.$
Now $$|S|\leq 2|E|\leq 2\alpha|X_{1}|=\frac{4\alpha}{s}|X_0|$$
Then \Cref{lem:relativevertextoedgeexpansion} yields
$|A|\geq\beta''|S|.$ So, in total, we obtain the desired inequality.
\end{proof}

Dually, the following theorem estimates the number of bits involved in any set of checks.
\begin{theorem}\label{thm:expanderbitdegree}
For $\alpha > 0$ let $y\in C_0(X)\otimes L_0$ such that $|y|\leq \alpha |X_0|(s-k_L).$ Then $|\delta(y)|\geq \beta y$
where 
\begin{align*}
	\beta= \frac{(d_{L^\perp}-\lambda_2)-\alpha(s-k_L)(s-\lambda_2)}{(s-k_L)d_{L^\perp}}
\end{align*}
\end{theorem}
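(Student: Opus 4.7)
The plan is to mirror the proof of \Cref{thm:expanderviolatedchecks} in the dual direction: where that argument showed that an edge-supported error violates many vertex checks, we will show that a small set of checks excites many edges. Write $y=\sum_{v\in X^0}v\otimes y_v$ with $y_v\in L_0$, and let $S\subset X^0$ be the set of vertices where $\delta^L y_v\neq 0$. Summands $v\otimes y_v$ with $\delta^L y_v=0$ contribute to $|y|$ without affecting $\delta y$, so removing them only strengthens the inequality to be proved; we may therefore assume $y_v=0$ precisely for $v\notin S$. Since $|y_v|\leq\dim L_0=s-k_L$ and $|y_v|\geq 1$ on $S$, the hypothesis $|y|\leq\alpha|X^0|(s-k_L)$ forces $|S|\leq|y|$ and hence $\gamma:=|S|/|X^0|\leq\alpha(s-k_L)$.

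For each $v\in S$, the element $\delta^L y_v$ is a nonzero codeword of $L^\perp$, so $|\delta^L y_v|\geq d_{L^\perp}$, and by the definition of $\del$ in \Cref{sec:localsystems} the edges at $v$ whose local label lies in $\operatorname{supp}(\delta^L y_v)$ are precisely those ``excited'' by $y_v$. An edge $e=\{v,w\}$ lies in $\operatorname{supp}(\delta y)$ if and only if it is excited by exactly one of its endpoints, so writing $T:=\sum_{v\in S}|\delta^L y_v|\geq d_{L^\perp}|S|$ for the total excitation count and $c$ for the number of edges excited by both endpoints, we get $|\delta y|=T-2c$. Any doubly-excited edge has both endpoints in $S$, so $c\leq|X(S)^1|$, and the Alon--Chung bound combined with $2|X^1|=s|X^0|$ yields $|X(S)^1|\leq\tfrac{|S|}{2}\bigl(s\gamma+\lambda_2(1-\gamma)\bigr)$, whence
\[
|\delta y|\;\geq\;T-|S|\bigl(s\gamma+\lambda_2(1-\gamma)\bigr).
\]

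To reach the stated form, substitute the upper bound $|S|\leq T/d_{L^\perp}$ in the subtracted term and factor out $T/d_{L^\perp}$, obtaining $|\delta y|\geq(T/d_{L^\perp})\bigl((d_{L^\perp}-\lambda_2)-\gamma(s-\lambda_2)\bigr)$. Inserting $\gamma\leq\alpha(s-k_L)$ inside the bracket, and the crude bound $T\geq|y|/(s-k_L)$ (which follows from $|\delta^L y_v|\geq 1$ and $|y_v|\leq s-k_L$ termwise) outside it, yields $|\delta y|\geq\beta|y|$ with the claimed constant. The main subtlety is the cancellation counted by $c$: two adjacent active vertices can cancel one another's contribution along their shared edge, and Alon--Chung is precisely what limits how many such edges can exist when $S$ is a sparse vertex set in a spectral expander.
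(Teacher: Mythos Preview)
Your argument is correct and establishes the stated bound, but it takes a different route from the paper. The paper invokes \Cref{lem:relativevertextoedgeexpansion} with $b=d_{L^\perp}$: this extracts a large subset $A\subset S$ of vertices each having at least $s-d_{L^\perp}$ edges leaving $S$; since each such vertex excites $\geq d_{L^\perp}$ incident edges, at least one of them lies in $\delta S$ and is therefore uncancelled, yielding $|\delta y|\geq|A|$ directly. You instead track the total excitation count $T$ and control cancellations globally via the Alon--Chung edge bound on the induced subgraph $X(S)$. Your approach bypasses the bespoke pigeonhole lemma and is in principle sharper: had you substituted $T\geq d_{L^\perp}|S|\geq d_{L^\perp}|y|/(s-k_L)$ at the last step rather than the crude $T\geq|y|/(s-k_L)$, you would obtain $|\delta y|\geq d_{L^\perp}\beta\,|y|$, a factor $d_{L^\perp}$ better than the paper's one-edge-per-vertex count can give. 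One minor quibble: the reduction ``removing summands with $\delta^L y_v=0$ only strengthens the inequality'' points the wrong way (shrinking $|y|$ while leaving $\delta y$ fixed makes the target inequality \emph{easier} for the reduced vector, not for the original), but since $\dim L_0=s-k_L$ forces $\delta^L$ to be injective no such summands exist and the point is moot.
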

\begin{proof}
Let $S$ be the subset of vertices appearing non-trivially in $y.$ Hence
$$y=\sum_{v\in S} v\otimes c_v$$
for appropriate $0\neq c_v\in L_0.$ Hence, clearly
\begin{equation}\label{eqn:Sandylol}
	(s-k_L)^{-1}|y|\leq |S|\leq |y|.
\end{equation}

In the notation of \Cref{lem:relativevertextoedgeexpansion}, let
$$A=\{v\in S \mid |(\delta S)_v|\geq s-d_{L^\perp}\}.$$
By definition of the differential in the Tanner code, we have $|\delta(v\otimes c_v)|=|\delta_L(c_v)|\geq d_{L^\perp}.$ Hence for each vertex $v\in A,$ there  is at least one edge $e\in(\delta S)_v$ such that $e$ appears in $\delta(v\otimes c_v).$ Since $e\in(\delta S)_v$ it appears in no other term $\delta(v'\otimes c_{v'})$ for $v\neq v'\in S.$ Hence
$$|\delta(y)|\geq |A|.$$
Now the statement follows using \Cref{lem:relativevertextoedgeexpansion} and \Cref{eqn:Sandylol}.
\end{proof}

\subsection{Explicit Constructions of Expander Graphs}\label{sec:explicit_expanders}
In order to be able to apply the construction outlined in \Cref{sec:balanced_products} we need to have full control over the automorphism group of the expander graphs.
This excludes in particular any randomized constructions of expanders.
It turns out that projective linear groups give rise to several families of expanders and we outline two approaches below.

\subsubsection{LPS-expander}
The first construction we discuss is in fact the first explicit construction of Ramanujan expanders, which satisfy the optimal bound $\lambda_2 < 2\sqrt{s-1}$.
It was achieved by Lubotzky, Phillips and Sarnak (LPS) in~\cite{lubotzky1988ramanujan} by defining certain Cayley graphs which we will now discuss.

Consider a group~$G$ with generating set~$S$.
We assume that~$S$ is symmetric, i.e. if $s \in S$ then $s^{-1} \in S$.
The \emph{(undirected) Cayley graph}~$\operatorname{Cay}(G,S)$ consists of the vertex-set~$G$ and any two vertices~$g,g'\in G$ are connected by an edge if and only if there exists an~$s\in S$ such that $g' = sg$.
We stress that (among many other properties) the expansion of Cayley graphs~$\operatorname{Cay}(G,S)$ depends on the choice of the generating set~$S$.

LPS-expanders are Cayley graphs with respect to the groups~$\operatorname{PGL}(2,q)$ or~$\operatorname{PSL}(2,q)$ which we will now define.
Let~$\mathbb{F}_q$ be the finite field of order~$q$ and let~$\operatorname{GL}(2,q)$ be the group of invertible matrices in~$\mathbb{F}_q^{2\times 2}$.
The {\em projective general linear group} $\operatorname{PGL}(2,q)$ is defined as
\begin{align*}
\operatorname{PGL}(2,q) = \operatorname{GL}(2,q) / \mathbb{F}_q^{\times} .
\end{align*}
The {\em special linear group} $\operatorname{SL}(2,q)$ is the kernel of the determinant $\det : \operatorname{GL}(2,q) \to  \mathbb{F}_q^{\times}$.
The {\em projective special linear group} $\operatorname{PSL}(2,q)$ is $\operatorname{SL}(2,q)$ up to signs
\begin{align*}
\operatorname{PSL}(2,q) = \operatorname{SL}(2,q) / \lbrace \pm 1 \rbrace .
\end{align*}

We will now define the generating sets of the LPS-expanders.
Let~$p$ be a prime and let $$\tilde{S}_p = \lbrace (a,b,c,d)\in \mathbb{Z}^4 \mid a^2+b^2+c^2+d^2 = p \rbrace .$$
A classical theorem by Jacobi implies that $|\tilde{S}_p| = 8(p+1)$.
We have two cases:
\begin{enumerate}
\item { $p \equiv 1 \mod 4$}: For any $(a,b,c,d)\in \tilde{S}_p$ we have that exactly one element is odd and the rest are even.
Define $S_p \subset \tilde{S}_p$ be the subset where $a$ is positive and odd.
\item { $p \equiv 3 \mod 4$}: For any $(a,b,c,d)\in \tilde{S}_p$ we have that exactly one element of $\{a,b,c,d\}$ is even and the rest are odd.
Define $S_p \subset \tilde{S}_p$ be the subset where $a$ is even and the first non-zero element is positive.
\end{enumerate}
It can be shown that either way~$|S_p|=p+1$.
Now let~$q$ be prime and $x,y\in \mathbb{F}_q$ such that $x^2+y^2+1=0.$ We define the following set of matrices:
\begin{align*}
	S_{p,q} = \left\lbrace \begin{bmatrix}
	a + bx + d y & -b y + c + dx \\
	-by-c+dx & a-bx-dy
	\end{bmatrix} \mid (a,b,c,d) \in S_{p} \right\rbrace
\end{align*}
where the product between integers and elements of~$\mathbb{F}_q$ is defined in the obvious way.

For $a\in \mathbb{Z}$ and~$p$ an odd prime we define the {\em Legendre symbol} 
\begin{align*}
\left(\dfrac{a}{p}\right) = 
\begin{cases}
0, &\text{ if $p \mid a$}\\
1, &\text{ if $p \nmid a$ and $a$ is a square modulo $p$}\\
-1, &\text{ if $p \nmid a$ and $a$ is not a square modulo $p$.}
\end{cases}
\end{align*}

\begin{theorem}[Lubotzky--Phillips--Sarnak \cite{lubotzky1988ramanujan}]
	Assume that~$p$ and~$q$ are distinct, odd primes such that $q > 2 \sqrt{p}$.
	If $$\left(\dfrac{p}{q}\right) = 1$$ then the set~$S_{p,q}$ is a symmetric generating set of $\operatorname{PSL}(2,q)$ and we define $X_{p,q} = \operatorname{Cay}(\operatorname{PSL}(2,q),S_{p,q})$.
	
	Otherwise we must have $$\left(\dfrac{p}{q}\right) = -1$$ and the set~$S_{p,q}$ is a symmetric generating set of $\operatorname{PGL}(2,q)$ and we define $X_{p,q} = \operatorname{Cay}(\operatorname{PGL}(2,q),S_{p,q})$.
	
	In both cases the graphs $X_{p,q}$ are connected, $p+1$--regular expanders with $\lambda_2 < 2 \sqrt{p}$.
\end{theorem}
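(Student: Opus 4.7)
The plan is to reduce the statement to classical facts about integer quaternions, their image in $\operatorname{PGL}(2,q)$, and ultimately to the Ramanujan--Petersson conjecture for automorphic forms on a definite quaternion algebra.

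First I would set up the quaternionic framework. Let $\mathbb{H}(\mathbb{Z})$ denote the Lipschitz integers in Hamilton's quaternion algebra over $\mathbb{Q}$, with norm $N(a+bi+cj+dk)=a^2+b^2+c^2+d^2$. By Jacobi's four-square theorem the set of $\alpha\in\mathbb{H}(\mathbb{Z})$ with $N(\alpha)=p$ has exactly $8(p+1)$ elements; quotienting by the unit group $\{\pm 1,\pm i,\pm j,\pm k\}$ (order $8$) and normalising by a sign convention yields precisely the set $S_p$ of cardinality $p+1$, with the convention depending on whether $p\equiv 1$ or $3\pmod 4$. Since $q\neq p$, the condition $x^2+y^2+1=0$ in $\mathbb{F}_q$ (solvable for odd $q$ by a pigeonhole count) fixes an isomorphism $\Phi\colon \mathbb{H}(\mathbb{Z})\otimes\mathbb{F}_q \xrightarrow{\sim}\mathbb{F}_q^{2\times 2}$ sending $i,j,k$ to the matrices built into the definition of $S_{p,q}$. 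Under $\Phi$, an element of norm $p$ maps to a matrix of determinant $p\in\mathbb{F}_q^\times$, and $S_{p,q}$ is then the image of $S_p$ in $\operatorname{PGL}(2,q)$.

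Next I would verify the group-theoretic claims. Symmetry of $S_{p,q}$ follows from $\alpha\mapsto\bar\alpha$ carrying $S_p$ to itself and from $\Phi(\bar\alpha)=\Phi(\alpha)^{-1}$ (up to the central scalar $p$, which dies in $\operatorname{PGL}$). The image in $\operatorname{PGL}(2,q)$ lands in $\operatorname{PSL}(2,q)$ precisely when $\det=p$ is a square in $\mathbb{F}_q^\times$, i.e.\ when $\left(\tfrac{p}{q}\right)=1$; in the other case elements of $S_{p,q}$ have non-square determinant and so generate a subgroup mapping surjectively onto $\mathbb{F}_q^\times/(\mathbb{F}_q^\times)^2$. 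That the subgroup generated is \emph{all} of $\operatorname{PSL}(2,q)$ (resp.\ $\operatorname{PGL}(2,q)$) uses strong approximation for the norm-one group of $\mathbb{H}$: the orders coming from quaternions of $p$-power norm, reduced mod $q$, surject onto the arithmetic group, and Dickson's classification of subgroups of $\operatorname{PSL}(2,q)$ rules out proper subgroups once one checks that $S_{p,q}$ is not contained in a Borel, dihedral or exceptional subgroup (a direct computation using $q>2\sqrt p$). Connectedness of $X_{p,q}$ is then equivalent to $S_{p,q}$ generating.

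The core difficulty, and the reason this theorem is deep, is the spectral bound $\lambda_2<2\sqrt p$. I would follow the LPS strategy: identify the adjacency operator of $X_{p,q}$ with a Hecke operator $T_p$ acting on functions on a double coset space $\Gamma\backslash\operatorname{PGL}(2,\mathbb{Q}_p)/K$, where $\Gamma$ is an arithmetic subgroup derived from the quaternion order and $K$ is a maximal compact. This double coset space is (a finite quotient of) the Bruhat--Tits tree of $\operatorname{PGL}(2,\mathbb{Q}_p)$, a $(p+1)$-regular tree whose tree-adjacency operator has $L^2$-spectrum $[-2\sqrt p,2\sqrt p]$. Via the Jacquet--Langlands correspondence, non-trivial eigenfunctions of $T_p$ on $\Gamma\backslash\operatorname{PGL}(2,\mathbb{Q}_p)/K$ correspond to automorphic representations on $\operatorname{GL}_2$ over $\mathbb{Q}$ coming from holomorphic weight-$2$ modular forms, and Eichler's theorem (a special case of Deligne's proof of the Ramanujan--Petersson conjecture) bounds their Hecke eigenvalues by $2\sqrt p$. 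The trivial representation contributes the eigenvalue $p+1$, so all other eigenvalues satisfy $|\lambda|\le 2\sqrt p$, yielding the stated bound. This last step is unquestionably the main obstacle: the rest is bookkeeping, but the Ramanujan bound is imported wholesale from the theory of automorphic forms.
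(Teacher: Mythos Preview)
The paper does not supply its own proof of this theorem; it is quoted as a result from \cite{lubotzky1988ramanujan}, with a pointer to \cite{davidoff2003elementary} for further detail, and is used as a black box in the construction of \Cref{sec:explicit}. There is therefore nothing in the paper to compare your argument against.

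That said, your outline is an accurate sketch of the original Lubotzky--Phillips--Sarnak proof: the identification of $S_{p,q}$ with norm-$p$ integral quaternions modulo units, the determinant/Legendre-symbol dichotomy governing whether the image lies in $\operatorname{PSL}(2,q)$ or only in $\operatorname{PGL}(2,q)$, and the derivation of the spectral bound by identifying the adjacency operator with a Hecke operator and invoking Eichler--Deligne via Jacquet--Langlands are exactly the ingredients of the published argument. One small point: the Ramanujan--Petersson bound as you state it gives $|\lambda|\le 2\sqrt p$, whereas the paper records the strict inequality $\lambda_2 < 2\sqrt p$; the strictness follows because equality would force the Satake parameters to be $\pm\sqrt p$, making the Hecke eigenvalue $\pm 2\sqrt p$, which is irrational and hence cannot occur for an integer-valued eigenvalue. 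This is a detail you may want to add, but the substance of your sketch is correct.
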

For more information on LPS-expanders see~\cite{davidoff2003elementary}.

\subsubsection{Hyperbolic tessellations}
Expander graphs can also be generated by purely geometric means, namely by considering regular tessellations of hyperbolic space.
In particular, there exist infinitely many regular tessellations of the hyperbolic plane~$\mathbb{H}^2$.
A regular tessellation is an edge-to-edge covering of the plane by regular polygons.
They are completely specified by the number of sides of the polygon~$r$ and the number of polygons meeting at each vertex~$s$.
This data is captured in the \emph{Schl\"afli symbol}~$\{r,s\}$.

It is possible, using the symmetry group~$G_{r,s}$ of a regular tessellation~$\{r,s\}$, to construct families of closed hyperbolic surfaces of increasing size.
From this we can obtain infinite families of graphs~$\{X^i\}$ by ignoring the faces or, in mathematical terms, by taking their the 1-skeleton~$(X^i)_{\leq 1}$.
The hyperbolic surfaces are constructed by factoring out normal subgroups~$N$ of the symmetry group~$G_{r,s}$.
The details of this procedure are outlined in Appendix~\ref{sec:hyperbolic}.

It was proven in~\cite{conder2020constructing} (using results from \cite{salehi2017super}) that all hyperbolic regular tessellations $\{r,s\}$ give families of $s$-regular expander graphs.
Unfortunately, we are not aware of any explicit bounds on~$\lambda_2$.

\begin{figure}
	\centering		
	\includegraphics[angle=180,origin=c,width=0.7\linewidth]{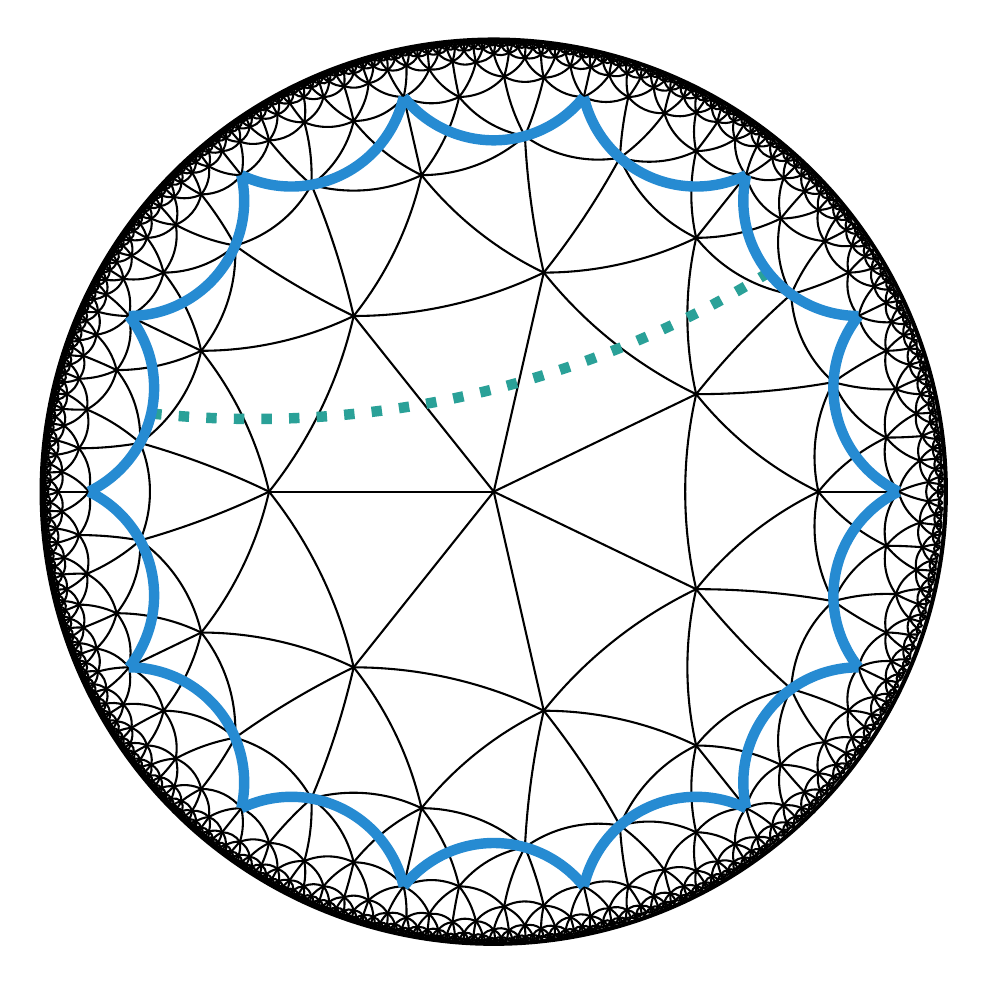}
	\caption{A hyperbolic genus 3 surface supporting the $\{3,7\}$ tessellation. It is called the Klein quartic.
		The dotted line indicates the translation~$\tau = (\rho\, \sigma^{-2})^4$ under which points are being identified.
		The symmetry group of the surface is $\operatorname{Isom}(\mathbb{H}^2/N_\tau)=\operatorname{PSL}(2,7)\rtimes \mathbb{Z}_2$.
	}
	\label{fig:klein_quartic}
\end{figure}

One method of obtaining normal subgroups is by taking a matrix representation of~$G_{r,s}$ and reducing the matrix coefficients modulo a prime number.
This induces a group homomorphism~$\pi_p$.
Since the kernel of any group homomorphism is normal we can consider $N_p = \ker \pi_p$ as candidates.
In the context of quantum codes this procedure was used in~\cite{guth2014quantum,breuckmann2020single} in order to construct codes based on compact four-dimensional hyperbolic manifolds.

The group of orientation preserving symmetries of a regular tessellation $G_{r,s}^+ = \left\langle \rho, \sigma \mid \rho^{r}, \sigma^s, (\rho \sigma)^2 \right\rangle$ is generated by the order-$r$ rotation around a face and the order-$s$ rotation around a vertex.
It naturally embeds into the real projective linear group (cf.~\Cref{eqn:triangle_group_in_PSL} in Appendix~\ref{sec:hyperbolic}).
One representation of $G_{r,s}^+$ in $\operatorname{SL}(2,\mathbb{R})$ is due to Magnus~\cite{magnus1974noneuclidean}.
It is obtained by mapping
\begin{align*}
\begin{split}
\rho &\mapsto R = \frac{i}{\sin(\pi/s)} \begin{bmatrix}
\cos(\pi/r)\, e^{i\pi/s} &  \xi\, e^{-i\pi/s} \\
- \xi\, e^{i\pi/s} & - \cos(\pi/r)\, e^{-i\pi/s}
\end{bmatrix}
\end{split}
\end{align*}
where $\xi = \sqrt{\cos(\pi/r)^2 - \sin(\pi/s)^2}$ and
\begin{align*}
\sigma &\mapsto S = \begin{bmatrix}
e^{i\pi/s}  & 0 \\
0 &  e^{-i\pi/s}
\end{bmatrix} .
\end{align*}
The corresponding elements $\bar{R}$ and $\bar{S}$ in $\operatorname{PSL}(2,\mathbb{R})$ fulfill exactly the same relations as~$\rho$ and~$\sigma$ so that $G_{r,s}^+ \simeq \langle \bar{R}, \bar{S}  \rangle$.

We now observe that for certain choices of~$\{r,s\}$ the entries of~$R$ and~$S$ are algebraic integers, i.e. they are roots of normed polynomials with integer coefficients.\footnote{For example, for the Magnus representation this is the case whenever the ratio~$\cos(\pi/r)/\sin(\pi/s)$ is an algebraic integer.}
In that case we can adjoin the missing roots to $\mathbb{Z}$ giving a ring~$A$ and obtain $G_{r,s}^+ \leq \operatorname{PSL}(2,A)$.
Reducing~$A$ modulo a prime~$p$ via $\pi_p$ maps the matrix coefficients into a finite field~$\mathbb{F}_{p^m}$.
Hence, we obtain a subgroup of~$\operatorname{PSL}(2,p^m)$.

It turns out that there are cases where this mapping~$\pi_p : G_{r,s}^+ \rightarrow \operatorname{PSL}(2,p^\nu)$ is an epimorphism so that in fact~$\operatorname{im} \pi_p \simeq \operatorname{PSL}(2,p^\nu)$.
Additionally, in order for~$\operatorname{PSL}(2,p^\nu)$ to be the symmetry group of a compactified surface we need that the corresponding $\ker \pi_p$ is torsion-free.
The epimorphic images of~$G_{r,s}^+$ onto~$\operatorname{PSL}(2,q)$ with torsion-free kernel are classified by the following theorem.\footnote{The results of Langer--Rosenberger in~\cite{langer1989erzeugende} are more general, we only quote the part which is relevant for us here.}
It is in fact more general, as it also covers cases where there are epimorphic images of~$G_{r,s}^+$ onto~$\operatorname{PGL}(2,q)$ with torsion-free kernel.

We define~$\mu(p,n)$ to be the smallest~$\nu \in \mathbb{N}$ such that
\begin{align*}
p^\nu \equiv \pm 1 \begin{cases}
\operatorname{mod} n, &\text{if } 2 \nmid n \\
\operatorname{mod} 2n, &\text{if } 2 \mid n .
\end{cases} 
\end{align*}
We extend this definition to any sequence of integers~$n_1,\dotsc,n_w\in \mathbb{N}$ either equal or coprime to~$p$ by defining~$\mu(p,n_1,\dotsc,n_w)$ as the least common multiple of~$\mu(p,n_1),\dotsc,\mu(p,n_w)$.
For some tuple $(n',m',l')\in \mathbb{N}^3$ define the following two conditions:
\begin{itemize}
\item[(A)] $m'=2$, $l'$ and $\nu = \mu(p,n',l')$ are even, $\mu(p,m')$ and $\mu(p,l')$ do not divide $\nu /2$, $\mu(p,n')$ divides $\nu /2$ and $p^{\frac{\nu}{2}} \equiv \pm 1 \mod l'$.
\item[(B)] $m'$, $l'$ and $\nu = \mu(p,n',m',l')$ are even, $\mu(p,m')$ and $\mu(p,l')$ do not divide $\nu/2$, $\mu(p,n')$ divides $\nu /2$, $p^{\frac{\nu}{2}} \equiv \pm 1 \mod m'$ and $p^{\frac{\nu}{2}} \equiv \pm 1 \mod l'$.
\end{itemize}
\begin{theorem}[Langer--Rosenberger~\cite{langer1989erzeugende}]
	Let $\{r,s\}$ be the Schl\"afli symbol of a hyperbolic tessellation and~$p \geq 3$ a prime with either a)  $p \nmid r s$, or b) $r=p$ and $p \nmid s$, or c) $s=p$ and~$p \nmid r$.
	
	The group $\operatorname{PSL}(2,p^{\nu})$, where $\nu = \mu(p,2,r,s)$, is the epimorphic image of~$G_{r,s}^+$ with torsion-free kernel if either
	\begin{enumerate}
		\item $r$ and $s$ are odd, or
		\item $r$ or $s$ (or both) are even and there exists no permutation $(n',m',l')$ of $(2,r,s)$ such that condition (A) or (B) are fulfilled.
	\end{enumerate}
	
	The group $\operatorname{PGL}(2,p^{\nu})$, where $\nu = \mu(p,2,r,s)/2$, is the epimorphic image of~$G_{r,s}^+$ with torsion-free kernel if $r$ or $s$ (or both) are even and there exists a permutation $(n',m',l')$ of $(2,r,s)$ such that condition~(A) or~(B) is fulfilled.
\end{theorem}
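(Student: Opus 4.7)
The plan is to leverage the Magnus representation $G_{r,s}^+ \hookrightarrow \operatorname{PSL}(2,\mathbb{R})$ recalled just above the theorem. First I would identify a ring $A$ of algebraic integers in the real cyclotomic/quadratic extension $\mathbb{Z}[\zeta_{2r},\zeta_{2s},\xi]$ large enough to contain all entries of $\bar R$ and $\bar S$, so that the triangle group embeds into $\operatorname{PSL}(2,A)$. Given a rational prime $p$ satisfying the dividing/non-dividing hypotheses in (a)--(c), I choose a prime $\mathfrak{p}\mid p$ in $A$ and reduce modulo $\mathfrak{p}$, obtaining a homomorphism $\pi_p$ into $\operatorname{PSL}(2,\mathbb{F}_{p^\nu})$ for $\nu$ the residue degree. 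The crucial arithmetic identity is that $\nu=\mu(p,2,r,s)$ (resp.\ $\mu(p,2,r,s)/2$) is exactly the least extension degree over which primitive $2r$-th, $2s$-th and $4$-th roots of unity (resp.\ their image in a quadratic quotient) all exist; the $\pm 1$ in the definition of $\mu$ is what distinguishes the $\operatorname{PSL}$ case from the $\operatorname{PGL}$ case, because in $\operatorname{PSL}$ we only need eigenvalues up to overall sign.

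Having pinned down the target field, there are two parallel things to prove: surjectivity of $\pi_p$ and torsion-freeness of its kernel. For surjectivity I would invoke Dickson's classification of subgroups of $\operatorname{PSL}(2,q)$. The image $H=\operatorname{im}\pi_p$ contains elements whose orders divide $r$, $s$, $2$, and actually equal them once torsion-freeness is established. For the given $\nu$ these orders are too large to fit inside any of the proper Dickson subgroups (dihedral, $A_4$, $S_4$, $A_5$, Borel, or a $\operatorname{PSL}(2,p^{\nu'})$/$\operatorname{PGL}(2,p^{\nu'/2})$ with $\nu'\mid\nu$ strict), except precisely in the cases isolated by conditions (A) and (B): these single out when a smaller $p^{\nu/2}\equiv\pm 1$ congruence already produces two of the three required roots of unity over $\mathbb{F}_{p^{\nu/2}}$, forcing $H$ into $\operatorname{PGL}(2,p^{\nu/2})$.

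For the torsion-free kernel, the standard reduction is that every torsion element of a triangle group is conjugate to a power of $\rho$, $\sigma$ or $\rho\sigma$, so it suffices to check that $\pi_p(\bar R)$, $\pi_p(\bar S)$, $\pi_p(\bar R\bar S)$ have orders exactly $r$, $s$, $2$ in $\operatorname{PSL}(2,\mathbb{F}_{p^\nu})$. This is a trace computation: the eigenvalues of each generator in characteristic zero are specific roots of unity, and the minimality of $\nu=\mu(p,2,r,s)$ guarantees that after reduction their orders are preserved (no power collapses to $\pm I$). The $\operatorname{PGL}$ vs.\ $\operatorname{PSL}$ dichotomy is then settled by a determinant/square-class check: $\pi_p(\bar R\bar S)$ and $\pi_p(\bar S)$ lift to $\operatorname{GL}(2,\mathbb{F}_{p^\nu})$ with determinants whose squareness in $\mathbb{F}_{p^\nu}^\times$ is equivalent to the congruences $p^{\nu/2}\equiv\pm 1\pmod{m'}$, $p^{\nu/2}\equiv\pm 1\pmod{l'}$ appearing in (A) and (B).

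The main obstacle, I expect, is the bookkeeping in the $\operatorname{PGL}$ case, i.e.\ proving the converse direction of (A)/(B). One must simultaneously (i) verify that when the arithmetic conditions hold, the image really does land in a specific $\operatorname{PGL}(2,p^{\nu/2})\subset \operatorname{PSL}(2,p^\nu)$ and not in a still smaller Dickson subgroup, and (ii) verify that outside these conditions no such collapse happens. This requires tracking how the three cyclotomic fields $\mathbb{Q}(\zeta_4)$, $\mathbb{Q}(\zeta_r)$, $\mathbb{Q}(\zeta_s)$ interact modulo $p$, and is where the permutation $(n',m',l')$ of $(2,r,s)$ in conditions~(A) and~(B) becomes essential; a case distinction on which of $r$, $s$ is even and on the parity of $\mu(p,\cdot)$ values, together with Dickson's list, completes the classification.
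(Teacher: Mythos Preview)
The paper does not prove this theorem; it is quoted as a classification result from Langer--Rosenberger~\cite{langer1989erzeugende} and used as a black box. There is therefore nothing in the present paper for your sketch to be compared against.

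That said, your outline is broadly the right shape for how such results are established: realize $G_{r,s}^+$ inside $\operatorname{PSL}(2,A)$ for a suitable ring of algebraic integers, reduce modulo a prime above $p$, use that torsion in a triangle group is conjugate into $\langle\rho\rangle$, $\langle\sigma\rangle$ or $\langle\rho\sigma\rangle$ to reduce torsion-freeness of the kernel to order computations for the images of the three generators, and then appeal to Dickson's subgroup classification of $\operatorname{PSL}(2,q)$ to force the image to be all of $\operatorname{PSL}(2,p^\nu)$ or $\operatorname{PGL}(2,p^{\nu/2})$. The identification of $\nu$ with $\mu(p,2,r,s)$ via the minimal field containing the relevant roots of unity, and the interpretation of the $\pm 1$ in the definition of $\mu$ as the $\operatorname{PSL}$/$\operatorname{PGL}$ ambiguity, are also on target. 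One point to be careful about: the Magnus matrices in the paper have entries involving $\xi=\sqrt{\cos(\pi/r)^2-\sin(\pi/s)^2}$ and $i/\sin(\pi/s)$, which are not in general algebraic \emph{integers}, so the ``ring $A$ of algebraic integers'' you propose will not literally contain these entries; one must either conjugate to an integral model or work with a localization, and the divisibility hypotheses (a)--(c) on $p$ are precisely what make the reduction well-defined. The delicate case analysis you anticipate in conditions (A) and (B) is indeed the substance of the original Langer--Rosenberger argument.
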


A famous example is the \emph{Klein quartic} shown in \Cref{fig:klein_quartic}.
It is a hyperbolic surface of genus~3 which is tessellated by~$\{3,7\}$ and whose orientation-preserving symmetries form the group~$\operatorname{PSL}(2,7)$.
The kernel of the reduction is isomorphic to $N^+_\tau \leq G_{3,7}^+$ which is generated by all conjugates of the translation $\tau = (\rho\, \sigma^{-2})^4$ (cf. \Cref{eqn:compactification_translation}).

The discussion above indicates that hyperbolic expanders are related to Cayley graphs $\operatorname{Cay}\left(\operatorname{PSL(2,q)},\lbrace \bar{R},\bar{S},\bar{R}^{-1}, \bar{S}^{-1} \rbrace\right)$.
This relation is made precise in~\cite{conder2020constructing} by defining a quasi-isometry between them.

\begin{figure}
	\centering
	\includegraphics[width=0.85\linewidth]{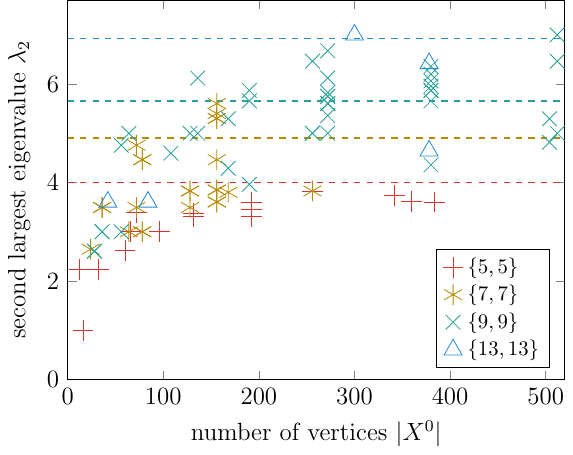}
	\caption{The second largest eigenvalues of hyperbolic surfaces of small size labeled by their Schl\"afli symbol $\{r,s\}$. The dashed lines indicate the asymptotic lower bound~$2\sqrt{s-1}$ of \Cref{thm:alon-boppana}.
	}
	\label{fig:hyperbolic_spectrum}
\end{figure}

\subsection{Local Codes}
In order to construct good expander codes it suffices for the local codes to satisfy the constraints given by \Cref{eqn:sipser_spielman_k} and \Cref{eqn:expandercode_local_distance_demand}, namely that it has to have rate $k_L/2 > 1/2$ and distance larger than the second eigenvalue of the graph $d_L > \lambda_2$.

\subsubsection{Goppa Codes}
A large family of codes which provides suitable candidates for the local codes~$L$ are \emph{Goppa codes}.
We consider their binary version here:
Let $m$ be a positive integer and let $g\in \mathbb{F}_{2^m}[x]$ be a polynomial of degree~$t$.
Furthermore, let $\gamma_1,\dotsc,\gamma_s\in \mathbb{F}_{2^m}$ be chosen such that $g(\gamma_i)\neq 0$ for all~$i$.
A vector $c = (c_1,\dotsc,c_s) \in \mathbb{F}_2^s$ is a code word of the Goppa code~$\operatorname{GC}(g,\{\gamma_i\})$ if and only if
\begin{align*}
	\sum_{i=1}^{s} \frac{c_i}{x-\gamma_i} \equiv 0 \mod g .
\end{align*}
Note that we restricted the code to be over~$\mathbb{F}_2$ which naturally embeds into~$\mathbb{F}_{2^m}$.
It can be shown that these codes come with the following bounds.
\begin{theorem}
The (binary) Goppa code~$\operatorname{GC}(g,\{\gamma_i\})$ defined above encodes~$k_{\operatorname{GC}} \geq s-mt$ many bits and has distance~$d_{\operatorname{GC}} \geq 2t+1$.
\end{theorem}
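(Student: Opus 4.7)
The plan is to prove the two bounds separately, along classical lines.

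For the dimension bound $k_{\operatorname{GC}} \geq s - mt$, I would observe that the defining condition $\sum_i c_i/(x-\gamma_i) \equiv 0 \pmod g$ is $\mathbb{F}_2$-linear in $c \in \mathbb{F}_2^s$ and takes values in the quotient ring $\mathbb{F}_{2^m}[x]/(g)$. This quotient is an $\mathbb{F}_{2^m}$-vector space of dimension $t = \deg g$, hence an $\mathbb{F}_2$-vector space of dimension $mt$. The code is therefore the kernel of an $\mathbb{F}_2$-linear map $\mathbb{F}_2^s \to \mathbb{F}_2^{mt}$, and rank--nullity gives $\dim \ker \geq s - mt$.

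For the distance bound, suppose $c$ is a nonzero codeword of weight $w$ supported on $I = \{i : c_i = 1\}$, and introduce the error locator polynomial $\sigma(x) = \prod_{i \in I}(x-\gamma_i) \in \mathbb{F}_{2^m}[x]$. Partial fractions gives $\sum_{i \in I} 1/(x-\gamma_i) = \sigma'(x)/\sigma(x)$, so the Goppa condition becomes $\sigma'/\sigma \equiv 0 \pmod g$. Since $g(\gamma_i) \neq 0$ for all $i$, we have $\gcd(\sigma,g) = 1$, so this is equivalent to $g \mid \sigma'$ in $\mathbb{F}_{2^m}[x]$. Already this bounds $\deg \sigma' \geq t$, i.e.\ $w \geq t + 1$, but the factor of two in $2t+1$ requires one additional input.

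That input is the characteristic-$2$ identity that $\sigma'$ has only even-degree terms (since $k a_k x^{k-1} = 0$ whenever $k$ is even). Perfectness of $\mathbb{F}_{2^m}$ then provides $\tau \in \mathbb{F}_{2^m}[x]$ with $\sigma' = \tau^2$ and $\deg \tau \leq (w-1)/2$. Hence $g \mid \tau^2$, and under the standard squarefree hypothesis on $g$ this upgrades to $g \mid \tau$, yielding $t \leq \deg \tau \leq (w-1)/2$, i.e.\ $w \geq 2t+1$.

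The main obstacle is precisely the step $g \mid \tau^2 \Rightarrow g \mid \tau$, which fails without squarefree-ness. In the classical literature this is addressed either by restricting to squarefree Goppa polynomials or by a short reduction that enlarges $g$ to its squarefree radical; I would handle it by invoking the corresponding textbook statement rather than reproducing the reduction in full.
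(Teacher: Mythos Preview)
The paper does not prove this theorem; it is quoted as a classical fact from coding theory with no argument given. Your sketch is the standard textbook proof (essentially the one in MacWilliams--Sloane or van Lint) and is correct as written. You are also right to flag the squarefree hypothesis: the implication $g\mid\tau^{2}\Rightarrow g\mid\tau$ genuinely requires it, and without it the bound drops to $d\geq t+1$. The paper's statement omits this hypothesis, but note that it is explicitly added in the very next theorem (Moreno--Moreno), and the standard references you would cite state $d\geq 2t+1$ only for $g$ with no repeated roots.
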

Hence, in order to satisfy the constraints \Cref{eqn:sipser_spielman_k} and \Cref{eqn:expandercode_local_distance_demand} we only have to find a suitable polynomial~$g$ and elements $\gamma_1,\dotsc,\gamma_s\in \mathbb{F}_{2^m}$.

It turns out that, using some deep results on exponential sums, it is possible to also bound the distance of the dual code of a Goppa code.
\begin{theorem}[Moreno--Moreno~\cite{moreno1991exponential}]
	Assume the polynomial~$g$ has~degree~$t$ and all roots of~$g$ are distinct.
	Let $Z$ be the set of roots of~$g$ and let $\{\gamma_i\}$ be the set $\mathbb{F}_{2^m} - Z$.
	 Then the distance~$d_{\operatorname{GC}^\perp}$ of the dual code $\operatorname{GC}(g,\{\gamma_i\})^\perp$ is bounded from below by
	\begin{align*}
		d_{\operatorname{GC}^\perp} \geq 2^{m-1} - \frac{|Z|-1}{2} - (t-1)\, 2^{m/2} .
	\end{align*}
\end{theorem}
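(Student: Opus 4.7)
The plan is to identify codewords of $\operatorname{GC}(g,\{\gamma_i\})^\perp$ with trace values of rational functions on $\mathbb{F}_{2^m}$, translate their Hamming weights into character sums, and then bound these sums using Weil's theorem for the associated Artin--Schreier covers.

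First, I would describe the dual code explicitly. A parity check matrix for $\operatorname{GC}(g,\{\gamma_i\})$ over $\mathbb{F}_{2^m}$ has rows $(\gamma_i^j/g(\gamma_i))_i$ for $j=0,\dotsc,t-1$, so the $\mathbb{F}_2$-dual of the subfield subcode is generated by the traces of these rows. Consequently every $c\in\operatorname{GC}^\perp$ has the form
$$c_i=\operatorname{Tr}_{\mathbb{F}_{2^m}/\mathbb{F}_2}\!\bigl(h(\gamma_i)/g(\gamma_i)\bigr), \qquad i=1,\dotsc,s,$$
for some $h\in\mathbb{F}_{2^m}[x]$ with $\deg h<t$. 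Second, I would translate the Hamming weight of $c$ into an exponential sum. Let $\psi(\alpha)=(-1)^{\operatorname{Tr}(\alpha)}$ be the canonical additive character of $\mathbb{F}_{2^m}$. Since the indicator of $c_i=1$ equals $\tfrac12(1-\psi(h(\gamma_i)/g(\gamma_i)))$, summing over $i$ gives
$$|c|=\frac{s}{2}-\frac{1}{2}\sum_{\gamma\in\mathbb{F}_{2^m}\setminus Z}\psi\!\bigl(h(\gamma)/g(\gamma)\bigr),$$
where $s=2^m-|Z|$.

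Third, I would invoke Weil's theorem to bound the character sum $S=\sum_\gamma \psi(h(\gamma)/g(\gamma))$. Because $g$ has $|Z|$ distinct roots and $\deg h<t=\deg g$, the rational function $h/g$ has at most $t$ simple finite poles and is regular at infinity. Up to normalization, $S$ is the error term in counting $\mathbb{F}_{2^m}$-points of the Artin--Schreier cover $y^2+y=h(x)/g(x)$ above $\mathbb{F}_{2^m}\setminus Z$, whose genus is bounded linearly in $t$. Weil's bound then yields an estimate of the shape $|S|\leq 2(t-1)\cdot 2^{m/2}$, and substituting into the weight formula produces a lower bound of the claimed form $2^{m-1}-(|Z|-1)/2-(t-1)\cdot 2^{m/2}$.

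The main obstacle is matching the precise constants appearing in the statement. A textbook application of Weil's theorem gives a bound of the shape $2 g_X \sqrt{2^m}$ where $g_X$ is the genus of the Artin--Schreier cover, but the coefficient $(|Z|-1)/2$ (rather than the crude $|Z|/2$) and the exact multiplier $(t-1)$ require a sharper analysis of the local contributions at the poles of $h/g$ together with the excluded set $Z$, accounting for possible cancellations and for the fact that $h/g$ may have some genuinely lower pole order than assumed. This refined bookkeeping is precisely what the Moreno--Moreno estimate provides, and the heart of the proof reduces to carrying out the genus calculation via Riemann--Hurwitz and the corresponding local analysis with care.
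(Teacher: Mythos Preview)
The paper does not prove this theorem; it merely quotes it from the cited reference~\cite{moreno1991exponential} as background on the existence of suitable local codes. So there is no ``paper's own proof'' to compare against.

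That said, your sketch is the correct strategy and is essentially the argument of Moreno--Moreno. Identifying $\operatorname{GC}^\perp$ with trace evaluations $\operatorname{Tr}(h(\gamma)/g(\gamma))$ via Delsarte's description of duals of subfield subcodes, rewriting the weight as $\tfrac{s}{2}-\tfrac{1}{2}\sum_\gamma\psi(h(\gamma)/g(\gamma))$, and then bounding the character sum by Weil's theorem for the Artin--Schreier cover $y^2+y=h(x)/g(x)$ is exactly how such dual-distance bounds are obtained. Your honest caveat about the constants is the right place to be cautious: the term $(|Z|-1)/2$ and the coefficient $(t-1)$ come from a careful count of the number of ramified places (each simple pole of $h/g$ contributes, and one must also track behaviour at infinity and at the excluded points in $Z$), together with the Riemann--Hurwitz computation of the genus. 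One small point worth tightening: you assert $h/g$ has ``at most $t$ simple finite poles'' and is ``regular at infinity,'' but for a non-zero $h$ with $\deg h<t$ some of the poles may cancel or have lower order, and this is precisely what governs whether the bound is sharp; the stated inequality is the worst case over all such $h$. Filling in that local analysis would complete the argument, but as a proof \emph{proposal} for a cited result this is entirely on target.
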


\subsubsection{Cyclic codes and canonical labelings}
To define the expander code~$C(X,L,\Lambda)$ we need to fix a labeling~$\Lambda$ of the edges around each vertex.
As the bounds do not depend on this labeling it can be chosen arbitrarily.
However, it would be desirable if the labeling is compatible with some symmetries of the graph around each vertex.

In \Cref{fig:basis_expander_code} we show an example of an expander code~$C(X,L)$ where~$X$ is obtained from a~$\{3,7\}$ tessellation of a genus-3 surface and the local code~$L$ is the~$[7,4,3]$ Hamming code which is cyclic.
Here there is a canonical choice for the labeling which respects the rotational symmetry around each vertex and of the cyclic symmetry of the local code.
With this labeling the automorphism group acts transitively on the checks.
Hence all checks are given by the orbit of a single check.
Such codes are called {\em single-orbit symmetric codes}, a notion first introduced by Kaufman--Wigderson in~\cite{kaufman2010symmetric}.

It would be desirable if we could construct a family of codes which has this property.
This is achieved by Kaufman--Lubotzky in the context of edge-transitive Ramanujan graphs~\cite{kaufman_lubotzky}.
In this work, it is shown that using the construction of LPS-expanders they can achieve a cyclic symmetry around each vertex.
Moreover they construct suitable cyclic codes with good properties and show that the resulting Tanner codes are single-orbit symmetry codes with linear number of encoded bits and linear distance.

A convenient choice are {\em BCH codes} which are a subclass of cyclic Goppa codes.
The polynomial~$g$ is chosen to be~$x^{t}$ and the $\gamma_i$ are the $i$-th powers  of a primitive $s$-th root of unity, see \cite[Example~8.2.6]{van2012introduction}.

However, there is a subtlety when combining BCH codes with LPS-expanders, namely that the degree of LPS-expanders is even but the number of bits in BCH codes is odd.
In order to overcome this the authors of~\cite{kaufman_lubotzky} apply the doubling process of~\cite{van1991repeated} to appropriately chosen BCH codes.
For hyperbolic expanders this problem disappears as there is no restriction on the degree and thus there is more flexibility in the choice of the cyclic code.

\begin{figure}
	\centering
	\begin{tabular}{ccc}
		\includegraphics[width=0.25\columnwidth]{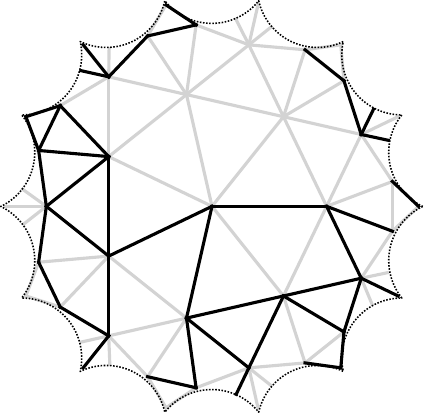} &   \includegraphics[width=0.25\columnwidth]{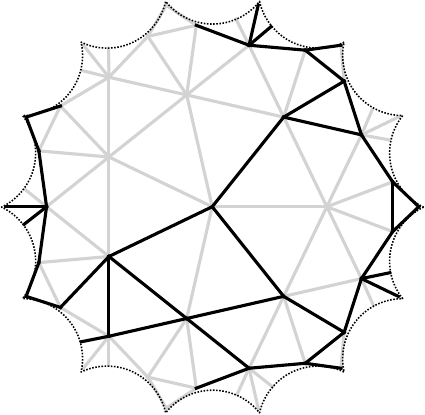} &   \includegraphics[width=0.25\columnwidth]{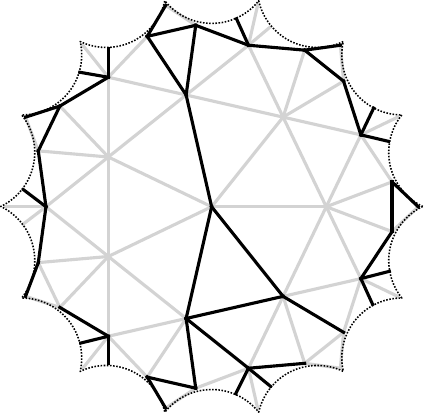} \\
		\includegraphics[width=0.25\columnwidth]{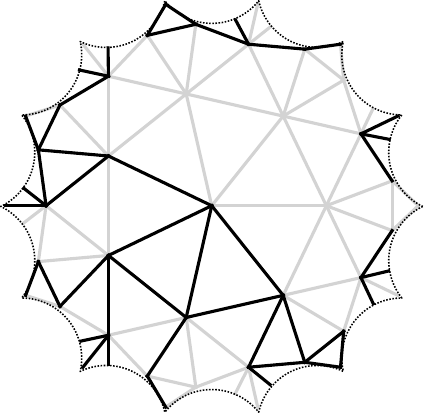} &   \includegraphics[width=0.25\columnwidth]{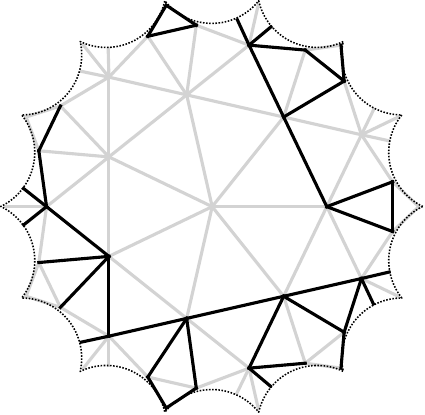} &   \includegraphics[width=0.25\columnwidth]{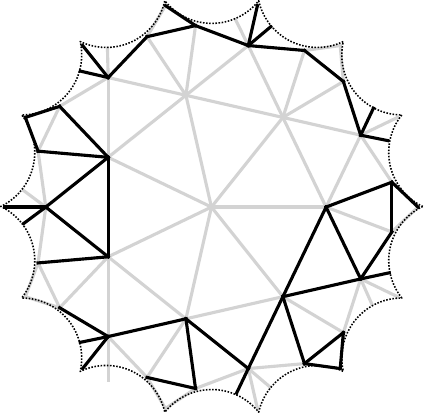} \\
		\includegraphics[width=0.25\columnwidth]{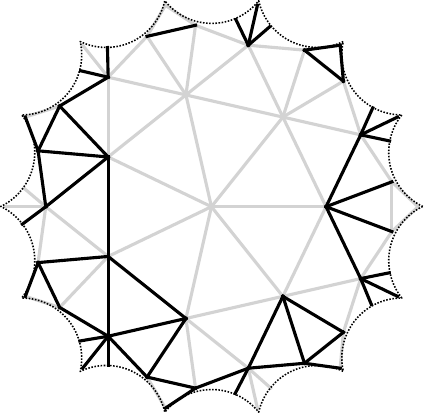} &   \includegraphics[width=0.25\columnwidth]{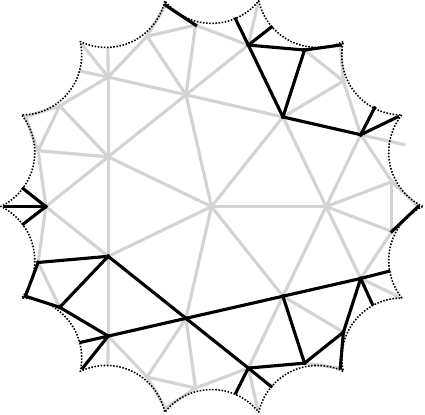} &   \includegraphics[width=0.25\columnwidth]{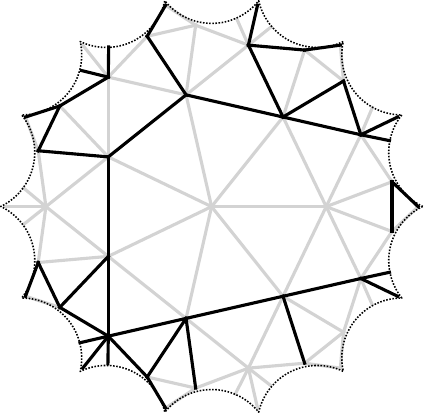} \\
		\includegraphics[width=0.25\columnwidth]{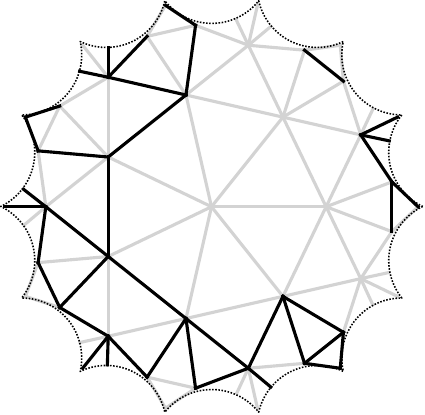} &   \includegraphics[width=0.25\columnwidth]{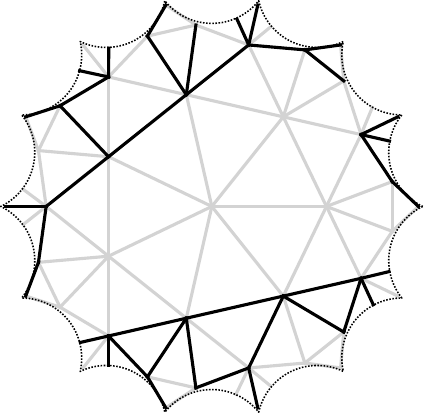} &   \includegraphics[width=0.25\columnwidth]{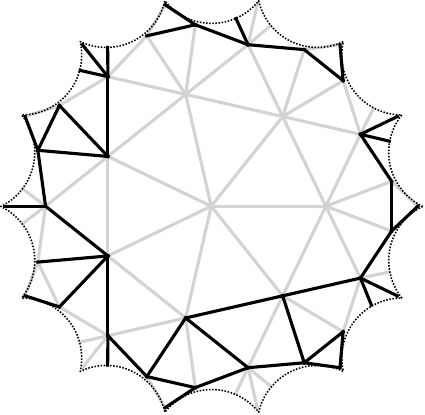} 
	\end{tabular}
	\caption{Basis of a $[84,12,19]$ hyperbolic expander code~$C(X,L)$.
		The graph~$X$ is derived from the~$\{3,7\}$ tessellation of the Klein quartic.
		The local code~$L$ is the $[7,4,3]$ Hamming code which is cyclic.
		Edges correspond to variables which are colored in black if their value is 1 and grey if their value is 0. Around each vertex is a code word of the Hamming code.}
	\label{fig:basis_expander_code}
\end{figure}

\subsubsection{Existence of good local codes}
We can give stronger bounds on the properties of the local codes if we do not have to give an explicit construction.
The following result, which extends the Gilbert-Varshamov bound by a bound for the distance of the dual code, was given in~\cite{panteleev_almostlinear} without a formal proof.
Here we spell out the argument in full detail.
Let $$H_2(\delta) = -\delta \log_2(\delta) - (1-\delta) \log_2 (1-\delta)$$ be the binary entropy function.\footnote{The binary entropy function~$H_2$ and the variable~$\delta$ should not be confused with the second homology class and the coboundary operator. We will only use the former in the context of \Cref{thm:gilbertvarshamovplus}.}
\begin{theorem}[Gilbert--Varshamov+]\label{thm:gilbertvarshamovplus}
	Choose $\delta \in (0,0.11)$. 
	Then for all $n>2/(1/2 - H_2(\delta))$ there exists a binary, linear code~$C$ with number of encoded bits $k_C > n/2$, distance $d_C \geq \delta n$ and dual code~$C^\perp$ with distance $d_{C^\perp} \geq \delta n$.
\end{theorem}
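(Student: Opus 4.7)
The plan is a probabilistic argument on random parity check matrices, simultaneously controlling the distance of the code and of its dual via a union bound. Fix $\delta \in (0,0.11)$, and note that the hypothesis $\delta < 0.11$ is precisely what guarantees $H_2(\delta) < 1/2$, so that $\eta := 1/2 - H_2(\delta) > 0$.

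First I would choose $m \in \mathbb{N}$ with $m < n/2$ (concretely, $m = \lfloor n/2 \rfloor - 1$) and sample $H \in \mathbb{F}_2^{m\times n}$ uniformly at random. Set $C = \ker H$, so that $\dim C \geq n - m > n/2$, giving $k_C > n/2$ deterministically. The dual $C^\perp$ is the row span $\{yH : y \in \mathbb{F}_2^m\}$. For the distance of $C$, observe that for any fixed nonzero $x \in \mathbb{F}_2^n$ the vector $Hx$ is uniformly distributed in $\mathbb{F}_2^m$, so $\Pr[x \in C] = 2^{-m}$. Using the standard volume bound $\sum_{w<\delta n}\binom{n}{w} \le 2^{nH_2(\delta)}$ (valid since $\delta < 1/2$) and a union bound,
\begin{equation*}
P_1 := \Pr\!\left[d_C < \delta n\right] \le 2^{nH_2(\delta) - m}.
\end{equation*}

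For the dual distance, for each fixed nonzero $y \in \mathbb{F}_2^m$ the vector $yH \in \mathbb{F}_2^n$ is uniformly distributed (since the columns of $H$ are independent uniform vectors in $\mathbb{F}_2^m$, any nonzero linear combination of rows is uniform on $\mathbb{F}_2^n$). Thus $\Pr[|yH|<\delta n] \le 2^{nH_2(\delta) - n}$, and union-bounding over the $2^m-1$ nonzero $y$ gives
\begin{equation*}
P_2 := \Pr\!\left[d_{C^\perp} < \delta n\right] \le 2^{m - n(1 - H_2(\delta))}.
\end{equation*}
With $m = \lfloor n/2\rfloor - 1$, both exponents are at most $-n\eta + O(1)$, so $P_1 + P_2 < 1$ as soon as $n\eta > 2$, i.e.\ $n > 2/(1/2 - H_2(\delta))$. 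Hence there exists $H$ for which $C = \ker H$ simultaneously satisfies all three claims: $k_C > n/2$, $d_C \geq \delta n$, and $d_{C^\perp} \geq \delta n$.

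There is no real obstacle, since the argument is a direct double application of the Gilbert--Varshamov counting argument; the only mildly delicate point is book-keeping the constants so that the threshold comes out exactly as $2/(1/2 - H_2(\delta))$ rather than $1/(1/2 - H_2(\delta))$, which is handled by taking $m$ strictly less than $n/2$ and absorbing the resulting factor of $2$ into the sum $P_1+P_2$.
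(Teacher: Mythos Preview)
Your argument is correct and follows the same probabilistic (Gilbert--Varshamov style) union-bound strategy as the paper, but with a cleaner parametrisation. The paper samples a uniformly random \emph{generator} matrix $G\in\mathbb{F}_2^{k\times n}$, bounds the fraction of matrices with a low-weight row combination, and then passes to the fraction of \emph{codes} via the free $\operatorname{GL}(k,2)$-action on full-rank matrices; the dual-distance bound is obtained by applying the same count to dimension $n-k$ and invoking the bijection $C\mapsto C^\perp$. Your choice of a random \emph{parity-check} matrix $H\in\mathbb{F}_2^{m\times n}$ with $m=\lfloor n/2\rfloor-1$ sidesteps the orbit argument entirely: the condition $k_C>n/2$ holds automatically from $\dim\ker H\ge n-m$, and the code/dual distances are controlled directly by the two independent uniformity observations $Hx$ uniform in $\mathbb{F}_2^m$ and $yH$ uniform in $\mathbb{F}_2^n$. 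Both routes land on the same threshold $n>2/(1/2-H_2(\delta))$; yours is slightly more elementary, while the paper's orbit-counting phrasing makes explicit that the bound is on the fraction of codes rather than matrices.
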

\begin{proof}
	Fix $\delta \in (0,0.11)$.
	Let~$A$ be the fraction of matrices~$G$ in~$\mathbb{F}_2^{k\times n}$ such that there exists an $m\in \mathbb{F}_2^k-\{0\}$ with $|m G| < \delta n$.
	For fixed~$m$ let~$A_m$ be the fraction of matrices~$G$ in~$\mathbb{F}_2^{k\times n}$ such that $|m G| < \delta n$.
	Clearly, we have $$A \leq \sum_{m\in \mathbb{F}_2^k-\{0\}} A_m .$$
	Equivalently,~$A_m$ can be defined as 
	$$\sum_{\substack{y\in \mathbb{F}_2^n\\|y|\leq \delta n - 1}} A_{m,y}$$
	where $A_{m,y}$ of all fractions of matrices~$G$ in~$\mathbb{F}_2^{k\times n}$ with the property that $m G = y.$
	Clearly $A_{m,y}=2^{-n}$ independently of~$m$ and~$y$.
	Hence
	$$A_{m}=  \frac{|\operatorname{B}_0(\delta n -1)|}{2^n}$$
where  $\operatorname{B}_0(\delta n -1)$ denotes the set of all $y\in \mathbb{F}_2^n$ with $|y|\leq \delta n - 1.$ 	We can bound $$|\operatorname{B}_0(\delta n -1)| \leq 2^{H_2(\delta) n}$$ and obtain
	$$A \leq 2^{k-(1-H_2(\delta))n} .$$
	
	Now, let us consider the fraction~$A'$ of \emph{full-rank} matrices in $G\in\mathbb{F}_2^{k\times n}$ such that there exists an $m\in \mathbb{F}_2^k-\{0\}$ with $|m G| < \delta n$.
	Since any matrix which is not of full rank contributes to~$A$, we have $$A' \leq A \leq 2^{k-(1-H_2(\delta))n}.$$
	The group of invertible $\mathbb{F}_2^{k\times k}$-matrices~$\operatorname{GL}(k,2)$ operates {\em freely} from the left on the matrices~$G$ of full rank, so that all of the orbits have the same size.
	The orbits actually correspond to the different $k$-dimensional subspaces of $\mathbb{F}_2^{n}$, so the set of orbits is in fact the set of all codes of block length~$n$.
	In particular, each representative of an orbit has the same distance.
	Hence, the fraction of codes with distance smaller than~$\delta n$ is equal to~$A'$.
	
	Similarly, denote by $B'$ the fraction of subspaces of dimension $n-k$ of maximum distance smaller than $\delta n$.
	By the same arguments as before we can show that $$B' \leq 2^{(n-k)-(1-H_2(\delta))n}.$$	
	
	Since duality $C\mapsto C^\perp$ is a bijection between codes with~$k$, respectively $n-k$ encoded bits~$B'$ is also the fraction of codes $C$ with $k$ encoded bits for which $d_{C^{\perp}} < \delta n.$
	Now, let $D'$ be the fraction of codes $C$ with $k$ encoded bits such that their distance $d_{C} < \delta n$ or their dual distance $d_{C^{\perp}} < \delta n.$ Then $D' \leq A' + B'$.
	We are interested in the case $k>n/2$ which implies $k > n-k$ so that $D' \leq 2^{k-(1-H_2(\delta))n +1}$.
	If the right hand side is smaller than 1 we are done since then we know there must exist a code with the desired properties.
	This condition is equivalent to $k - (1-H_2(\delta))n+1 < 0$.
	Since we are looking for $k$ to be an integer between $n/2$ and $(1-H_2(\delta))n-1$, such $k$ exists if $(1/2 - H_2(\delta))n > 2.$ 		Now a direct calculation shows that $1/2 > H_2(\delta)$ if $\delta \in (0,0.11)$ and the statement follows.
\end{proof}

\section{Quantum Codes from Balanced Products}\label{sec:balanced_products}

In this section we introduce the general framework of balanced product codes.
The balanced product is a notion from topology that out of two spaces, which share a common symmetry, constructs a new space.
We sketch this topological construction and explain its relation to fiber bundles. We then explain how balanced products can be discretized in the setting of graphs. By doing so we introduce several concepts, such as ``trivializations'' and ``twist functions'' in the setting of graphs.
Having done this, we introduce a notion of balanced product of vector spaces and chain complexes, thus yielding a general definition of balanced product codes. Here, we derive a Künneth formula and compare balanced products to  Hastings--Haah--O'Donnell's fiber bundle codes~\cite{hastings2020fiber} and lifted product codes of Panteleev--Kalachev~\cite{panteleev_almostlinear}. 
We discuss balanced products of expander codes and (cyclic) graphs in detail. 
We then introduce a subsystem version of balanced product codes. This is necessary to control certain linear dependencies of checks in expander codes which could potentially lead to unwanted logical operators of low weight.
At the end of the section, we adapt Panteleev--Kalachev's distance theorems for lifted product codes~\cite{panteleev_almostlinear} to our setting.

\subsection{Motivation from Topology}\label{sec:motivationfromtopology}
Our construction of codes is strongly motivated by principal bundles and associated fiber bundles, which we briefly recap now. Let $X$ be a topological space with a \emph{free} right action of a group $H.$ 
Assuming some technical conditions, the quotient map 
$$\pi: X\to X/H$$ is a principal $H$-bundle and in particular a fiber bundle with base $X/H$ and fiber $H.$

\subsubsection{Balanced Products} 
From the bundle~$(X,\pi)$ and any space $Y$ with a left action of $H$ one can construct a new fiber bundle over $X/H$ and with fiber $Y$ in the following way. 
Let~$H$ act anti-diagonally on the Cartesian product~$X\times Y$ via
\begin{align*}
	 (x,y) \cdot h = (x\cdot h,\, h^{-1}\cdot y).
\end{align*} 
The \emph{balanced product} is the set of equivalence classes ($H$-orbits) under the action above:
\begin{align*}
	X \times_{H} Y = (X\times Y)/H
\end{align*}
We denote the elements of $X \times_{H} Y$ with square brackets~$[x,y]$ to highlight that they are not tuples but equivalence classes under the action of~$H$.
Note that we can move a group element $h\in H$ from one side to the other $[x\cdot h, y] = [x, h\cdot y]$.

By projection we obtain a map
$$\pi_{Y}: X \times_{H} Y\to X/H, [x,y]\mapsto xH$$
which is the desired fiber bundle with fiber $Y.$
\subsubsection{Example: Klein Bottle} 

For example, let $X=S^{1}$ be a circle and $H=\mathbb{Z}_{2}$ act via rotation by half a turn. Then $X/H=S^{1}$ is also a circle and $\pi: X\to X/H$ a $2$-fold covering. Let $H$ act on the circle $Y=S^{1}$ by the antipodal map. 
Then the associated balanced product
$S^{1}\times_{\mathbb{Z}_{2}}S^{1}$
is a \emph{Klein bottle} which is a fiber bundle over the circle
$$\pi_{S^1}:S^{1}\times_{\mathbb{Z}_{2}}S^{1}\to S^{1}$$
with fiber $S^{1}.$

\subsubsection{Trivialisations of Bundles}
There is another, coordinatized, perspective on principal bundles and associated bundles which resembles the construction of fiber bundle codes by Hastings--Haah--O'Donnell more closely, see \Cref{sec:fiberbundlecodes}.

For this, choose a \emph{trivialisation} of the bundle $\pi: X\to X/H,$ that is, a covering $\{U_{i}\}$ of $X/H$ by open subsets and homeomorphisms  
$$\psi_{i}: \pi^{-1}(U_{i})\to U_{i}\times H$$
which are compatible with the action of $H.$
If two sets~$U_{i}$ and~$U_{j}$ intersect non-trivially, the trivialisation induces transition maps
$$\varphi_{i,j}: U_{i}\cap U_{j}\to H.$$
These are the continuous version of the \emph{connection} or \emph{twist}~$\varphi$ used in definition of fiber bundle codes by Hastings--Haah--O'Donnell (see \Cref{sec:fiberbundlecodes}).

The bundle $\pi: X\to X/H$ can be completely recovered from the data $\{U_{i}, \varphi_{i,j}\}$ by gluing the spaces $U_{i}\times H$ along the intersections $(U_{i}\cap U_{j})\times H$ using the transition maps $\varphi_{i,j}.$

Similarly, if $Y$ is a space with a left $H$-action, the data $\{U_{i}, \varphi_{i,j}\}$ allows to construct a fiber bundle with fiber~$Y$ by  gluing the spaces~$U_{i}\times Y$ using the action of~$\varphi_{i,j}(x)$ on~$Y$.
In fact, this construction agrees with the balanced product $X\times_{H}Y$ described above.

\subsection{Balanced Product and Bundles on Graphs}\label{sec:balancedproductsgraphs}
We now turn to a discrete version of the topological motivation from the last section.
For simplicity, we only consider one-dimensional cell complexes, i.e. graphs. 
However, the same ideas apply for cell complexes of arbitrary dimension.

\subsubsection{Quotient graph and trivialisation}
Let $X$ be a graph with a free action of a finite group $H$, that is, $H$ acts freely on vertices and edges. 
Further, we assume that there is no edge connecting two vertices of the form $v$ and $vh.$ 
We refer to this as the {\em quotient condition} (see \cite{breuckmann2020single}).
For convenience, we choose an $H$-invariant orientation of the graph $X$ and write $e=(v,v')$ for an oriented edge from $v$ to $v'.$ Such a choice is possible since $H$ acts freely on the edges of $X.$ 

The \emph{quotient graph} $X/H$ is obtained in the following way.
The vertices in $X/H$ are the $H$-orbits $\mathcal{V}$ of vertices in~$X$
$$(X/H)_{0}=\{\mathcal{V}\}.$$
The edges in $X/H$ are the $H$-orbits of $\mathcal{E}$ of edges in~$X$
$$(X/H)_{1}=\{\mathcal{E}\}.$$
Hence, two vertices $\mathcal{V}$, $\mathcal{V}'$ are adjacent in~$X/H$ if there are $v\in \mathcal{V}$ and $v'\in \mathcal{V}$ such that $v$ and $v'$ are adjacent in $X.$  
The orientation on $X$ induces an orientation of~$X/H$.

By the assumptions on the action of $H$ on $X$, the quotient map $$\pi: X\to X/H,\, v\mapsto vH$$
is a principal bundle and in particular an $|H|$-fold covering of~$X/H.$ If $X$ is $s$-regular, so is $X/H.$

 A \emph{trivialisation} of $\pi: X\to X/H$ is a choice of representative $R=\{v\in \mathcal{V}\}$ for each orbit. 
It induces a \emph{connection}~$\phi_{R}$
$$\phi_{R}: (X/H)_{1}\to H, \mathcal{E}=(\mathcal{V},\mathcal{V}')\mapsto \phi_{R}(\mathcal{E}).$$ 

Here, $\phi_{R}(\mathcal{E})$  is defined as follows. Denote by $v,v'\in R$ the representatives of $\mathcal{V},\mathcal{V}'.$ Then $\phi_{R}(\mathcal{E})$ denotes the unique element in $H$ such that $(v,v'\phi_{R}(\mathcal{E}))$ is an edge in $X.$
\begin{center}
\begin{tikzcd}
vh^{-1} \arrow[rd]    &                                               \\
v \arrow[rd]          & v' \arrow[d, "\phi_R(\mathcal{E})=h", dotted, bend left] \\
vh \arrow[rd]         & v'h                                            \\
                      & v'h^2                                          \\
\mathcal{V} \arrow[r,"\mathcal{E}"] & \mathcal{V}'                                  
\end{tikzcd}
\end{center}

By choosing an appropriate trivialisation $R$ it is always possible to \emph{locally} simplify $\phi_{R}$ such that, for example, $\phi_{R}(\mathcal{E})=1$ for all edges incident to a single fixed vertex. However, if $X$ is connected, it is not possible to find a trivialisation $R$ such that \emph{globally} $\phi_{R}=1$.

It is straightforward to see that one can recover the bundle $\pi:X\to X/H$ from $X/H$ and $\phi_{R}$ and we will make use of both perspectives.

\subsubsection{Balanced Products of Graphs}
From the fiber bundle $\pi: X\to X/H$ and another graph $Y$ with a left action of $H$ we can form the balanced product
$X\times_{H} Y=(X\times Y)/H$
which is the quotient of the Cartesian product complex $X\times Y$ by the anti-diagonal action of $H.$
The map $$\pi_{Y}:X\times_{H} Y\to X/H$$ has fibers $Y$ and can be reconstructed from $X/H$, the transition maps $\phi_{R}$ and $Y.$ 
An example is shown in \Cref{fig:balanced_product_torus}.
\begin{figure}[h]
    \centering
    \input{fig_balancedproduct.tex}
    \caption{The balanced product $X\times_H Y$ of the length-6 and length-3 cyclic graphs~$X$ and~$Y$ over the cyclic group $H=\mathbb{Z}_3$. 
    This gives a twisted $3\times 2$ tessellation of a torus. The balanced product is a fiber bundle $\pi_Y:X\times_H Y\to X/H$ over the quotient graph~$X/H,$ a length-2 cyclic graph, with fiber~$Y.$ 
    Grey edges and vertices are only included to visualize the periodic boundaries. White circles indicate the six faces.
    When interpreted as a homological quantum code, the picture visualizes the code $C(X\times_H Y)=C(X)\otimes_H C(Y)$ which is a twisted version of the toric code with parameters $[[12,2,3]].$ Here physical qubits correspond to edges, while $X$- and $Z$-checks correspond to vertices and faces.
    The code is also a fiber bundle code $C(X/H)\otimes_\phi C(Y)$ for an appropriate connection $\phi.$
    }
    \label{fig:balanced_product_torus}
\end{figure}

\subsection{Balanced Products of Vector Spaces}
The balanced product in topology has an analogue in linear algebra.
For vector spaces $V,W$ with a linear right, respectively left action, of $H$ the \emph{balanced product} of~$V$ and~$W$ is defined as the quotient
$$V\otimes_{H}W= V\otimes W/\langle vh\otimes w-v\otimes hw\rangle.$$
To differentiate tensors in $V\otimes W$ and $V\otimes_{H}W$ we denote the former by $v\otimes w$ and the latter by $[v\otimes w]$.
Note that $[vh\otimes w]=[v\otimes hw]$.

The balanced product can be understood as the tensor product 
$$V\otimes_{H}W=V\otimes_{\mathbb{F}_2[H]} W$$
over the group algebra 
\begin{align*}
\mathbb{F}_2 H = \left\lbrace \sum_{h\in H} a_h\, h \mid a_h \in \mathbb{F}_2 \right\rbrace.
\end{align*}
If $H$ is finite and of odd order, then averaging over all group elements defines an isomorphism
\begin{align*}
V\otimes_H W\stackrel{\sim}{\to} (V\otimes W)^H,\, [v\otimes w]\mapsto \sum_{h\in H} vh\otimes h^{-1}w
\end{align*}
with $(V\otimes W)^H$ being the set of elements in $V\otimes W$ that are invariant under the anti-diagonal action
$h\cdot(v\otimes w)=vh\otimes h^{-1}w.$

If $W=\mathbb{F}_2$ is equipped with the trivial action of $H,$ then
$$V\otimes_H\mathbb{F}_2=V/\langle vh-v\rangle = V_H$$
is the space of \emph{coinvariants} of $V$ under the action of $H.$

\subsection{Balanced Products of Chain Complexes}\label{sec:balancedproductcomplex}

\subsubsection{Definition}
We can extend the definition from vector spaces to chain complexes.
Let $C,D$ be chain complexes with a linear right, respectively left, action of $H$. 
By definition, this means that the action of~$H$ commutes with the differentials.
Similarly, as for the double complexes discussed in \Cref{sec:doublecomplexes} we can form the \emph{balanced product double complex}~$C\boxtimes_HD$ via
\begin{gather*}
(C\boxtimes_{H} D)_{p,q}=C_{p}\otimes_{H} D_{q},\\
\del^{v}=\del^{C}\otimes\id_{D} \text{ and } \del^{h}=\id_{C}\otimes\del^{D}.
\end{gather*}
Here, by abusing notation we denote $\del^{v}$ and $\del^{h}$ the induced differential on the quotients $C_{p}\otimes_{H} D_{q}$ of $C_{p}\otimes D_{q}.$

The \emph{balanced product complex} is the total complex of the balanced product double complex
\begin{gather*}
C\otimes_{H} D=\Tot(C\boxtimes_{H} D).
\end{gather*}
If the balanced product complex is equipped with a natural basis it defines quantum codes which we call \emph{balanced product codes}.
See \Cref{sec:stabilizerquantumcodes} on how to obtain quantum codes from complexes.

{\em Remark}---An important special case where the balanced product complex has a natural basis is when each of the vector spaces~$C_i$ and~$D_i$ are equipped with bases~$X_i$ and~$Y_i$ and the action of~$H$ restricts to an action on these bases.
In this case, we denote by $[x,y]=\{(xh,h^{-1}y) \mid h\in H\}$ the orbit of $(x,y)\in X_i\times Y_j$ under the anti-diagonal action of $H$ and by $X_i\times_H Y_j=\{[x,y] \mid x\in X_i, y\in Y_j\}$ the set of orbits, see \Cref{sec:motivationfromtopology}.
Then a natural basis of $(C\otimes_H D)_n$ is given by the disjoint union
$$(X\times_H Y)_n=\biguplus_{i+j=n} X_i\times_H Y_j$$
and the differential can be expressed in terms of basis vectors by
$$\del([x,y])=[\del(x),y]+[x,\del(y)]=\sum_{x'\in\del(x)}[x',y]+\sum_{y'\in\del(y)}[x,y'].$$
Hence, a matrix representation of the differentials~$\del$ can be obtained by choosing an ordering of the basis vectors.

If $H$ is the trivial group, the balanced product is just the tensor product.
Hence, balanced product complexes (codes) yield a generalization of the tensor (or hypergraph) product.

Assuming that $H$ is a finite group of odd order we obtain the following version of the Künneth formula.
\begin{lemma}\label{lem:kuennethbalanced}
Let $H$ be a finite group of odd order acting on~$C$ and~$D$ from the left and right, respectively.
There is an isomorphism
\begin{gather*}
H_{n}(C\otimes_{H} D) \cong\bigoplus_{p+q=n} H_{p}(C)\otimes_{H} H_{q}(D).
\end{gather*}
\end{lemma}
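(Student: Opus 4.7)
The plan is to reduce the statement to the ordinary Künneth formula over $\mathbb{F}_2$ by passing through $H$-invariants via the averaging isomorphism mentioned earlier in the paper. The essential input is that since $|H|$ is odd we have $|H|\equiv 1\pmod 2$, so $|H|$ is invertible in $\mathbb{F}_2$; hence Maschke's theorem applies and the group algebra $\mathbb{F}_2[H]$ is semisimple. This has two consequences we will use: every $\mathbb{F}_2[H]$-module is projective (in particular flat), and the invariants functor $(-)^H=\operatorname{Hom}_{\mathbb{F}_2[H]}(\mathbb{F}_2,-)$ is exact.

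First I would upgrade the averaging isomorphism recalled in the previous subsection from vector spaces to chain complexes. Concretely, the map
\[
 C_p\otimes_H D_q \;\xrightarrow{\sim}\; (C_p\otimes D_q)^H,\qquad [c\otimes d]\mapsto \sum_{h\in H} ch^{-1}\otimes hd,
\]
is $\mathbb{F}_2$-linear and natural in both arguments, and it commutes with the differentials $\partial^C\otimes\id$ and $\id\otimes\partial^D$ because the actions of $H$ on $C$ and $D$ commute with their differentials. Summing over the anti-diagonals one obtains an isomorphism of chain complexes
\[
 C\otimes_H D \;\cong\; (C\otimes D)^H,
\]
where $H$ acts diagonally on $C\otimes D$.

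Next I would compute homology on the right hand side. Because $(-)^H$ is exact on $\mathbb{F}_2[H]$-modules (by the semisimplicity of $\mathbb{F}_2[H]$), it commutes with kernels, images, and quotients, hence with the formation of homology:
\[
 H_n\bigl((C\otimes D)^H\bigr) \;\cong\; H_n(C\otimes D)^H.
\]
Now the classical Künneth formula over the field $\mathbb{F}_2$ gives a natural isomorphism
\[
 H_n(C\otimes D) \;\cong\; \bigoplus_{p+q=n} H_p(C)\otimes H_q(D),
\]
with no Tor terms since $\mathbb{F}_2$ is a field. This isomorphism is $H$-equivariant because it is natural and $H$ acts by chain maps on each factor.

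Finally I would take $H$-invariants of both sides. Invariants commute with finite direct sums, and applying the averaging isomorphism in reverse to each summand yields $(H_p(C)\otimes H_q(D))^H\cong H_p(C)\otimes_H H_q(D)$. Combining the three isomorphisms proves the claim. The only step that requires care is the semisimplicity input; everything else is formal. The main conceptual point, and the only place the odd-order hypothesis is used, is precisely the invertibility of $|H|$ in $\mathbb{F}_2$ which simultaneously makes averaging well-defined, makes the averaging map an isomorphism, and forces $(-)^H$ to be exact.
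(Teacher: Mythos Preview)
Your proof is correct and essentially identical to the paper's: both pass from coinvariants to invariants via the averaging isomorphism, argue that this functor commutes with taking homology, and then apply the ordinary K\"unneth formula over~$\mathbb{F}_2$. The only difference is cosmetic---you invoke Maschke's theorem and semisimplicity of $\mathbb{F}_2[H]$ to get exactness of $(-)^H$, whereas the paper phrases the same point as ``invariants commute with kernels and coinvariants commute with cokernels,'' so the functor (being naturally both at once when $|H|$ is odd) commutes with homology.
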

\begin{proof}
To see this, note that the complex $C\otimes_H D$ is obtained from the complex $C\otimes D$ by passing to the coinvariants under the anti-diagonal action of $H$.
Since $H$ is finite and of odd order, taking coinvariants is naturally equivalent with taking invariants.
Since taking invariants commutes with taking kernels and taking coinvariants commutes with taking cokernels, they commute with passing to homology.
The statement follows from the Künneth formula for tensor products.
\end{proof}

\subsubsection{Relation to fiber bundle codes}
In special cases the balanced product complex is an instance of a fiber bundle complex introduced by Hastings--Haah--O'Donnell, see  \Cref{sec:fiberbundlecodes}.
Namely, if $C$ is a two-term complex, the action of $H$ restricts to a free action on the bases of each $C_i$ and $H$ is abelian, then there is a connection $\varphi$ such that $$C\otimes_H D= B\otimes_\varphi F$$
where $B_i=C_i/\langle ch -c\rangle$ and $F=D.$ We skip the details here and will elaborate on this in a special case later on.

\subsubsection{Relation to lifted product codes}\label{sec:defliftedproducts}
If the complex $D$ also fulfills the conditions of $C$ from the last paragraph and $H$ is abelian, then the balanced product specializes to a so called \emph{lifted product} (LP) or \emph{generalized hypergraph product} (GHP)
$$C\otimes_H D= \operatorname{LP}(\del^C,\del^D)$$
introduced by Panteleev--Kalachev in~\cite{panteleev2019degenerate,panteleev_almostlinear}.
Since we will make use of some of their results, let us elaborate on this.
Denote by $R=\mathbb{F}_2 H$ the group algebra of~$H$. 
Using the free action of~$H$ on the bases of~$C_i$ and~$D_i$ one can write the complexes~$C$ and~$D$ in the form
\begin{center}
	\begin{tikzcd}[row sep=1pt]
	C=(R^n \arrow[r,"\del^C"] & R^m)\\
	D=(R^k \arrow[r,"\del^D"] & R^l)
	\end{tikzcd}
\end{center}
for suitable $n,m,k$ and $l.$ Hence,~$\del^C\in R^{n\times m}$ and~$\del^D\in R^{k\times l}$ can be identified with matrices with entries in the algebra~$R$. 
The lifted product code is defined in terms of the matrices
$$M_1=[\del_C\otimes I_k, I_n\otimes \del^D] \text{ and } M_2=\begin{bmatrix}
 I_m\otimes \del_D  \\
 \del^C\otimes I_l \\
\end{bmatrix}$$
where $\otimes$ denotes the Kronecker product of matrices over $R.$ The matrices $M_i$ have entries in the algebra~$R$.

The action of $R$ on itself by left multiplication (\emph{regular representation}) defines an embedding $R\subset \mathbb{F}_2^{|H|\times |H|}$. 
In the case that $H=\mathbb{Z}_\ell$ is the cyclic group, the matrices in the image of this embedding are called \emph{circulant matrices}.

One can replace the entries of the matrices in $M_i$ by the corresponding matrices in $\mathbb{F}_2^{|H|\times |H|}$. 
The resulting matrices form a chain complex over $\mathbb{F}_2$ denoted by $\operatorname{LP}(\del^C,\del^D).$
It is easy to see that this complex agrees with the balanced product complex $C\otimes_H D$.

\subsection{Construction of Quantum Codes from Tanner Codes and Graph Codes}
Now, we combine the ideas of Tanner codes introduced in \Cref{sec:expander} and balanced products of chain complexes from the last section.
For this, let $X$ be an $s$-regular graph with a free action of a group $H,$ that is, $H$ acts freely on the set of vertices and edges. Further we impose the quotient condition on the action of $H$ on $X$, that is, we assume that there are no edges connecting vertices of the form $v$ and $vh$, see \Cref{sec:balancedproductsgraphs}. Let~$L$ be a local $[s,k,d]$-code and $\Lambda=\{\Lambda_v\}_{v\in X_{0}}$ a labeling, see \Cref{sec:localsystems}. 
We choose $\Lambda$ in an $H$-invariant way, that is, we assume that
\begin{equation}\label{eq:invariantlabeling}
\Lambda_{vh}(eh)=\Lambda_{v}(e) .
\end{equation}
\subsubsection{Definition in terms of the balanced product}
We will consider the balanced product double complex
$C(X,L)\boxtimes_H C(Y)$
\begin{center}
\begin{tikzcd}
C_{1}(X)\otimes_H C_{1}(Y) \arrow[d, " \id\otimes\del"'] \arrow[r, "\del\otimes \id"] &(C_{0}(X)\otimes L_0)\otimes_H C_{1}(Y)\arrow[d, "\id\otimes\id\otimes\del"]  \\
 C_{1}(X)\otimes_H C_{0}(Y) \arrow[r, "\del\otimes \id"']          & (C_{0}(X)\otimes L_0)\otimes_H C_{0}(Y)
\end{tikzcd}
\end{center}
and the associated balanced product complex
\begin{gather*}
C(X,L)\otimes_H C(Y).
\end{gather*}
Here $C(X,L)=C(X,L,\Lambda)$ denotes the Tanner code as defined in \Cref{sec:localsystems}.

Our construction of quantum codes is based on this definition, using special choices of $H,$ $X$, $L$ and $Y$ later on. 

\emph{Remark}---In order to convert the above diagram into matrices we provide bases and show how they are mapped by the differentials. Choosing an ordering of the basis elements yields the parity check matrices of the CSS code.

The vector spaces $C_i(X)$ and $C_i(Y)$ have the canonical basis $X_i$ and $Y_i$, as discussed in \cref{sec:complexes}. Further, let $\mathcal{B}_{L_0}$ denote the basis of checks of the local code~$L$.
The three bases of the total complex are: $X_{1}\times_HY_1$, $X_{1}\times_HY_0\cup X_{0}\times_HY_1\times \mathcal{B}_{L_0}$ and $X_{0}\times_HY_0\times \mathcal{B}_{L_0}.$ We denote elements in $X_{1}\times_HY_1$ by $[e,e'],$ of $X_{1}\times_HY_0$ by $[e,v']$, of $X_{0}\times_HY_1\times \mathcal{B}_{L_0}$ by $([v,e'],c)$ and of $X_{0}\times_HY_0\times \mathcal{B}_{L_0}$ by $([v,e'],c).$ The square brackets denote orbits under the anti-diagonal action of $H.$ So, for example, $[e,e']$ denotes the orbit $\{(eh,h^{-1}e') \mid h\in H\}$, see \Cref{sec:motivationfromtopology}. 
For convenience we swapped the positions of the checks of~$L$ and the cells of~$Y$.
The first boundary is then given by
 $$\tilde{\del}_1([e,e'])=\sum_{v'\in \del e'} [e,v'] + \sum_{v\in \del e}\sum_{c\in \del^L \Lambda_{v}(e)} ([v,e'], c)$$
and the second by
\begin{align*}
    \tilde{\del}_0([e,v']) &= \sum_{v\in \del e}\sum_{c\in \del^L \Lambda_{v}(e)} ([v,v'], c)\\
    \tilde{\del}_0([v,e'],c) &= \sum_{v'\in \del e'} ([v,v'], c).
\end{align*}

The example in \Cref{fig:balanced_product_torus} visualizes the balanced product code of two repetition codes.
Each repetition code is associated to a cyclic graph, interpreting the edges as bits and the vertices with checks with support on their adjacent edges.
The result is a quantum code with weight-4 $X$- and $Z$-checks which are associated to the six vertices and six faces, respectively. 
It is in fact a twisted version of the toric code with distance 3, however, the number of edges/qubits is 12 instead of 18.

The construction above is a special case of the balanced product of two complexes, see \Cref{sec:balancedproductcomplex}. 
There are many interesting variants of these definitions, as for example the balanced product of two Tanner codes, which we will not discuss here.

\subsubsection{Definition in terms of  fiber bundles}\label{sec:fiberbundledef}
As in the previous section, there is also a coordinatized definition of $C(X,L)\boxtimes_H C(Y)$ and $C(X,L)\otimes_H C(Y).$ For this, note that $X/H$ is still an $s$-regular graph and by \Cref{eq:invariantlabeling} the labeling $\Lambda$ descends to $X/H.$ We can hence also consider the Tanner code
\begin{center}
	\begin{tikzcd}
	C(X/H,L)=(C_1(X/H) \arrow[r,"\del"] & C_0(X/H)\otimes L_0)
	\end{tikzcd}
\end{center}
and \emph{fiber bundle double complex} $C(X/H,L)\boxtimes_{\phi_R}C(Y)$  
\begin{center}
\begin{tikzcd}
C_{1}(X/H)\otimes C_{1}(Y) \arrow[d, " \id\otimes\del"'] \arrow[r, "\del_{\phi_R}"] &C_{0}(X/H)\otimes L_0\otimes C_{1}(Y)\arrow[d, "\id\otimes\id\otimes\del"]  \\
 C_{1}(X/H)\otimes C_{0}(Y) \arrow[r, "\del_{\phi_R}"']          & C_{0}(X/H)\otimes L_0\otimes C_{0}(Y)
\end{tikzcd}
\end{center}
where
\begin{multline*}
\del_{\phi_R}(\mathcal{E}\otimes y)=\\ \mathcal{V}\otimes \del^L \Lambda_{\mathcal{V}}(\mathcal{E})  \otimes y +\mathcal{V}'\otimes \del^L \Lambda_{\mathcal{V}'}(\mathcal{E})\otimes \phi_R(\mathcal{E})y
\end{multline*}
for an oriented edge $\mathcal{E}=(\mathcal{V},\mathcal{V}').$ The associated \emph{fiber bundle complex}  is obtained as the total complex $$C(X/H,L)\otimes_{\phi_R}C(Y)=\Tot(C(X/H,L)\boxtimes_{\phi_R}C(Y)).$$ 
Again, it is straightforward to see that one can equate
\begin{align*}
C(X,L)\boxtimes_{H}C(Y)&=C(X/H,L)\boxtimes_{\phi_R}C(Y)\text{ and}\\
C(X,L)\otimes_{H}C(Y)&=C(X/H,L)\otimes_{\phi_R}C(Y).
\end{align*}
Moreover, one can also interpret $C(X,L,\Lambda)\otimes_{H}C(Y)$ as a lifted product code, see Section \ref{sec:defliftedproducts}.
It is helpful to have all perspectives in mind.
\subsubsection{(Co-)Homology}
The (co-)homology groups of the balanced product complex $C(X,L)\otimes_{H}C(Y)$ can be calculated using the Künneth formula for balanced products, see \Cref{lem:kuennethbalanced}. 

We will make this explicit here. We denote elements in $H_1(C(X,L)\otimes_{H}C(Y))$ by $[(u,v)]$ where
\begin{align*}
u\in C_1(X,L)\otimes_H C_0(Y)\text{ and } v\in C_0(X,L)\otimes_H C_1(Y)
\end{align*}
and refer to $u$ as the \emph{horizontal} and $v$ as the \emph{vertical} part. The Künneth formula implies that $H_1(C(X,L)\otimes_{H}C(Y))$ is generated by the complementary subspaces of \emph{horizontal} homology classes $$H^h_1(C(X,L)\otimes_{H}C(Y)),$$ which admit a representative of the form $(u,0)$
and \emph{vertical} homology classes $$H^v_1(C(X,L)\otimes_{H}C(Y)),$$ which admit a representative of the form $(0,v).$ 
We denote the corresponding projection maps by 
\begin{align*}
p^h&: H_1(C(X,L)\otimes_{H}C(Y))\to H^h_1(C(X,L)\otimes_{H}C(Y))\\
p^v&: H_1(C(X,L)\otimes_{H}C(Y))\to H^v_1(C(X,L)\otimes_{H}C(Y)).
\end{align*}
The maps can be computed as
\begin{align*}
p^h([(u,v)])=[u]\text{ and }p^v([(u,v)])= [v]
\end{align*}
if $[u]$ and $[v]$ are homology classes.
To evaluate the projection maps given a general representative~$(u,v),$ note that the Künneth formula ensures that one can always find a homologous representative $(u',v')$ where~$[u']$ and~$[v']$ \emph{are} homology classes.

Note that if $H_0(C(X,L))=0,$ or equivalently, the checks of the Tanner code $C(X,L)$ are linearly independent, we have
$$H_1(C(X,L)\otimes_{H}C(Y))=H_1^h(C(X,L)\otimes_{H}C(Y))$$
and there are no non-trivial vertical homology classes.

The obvious dual definitions and statements apply to cohomology. For example, the space of \emph{horizontal} \emph{co}homology classes denoted by $$H^1_h(C(X,L)\otimes_{H}C(Y))\subseteq H^1(C(X,L)\otimes_{H}C(Y))$$
consists of cohomology classes with have a representative of the form $(u,0)$ for $$u\in C^1(X,L)\otimes_H C^0(Y)=C_1(X,L)\otimes_H C_0(Y).$$
The corresponding projections are denoted by~$p_h$ and~$p_v$.

\subsection{Balanced Product with a Circle}
We are interested in a special situation. Namely, let $H=\mathbb{Z}_\ell$ be the cyclic group and $Y=C_{\ell}$ the cycle graph of size $\ell,$ with~$\ell$ odd. 
Then the natural action of $\mathbb{Z}_\ell$ on $C_{\ell}$ by translation induces a trivial action on the homology groups $H_0(C_{\ell})=H_1(C_{\ell})=\mathbb{F}_2.$
Fix some vertex $y_0$ and edge $y_1$ in $C_{\ell}.$ Then the maps
\begin{align*}
\mathbb{F}_2[\mathbb{Z}_\ell] \to C_i(C_{\ell}),\, \sum_h a_hh\mapsto \sum_h a_hy_ih
\end{align*}
are isomorphisms for $i=1,2$.
Similarly, for every vector space~$V$ with a linear right action the maps
\begin{align*}
	V\to V\otimes_{\mathbb{Z}_\ell} C_i(C_\ell), v\mapsto [v\otimes y_i].
\end{align*}
are isomorphism for $i=1,2$.

Consider the linear maps
\begin{align*}
\iota: H_1(C(X/H,L))&\to H^h_1(C(X,L)\otimes_{\mathbb{Z}_\ell}C(C_\ell)), \\
 \left[\sum_{\mathcal{E}} a_\mathcal{E}\mathcal{E}\right] &\mapsto \left[ \sum_{\mathcal{E}} a_\mathcal{E}\sum_{e\in \mathcal{E}}[e\otimes y_0],0\right]
\end{align*}
taking a code word of the Tanner code $H_1(C(X/H,L))$ to a code word in the balanced product code $H^h_1(C(X,L)\otimes_{\mathbb{Z}_\ell}C(C_\ell))$.
Conversely we define the linear map
\begin{align*}
\pi: H^h_1(C(X,L)\otimes_{\mathbb{Z}_\ell}C(C_\ell))&\to H_1(C(X/H,L))\\
	\left[\left[\sum_e a_{e} e\otimes y_0\right],0\right]&\mapsto \left[\sum_e a_{e}eH\right]
\end{align*}
which projects a code word of the balanced product code $H^h_1(C(X,L)\otimes_{\mathbb{Z}_\ell}C(C_\ell))$ onto a code word  of the Tanner code $H_1(C(X/H,L))$.

The composition $\pi\circ\iota$ is the multiplication with $|\mathbb{Z}_\ell| = 1 \mod 2$ and hence the identity. Note that by the Künneth formula there is an isomorphism 
\begin{align*}H^h_1(C(X,L)\otimes_{\mathbb{Z}_\ell}C(C_\ell))&\cong H_1(C(X,L))\otimes_{{\mathbb{Z}_\ell}}H_0(C_\ell)\\
&=H_1(C(X,L))/\langle vh-v\rangle
\end{align*}
and the latter can naturally identified with $H_1(C(X/H,L)).$ Hence $\pi$ and $\iota$ are indeed isomorphisms by a dimension argument.

For $Y=C_\ell$ the projection map $p^h$ takes a particularly nice form when composed with~$\pi$.
Independently of finding a nice representative~$(u,v)$ where~$[u]$ is a homology class, one has
$$\pi\circ p^h\left[\left[\sum_ea_e e\otimes y_0\right],v\right]=\left[\sum_ea_eeH\right].$$
This follows easily by checking that the definition is invariant under adding boundaries.
Dually, the transposed of the maps~$\pi$ and~$\iota$ define an isomorphism
$$ H_h^1(C(X,L)\otimes_{\mathbb{Z}_\ell}C(C_\ell))\cong H^1(C(X/H,L)).$$

We subsume the result in the following theorem.
\begin{theorem}\label{thm:encodingratecircle}
The number of logical bits that can be encoded in the horizontal part of the balanced product code
\begin{align}\label{eqn:encodingratecircle}
 \dim H^h_1(C(X,L)\otimes_{\mathbb{Z}_\ell}C(C_\ell))&=\dim H_1(C(X/H,L)) 
\end{align}
 agrees with the number of logical bits in the Tanner code $C(X/H,L)$.
\end{theorem}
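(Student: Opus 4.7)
The plan is to establish the equality of dimensions in \Cref{eqn:encodingratecircle} by constructing the two maps $\iota$ and $\pi$ sketched in the preceding paragraphs and showing that each is a well-defined isomorphism. Most of the argument is assembled from pieces already in place, so the proof is really a verification and a dimension count rather than a new calculation.

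First, I would check that the maps
\begin{align*}
\iota&: H_1(C(X/H,L))\to H^h_1(C(X,L)\otimes_{\mathbb{Z}_\ell}C(C_\ell)),\\
\pi&: H^h_1(C(X,L)\otimes_{\mathbb{Z}_\ell}C(C_\ell))\to H_1(C(X/H,L))
\end{align*}
are well defined. For $\iota$, one lifts a Tanner cycle $\sum_{\mathcal{E}} a_{\mathcal{E}}\mathcal{E}\in Z_1(C(X/H,L))$ to its full $H$-orbit $\sum_{\mathcal{E}}a_{\mathcal{E}}\sum_{e\in\mathcal{E}}e\otimes y_0$ in $C_1(X)\otimes_{\mathbb{Z}_\ell}C_0(C_\ell)$; invariance of $\Lambda$ under $H$ (\Cref{eq:invariantlabeling}) and the cycle condition for the base Tanner code imply that the image is again a cycle, and any change of representative of the homology class in the base lifts to a boundary in the balanced product. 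For $\pi$, one uses the trivialisation formula $\pi\circ p^h[(\sum_e a_e e\otimes y_0,v)] = [\sum_e a_e e H]$ stated in the excerpt and verifies that adding any boundary of the total complex to $(u,v)$ changes $\sum_e a_e e H$ only by a boundary in the base Tanner code.

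Second, I would compute $\pi\circ \iota$ directly on a representative. A Tanner cycle $\sum_{\mathcal{E}} a_{\mathcal{E}}\mathcal{E}$ is lifted to its entire $H$-orbit and then projected back down to $X/H$; each orbit edge maps to the same $\mathcal{E}$, so the composition multiplies the class by $|\mathbb{Z}_\ell|=\ell$, which equals $1$ in $\mathbb{F}_2$ since $\ell$ is odd. Thus $\pi\circ\iota=\operatorname{id}$, so $\iota$ is injective and $\pi$ is surjective.

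Third, I would upgrade injectivity of $\iota$ to an isomorphism by a dimension count using the Künneth formula for balanced products (\Cref{lem:kuennethbalanced}). Horizontal classes in the total complex are by definition represented by chains supported in $C_1(X,L)\otimes_{\mathbb{Z}_\ell}C_0(C_\ell)$; combining this with the Künneth decomposition and the trivial action of $\mathbb{Z}_\ell$ on $H_*(C(C_\ell))$ yields
\begin{align*}
H_1^h(C(X,L)\otimes_{\mathbb{Z}_\ell}C(C_\ell)) \cong H_1(C(X,L))\otimes_{\mathbb{Z}_\ell}H_0(C_\ell).
\end{align*}
Since $H_0(C_\ell)\cong\mathbb{F}_2$ with trivial $H$-action, the right-hand side is the space of coinvariants $H_1(C(X,L))_{\mathbb{Z}_\ell}$, which is naturally identified with $H_1(C(X/H,L))$ because $\pi:X\to X/H$ is a free covering and $\Lambda$ is $H$-invariant, so the Tanner complex on $X/H$ is exactly the $H$-coinvariants of the Tanner complex on $X$.

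The step I expect to require the most care is this last identification of $H_1(C(X,L))_{\mathbb{Z}_\ell}$ with $H_1(C(X/H,L))$: one must check that passing to coinvariants commutes with taking homology in this setting, which is where the hypothesis that $\ell$ is odd reappears (coinvariants and invariants coincide for finite groups of odd order, so both operations are exact over $\mathbb{F}_2$). Once this is verified, $\iota$ is an injection between spaces of equal finite dimension, hence an isomorphism, and $\pi$ is its inverse. The dimension equality \Cref{eqn:encodingratecircle} then follows immediately.
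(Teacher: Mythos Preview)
Your proposal is correct and follows essentially the same route as the paper: define $\iota$ and $\pi$, verify $\pi\circ\iota=\operatorname{id}$ via $|\mathbb{Z}_\ell|\equiv 1\bmod 2$, invoke the K\"unneth formula for balanced products to identify $H_1^h$ with the coinvariants $H_1(C(X,L))_{\mathbb{Z}_\ell}$, and finish by a dimension argument. You are more careful than the paper on the identification $H_1(C(X,L))_{\mathbb{Z}_\ell}\cong H_1(C(X/H,L))$---the paper simply asserts this is ``natural''---and your remark that exactness of (co)invariants for odd-order groups over $\mathbb{F}_2$ is what makes taking coinvariants commute with homology is exactly the point that justifies it.
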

We note that these results can also be obtained by using the definition by fiber bundles outlined in \Cref{sec:fiberbundledef} and the discussion in \Cref{sec:fiberbundlecodes}.

\subsection{Balanced Product Subsystem Codes}\label{sec:subsystemofourthing}
For later applications we are only interested in the horizontal part of the homology.
This is because we will obtain superior distance bounds for these elements.
To achieve this we make use of the formalism of subsystem codes which we introduced in \Cref{sec:classicalandquantumcodes}.
We refer to this as the {\em horizontal subsystem balanced product code}.

We define the logical operators of $Z$-type to correspond to the non-trivial elements of $$H_1^\mathcal{L} = H^h_1(C(X,L)\otimes_{\mathbb{Z}_\ell}C(C_\ell))$$ and the gauge qubits of $Z$-type, which we disregard, to be the non-trivial elements of $$H_1^\mathcal{G} = H^v_1(C(X,L)\otimes_{\mathbb{Z}_\ell}C(C_\ell)).$$
This induces the splitting in the cohomology $$H^1_\mathcal{L} = H_h^1(C(X,L)\otimes_{\mathbb{Z}_\ell}C(C_\ell))$$ and $$H^1_\mathcal{G} = H_v^1(C(X,L)\otimes_{\mathbb{Z}_\ell}C(C_\ell)).$$

Note, that if the checks in the Tanner code~$C(X/H,L)$ are independent then the vertical (co)homology classes vanish.
In this case this procedure is unnecessary as $H_1^\mathcal{G}=0$.

\subsection{Distance Theorems}
Using the ingenious distance bounds for lifted product codes by Panteleev--Kalachev in~\cite{panteleev_almostlinear}, we are able to bound the (co-)homological distance of the balanced products between Tanner codes and cycles $C(X,L)\otimes_{\mathbb{Z}_\ell}C(C_\ell)$ in terms of expansion properties of the Tanner code~$C(X,L).$
\begin{theorem}[Panteleev--Kalachev~\cite{panteleev_almostlinear}]\label{thm:distho}
Assume that the boundary map of the Tanner code $C(X,L)$
\begin{center}
\begin{tikzcd}
C_1(X) \arrow[r,"\del"] & C_0(X)\otimes L_0
\end{tikzcd}
\end{center}
is $(\alpha_{\operatorname{ho}}, \beta_{\operatorname{ho}})$-expanding.
Let $x=(u,v)$ be a representative of a non-trivial homology class $[x]\in H_1(C(X,L)\otimes_{\mathbb{Z}_\ell}C(C_\ell)).$
\begin{enumerate}
\item If the horizontal part $$0\neq p^h(x)\in H^h_1(C(X,L)\otimes_{\mathbb{Z}_\ell}C(C_\ell))$$ of $x$ is non-trivial, then 
$$|x|=|u|+|v| \geq |X_{1}|\min\left\{\alpha_{\operatorname{ho}}/2, \alpha_{\operatorname{ho}}\beta_{\operatorname{ho}}/4\right\}.$$
\item Else, if $p^h(x)=0,$ then 
$$|x|=|u|+|v| \geq  \ell\min\left\{\alpha_{\operatorname{ho}}/(4s),\alpha_{\operatorname{ho}}\beta_{\operatorname{ho}}/(4s)\right\}.$$
\end{enumerate}
\end{theorem}
\begin{proof} The code $C(X,L)\otimes_{\mathbb{Z}_\ell}C(C_\ell)$ agrees with the a lifted product code $\operatorname{LP}(A,1+x)$ in the notation of Panteleev--Kalachev~\cite{panteleev_almostlinear} where $A=\del$ is the matrix representing the differential $\del$ of the Tanner code $C(X,L).$

The two cases in the theorem correspond to Case 1 and 2 in the proof of~\cite[Proposition 2]{panteleev_almostlinear}.
The homology class~$\pi(p^h(x))$ corresponds to~$u(1),$ and $|X_{1}|$ to $n\ell$ in the notation of~\cite{panteleev_almostlinear}.
\end{proof}
There is the following dual distance bound for the cohomology.
\begin{theorem}[Panteleev--Kalachev~\cite{panteleev_almostlinear}]\label{thm:distco}
Assume that the coboundary map of the Tanner code $C(X,L)$
\begin{center}
\begin{tikzcd}
C_1(X)  & C_0(X)\otimes L_0\arrow[l,"\delta"']
\end{tikzcd}
\end{center}
is $(\alpha_{\operatorname{co}}, \beta_{\operatorname{co}})$-expanding.
Let $x=(u,v)$ be a representative of a non-trivial cohomology class $[x]\in H^1(C(X,L)\otimes_{\mathbb{Z}_\ell}C(C_\ell)).$
\begin{enumerate}
\item If the vertical part $$0\neq p_v(x)\in H_v^1(C(X,L)\otimes_{\mathbb{Z}_\ell}C(C_\ell))$$ of $x$ is non-trivial, then 
$$|x|=|u|+|v| \geq |X_{0}|s\min\left\{\alpha_{\operatorname{co}}/2, \alpha_{\operatorname{co}}\beta_{\operatorname{co}}/4\right\}.$$
\item Else, if $p_v(x)=0,$ then 
$$|x|=|u|+|v| \geq  \ell\min\left\{\alpha_{\operatorname{co}}/(4s),\alpha_{\operatorname{co}}\beta_{\operatorname{co}}/(4s)\right\}.$$
\end{enumerate}
\end{theorem}
To subsume, we obtain the following corollary.
\begin{corollary}\label{cor:distanceboundssybsystemcode}
The stabilizer/subsystem code of $C(X,L)\otimes_{\mathbb{Z}_\ell}C(C_\ell)$ defined by $H_1^\mathcal{L} = H^h_1(C(X,L)\otimes_{\mathbb{Z}_\ell}C(C_\ell))$, see \Cref{sec:subsystemofourthing}, is an $[[N,K,D_X,D_Z]]$ code where
\begin{align*}
N &= 3\,|X_{1}|\\
K &= \dim H_1(C(X/\mathbb{Z}_\ell,L))\geq (2k_L/s-1)|X_{1}|/\ell\\
D_Z &\geq |X_{1}|\min\left\{\alpha_{\operatorname{ho}}/2, \alpha_{\operatorname{ho}}\beta_{\operatorname{ho}}/4\right\}\\
D_X &\geq \min\left\{\alpha_{\operatorname{co}}|X_{1}|, \alpha_{\operatorname{co}}|X_{1}|/2,\ell\alpha_{\operatorname{co}}/(4s),\ell\alpha_{\operatorname{co}}\beta_{\operatorname{co}}/(4s)\right\}
\end{align*}
in the notation of \Cref{thm:distho} and \Cref{thm:distco}.
\end{corollary}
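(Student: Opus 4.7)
The plan is to verify each of the four parameters $N$, $K$, $D_X$, $D_Z$ in turn, essentially by assembling results already established in the earlier sections.

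For the qubit count $N$, the qubits are the basis of the middle term of the balanced product complex, namely $\Tot_1 = (C_1(X) \otimes_{\mathbb{Z}_\ell} C_0(C_\ell)) \oplus ((C_0(X)\otimes L_0) \otimes_{\mathbb{Z}_\ell} C_1(C_\ell))$. Since $\mathbb{Z}_\ell$ acts freely on $X$ by the quotient condition, and $C_i(C_\ell) \cong \mathbb{F}_2[\mathbb{Z}_\ell]$ as $\mathbb{Z}_\ell$-modules (as exploited in \Cref{thm:encodingratecircle}), the two summands have dimensions $|X^1|$ and $|X^0|(s-k_L)$, and the $s$-regularity relation $s|X^0|=2|X^1|$ then yields $N\le 3|X^1|$. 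For the logical count $K$, the horizontal subsystem construction of \Cref{sec:subsystemofourthing} sets $H_1^{\mathcal{L}} = H_1^h$, which by \Cref{thm:encodingratecircle} is isomorphic to $H_1(C(X/\mathbb{Z}_\ell, L))$; applying the Sipser--Spielman estimate \Cref{eqn:sipser_spielman_k} to this Tanner code on the $s$-regular quotient graph $X/\mathbb{Z}_\ell$ (which has $|X^1|/\ell$ edges) gives the claimed lower bound.

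For $D_Z$, by the definition of a subsystem code, the $Z$-distance is the minimum weight of a cycle representative $x$ whose image in $H_1^{\mathcal{L}}$ is non-trivial, equivalently $p^h(x) \neq 0$. This is exactly the hypothesis of Case~(1) of \Cref{thm:distho}, which delivers the bound in the desired form. For $D_X$, the dual subsystem setup gives $H^1_{\mathcal{L}} = H_h^1$, so $D_X$ is the minimum weight of a cocycle representative whose class lies outside $H^1_{\mathcal{G}} = H_v^1$, equivalently whose horizontal cohomology part is non-trivial. Such a class may have either vanishing or non-vanishing vertical part, so both cases of \Cref{thm:distco} can apply; taking the minimum of the two, and substituting $|X^0|s = 2|X^1|$ into Case~(1), yields the four-term bound on $D_X$.

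The only step that takes more than a routine substitution is the translation between the subsystem conditions ``$[z]$ has non-trivial image in $H_1^{\mathcal{L}}$'' (or in $H^1_{\mathcal{L}}$) and the horizontal/vertical alternatives $p^h(x)\neq 0$ and $p_v(x)\neq 0$ that appear as hypotheses in the Panteleev--Kalachev distance theorems. Once this dictionary is in place, the corollary is a direct packaging of \Cref{thm:encodingratecircle}, \Cref{eqn:sipser_spielman_k}, \Cref{thm:distho} and \Cref{thm:distco} together with the arithmetic identities relating $|X^0|$, $|X^1|$ and $s$.
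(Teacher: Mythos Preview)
Your proof follows essentially the same route as the paper: compute $N$ from the dimension of the middle term, identify $K$ via \Cref{thm:encodingratecircle} and \Cref{eqn:sipser_spielman_k}, and read off the distance bounds from Cases~1 and~1--2 of \Cref{thm:distho} and \Cref{thm:distco} respectively, using $s|X^0|=2|X^1|$.

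One small discrepancy worth flagging: you compute the second summand of the qubit space as $|X^0|\dim L_0 = |X^0|(s-k_L)$ and hence obtain $N\le 3|X^1|$, whereas the paper's proof writes $|X^1|+s|X^0|=3|X^1|$, tacitly taking $\dim L_0=s$. Your accounting is the more careful one; the exact value of $N$ depends on the convention for $\dim L_0$, and in either case $N\in\Theta(|X^1|)$, which is all that is used downstream. Otherwise your argument and the paper's coincide.
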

\begin{proof}
The number of physical qubits~$N$ is given by 
$$\dim\, (C(X,L)\otimes_{\mathbb{Z}_\ell}C(C_\ell))_1 = |X_{1}|+s |X_{0}|= 3|X_{1}|.$$
The number of logical qubits encoded in the horizontal subsystem is the number of bits encoded in the Tanner code~$C(X/\mathbb{Z}_\ell, L),$ see \Cref{thm:encodingratecircle}, which is bounded below by 
$$k_{C(X/\mathbb{Z}_\ell, L)}\geq (2k_L/s-1)\, |(X/\mathbb{Z}_\ell)_1|$$
via \Cref{eqn:sipser_spielman_k}.
Now $\mathbb{Z}_\ell$ acts freely on $X$ by assumption and hence $|(X/\mathbb{Z}_\ell)_1|=|X_{1}|/\ell.$

The bound for the homological distance $D_Z$ follows from Case 1 of \Cref{thm:distho}, since the homological distance of the subsystem code is the minimum distance of representatives of homology classes which do not vanish when projected onto $H_1^\mathcal{L}=H_1^h(C(X,L)\otimes_{\mathbb{Z}_\ell}C(C_\ell)).$

The bound for~$D_X$ follows from combining Case 1 and 2 in \Cref{thm:distco}.
\end{proof}

\subsection{Concrete Example of a Balanced Product Code}
Before we discuss the explicit construction of a code family using the balanced product which breaks the $\operatorname{polylog}(N)\sqrt{N}$ distance bound in the next section, we will briefly give a concrete example of a balanced product code.

First, let us consider the hyperbolic $\{3,7\}$ tessellation (cf.~\Cref{fig:klein_quartic}) of a genus-14 surface, which is defined by the translation $\tau = r^2 s^{-1} r^2 s^{-1} r^2 s^{-1} r^2 s^{-1} r^2 s^{-1} r^2 s^{-1}$ (cf.~Appendix~\ref{sec:hyperbolic_compact}).
This tessellation gives a degree-7 graph~$X$ with 156 vertices and 546 edges.
For the local code~$L$ we choose the $[7,4,3]$ Hamming code with parity check matrix
\begin{align*}
    \begin{bmatrix}
        1 & 0 & 1 & 0 & 1 & 0 & 1\\
        0 & 1 & 1 & 0 & 0 & 1 & 1\\
        0 & 0 & 0 & 1 & 1 & 1 & 1
    \end{bmatrix}.
\end{align*}
This defines an expander code $C(X,L)$ which has $546$ bits, identified with the edges of $X$, and each of the $156 \times 3 =
468$ checks has weight~4.
As $\dim H_0(C(X,L)) = 0$ all checks are linearly dependent, so that $C(X,L)$ encodes $\dim H_1(C(X,L)) = (2k_L/s-1)\, |(X/\mathbb{Z}_\ell)^1| = (2\times 4/7-1)\, |X_{1}|/13 = 78$ bits.
We note that $\lambda_2(X) > 5 > d_L$ so that we do not have a non-trivial lower bound on the distance of~$C(X,L)$.
However, this is a common issue for expander codes with low check weight and such codes generally still perform well in practice~\cite{dowling2017fast}.

The orientation-preserving automorphism group $G^+/N_\tau$ of the surface has an (up to conjugation) unique cyclic subgroup of order $\ell = 13$.
The balanced product code $C(X,L)\otimes_{\mathbb{Z}_{13}}C(C_{13})$ has $N = 1014$ qubits and encodes $K = \dim H_1(C(X/\mathbb{Z}_{13},L)) = 78/13 = 6$ qubits.
The $X$-stabilizers have weight $4+2=6$ coming from the combined check weight of the Hamming code~$L$ and the repetition code~$C(C_{13})$.
The weights of the $Z$-stabilizers are $4, 5, 6, 7, 8$, resulting from the combined weight of the respective check- and bit-degrees.
A Monte Carlo brute-force search showed that the $X$-distance~$D_X$ is at most~13.
This is expected as $\ell=13$ is exactly the length of the cycle in the product.
The lowest-weight $Z$-logical that we could find using the same method had weight~18.



\section{Explicit Examples}\label{sec:explicit}

The goal of this section is to prove the following theorem.
\begin{theorem}\label{thm:explicitfamily}
There exists an explicit construction of a family of $[[N,K,D_X,D_Z]]$ LDPC quantum codes which encode $K \in \Theta(N^\frac{2}{3})$ logical qubits, with $X$-distance $D_X \in \Omega(N^\frac{1}{3})$ and $Z$-distance $D_Z \in \Theta(N)$.
\end{theorem}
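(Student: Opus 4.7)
The plan is to instantiate the balanced-product subsystem construction of \Cref{cor:distanceboundssybsystemcode} with carefully chosen ingredients. Fix a prime $p$ large enough that a binary cyclic Goppa (BCH) local code $L$ of length $s=p+1$ exists with rate $k_L/s > 1/2$ and with both $d_L > 2\sqrt{p}$ and $d_{L^\perp} > 2\sqrt{p}$; existence follows for large $p$ either from \Cref{thm:gilbertvarshamovplus} (the code is of bounded size so an exhaustive search keeps the construction explicit) or from the Moreno--Moreno estimate on dual Goppa distances. Then let $q$ range over primes satisfying the LPS congruence conditions relative to $p$, and let $X = X_{p,q}$ be the corresponding $(p+1)$-regular Ramanujan Cayley graph of $G\in\{\operatorname{PSL}(2,q),\operatorname{PGL}(2,q)\}$ with $\lambda_2 < 2\sqrt{p}$. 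Inside $G$ pick a cyclic subgroup $H=\mathbb{Z}_\ell$ of order $\ell = \Theta(q)$ (e.g.\ a split or non-split Cartan subgroup, of order $(q\pm 1)/2$ or $q\pm 1$) acting by right multiplication on $X$. Right multiplication is automatically free on vertices; the quotient condition (no edge connecting $v$ and $vh$ for $h\in H\setminus\{1\}$) requires $H\setminus\{1\}$ to avoid every conjugate of the generating set $S_{p,q}$, which can be arranged for all but finitely many $q$ by standard conjugacy-class considerations in $\operatorname{PSL}(2,q)$. An $H$-invariant labeling $\Lambda$ is then constructed by choosing arbitrary bijections on representatives of the $H$-orbits of vertices and extending equivariantly.

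Next I would verify the expansion hypotheses: because $d_L, d_{L^\perp} > \lambda_2$ and all of $s,k_L,d_L,d_{L^\perp},\lambda_2$ are bounded by absolute constants depending only on $p$, \Cref{thm:expanderviolatedchecks} and \Cref{thm:expanderbitdegree} yield constants $(\alpha_{\operatorname{ho}},\beta_{\operatorname{ho}})$ and $(\alpha_{\operatorname{co}},\beta_{\operatorname{co}})$ for which the boundary and coboundary of the Tanner code $C(X,L)$ are expanding. Substituting into \Cref{cor:distanceboundssybsystemcode} and using $|X^1| = (p+1)|G|/2 = \Theta(q^3)$ gives
\begin{align*}
N &= 3|X^1| = \Theta(q^3),\\
K &\geq (2k_L/s-1)\,|X^1|/\ell = \Theta(q^3/\ell),\\
D_Z &= \Omega(|X^1|) = \Omega(q^3),\\
D_X &= \Omega(\min\{|X^1|,\ell\}) = \Omega(\ell).
\end{align*}
Choosing $\ell = \Theta(q) = \Theta(N^{1/3})$ yields $K = \Theta(N^{2/3})$, $D_Z = \Theta(N)$, and $D_X = \Omega(N^{1/3})$. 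The LDPC property is immediate because each stabilizer check has weight at most $s+2 = p+3 = O(1)$ and each qubit sits in a bounded number of checks, as $s$ is constant and $C_\ell$ has constant local degree. Together with the upper bound $K \leq \dim H_1(C(X/H,L)) \leq |X^1|/\ell = O(N^{2/3})$, this gives $K \in \Theta(N^{2/3})$.

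The main obstacle is the group-theoretic bookkeeping in step one: verifying, uniformly in $q$, that a cyclic subgroup of order exactly $\Theta(q)$ embeds in $\operatorname{PSL}(2,q)$ (or $\operatorname{PGL}(2,q)$) with $H\setminus\{1\}$ disjoint from all conjugates of the LPS generating set $S_{p,q}$. Classical large cyclic subgroups (split/non-split Cartan) have the right order, but the quotient condition must be checked along a cofinite subfamily of primes $q$. An attractive alternative is to replace the LPS construction by the hyperbolic-tessellation route of \Cref{sec:explicit_expanders}: there a cyclic symmetry of known order sits inside the compactification group via a translation $\tau$, and the quotient condition is built into the geometric construction, sidestepping the number-theoretic verification altogether. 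Either route then closes the argument via \Cref{cor:distanceboundssybsystemcode}.
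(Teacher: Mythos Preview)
Your overall strategy matches the paper's: instantiate \Cref{cor:distanceboundssybsystemcode} with LPS graphs $X_{p,q}$, a cyclic subgroup of order $\Theta(q)$, and a fixed local code satisfying $d_L,d_{L^\perp}>\lambda_2$, then read off the parameters. The one place where your argument is not yet a proof is exactly the place you flag as ``the main obstacle'': the choice of $H$ and the quotient condition \eqref{eqn:quotientcondition}. Saying that for a Cartan subgroup this ``can be arranged for all but finitely many $q$ by standard conjugacy-class considerations'' is not a verification; Cartan elements can have arbitrary determinant class in $\mathbb{F}_q^\times/(\mathbb{F}_q^\times)^2$, so there is no obvious uniform obstruction to some $s\in S_{p,q}$ being conjugate into $H$. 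Your fallback to hyperbolic tessellations also does not close the argument, because (as the paper notes) no explicit bound on $\lambda_2$ is available there, and you need explicit constants to feed into \Cref{thm:expanderviolatedchecks} and \Cref{thm:expanderbitdegree}.

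The paper resolves this cleanly and you should adopt its choice. It restricts to primes with $\bigl(\tfrac{p}{q}\bigr)=-1$, so that $X_{p,q}$ is a Cayley graph of $\operatorname{PGL}(2,q)$, and takes $H$ to be the unipotent upper-triangular subgroup $\bigl\{\begin{smallmatrix}1&x\\0&1\end{smallmatrix}\bigr\}$, which is cyclic of order exactly $q$. The quotient condition is then a one-line determinant computation: every conjugate of an element of $H$ has determinant $1\in\mathbb{F}_q^\times/(\mathbb{F}_q^\times)^2$, whereas every $s\in S_{p,q}$ has determinant $p$, which is a non-square by the Legendre-symbol hypothesis. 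This gives a free action with regular quotient for \emph{every} admissible $q$, not merely a cofinite subfamily. With this in place (and a concrete choice such as $p=401$, $\delta=0.1$), the rest of your parameter counting goes through exactly as you wrote.
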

The distance balancing of~\cite{hastings2016weight} and~\cite{ramanujancomplexesone} respects the splitting of the homology group described in \Cref{sec:subsystemofourthing}.
Applying the distance balancing procedure of~\cite{ramanujancomplexesone} using classical $[N^\frac{2}{3},\Theta(N^\frac{2}{3}),\Theta(N^\frac{2}{3})]$-codes we obtain the following corollary.
\begin{corollary}\label{cor:explicitfamily}
There exists an explicit construction of a family of $[[N,K,D]]$ LDPC quantum codes which encode $K \in \Theta(N^\frac{4}{5})$ logical qubits with distance $D \in \Omega(N^\frac{3}{5})$.
\end{corollary}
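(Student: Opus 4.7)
The plan is to apply the Hastings / Evra--Kaufman--Z\'emor distance balancing procedure directly to the $[[N_0,K_0,D_{X,0},D_{Z,0}]]$ family produced by \Cref{thm:explicitfamily}, using a family of good classical codes to convert the large excess in $D_{Z,0}\in\Theta(N_0)$ over $D_{X,0}\in\Omega(N_0^{1/3})$ into a balanced, amplified distance.

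First, I would fix an explicit family of classical LDPC codes of length $m$, dimension $\Theta(m)$, and distance $\Theta(m)$ --- for instance the Sipser--Spielman-type expander codes from \Cref{sec:expander}, which are explicit and have the required asymptotic parameters. The balancing procedure is then obtained by tensoring the two-term chain complex of the classical code with the three-term CSS chain complex of the quantum code and reading off a new CSS code from a fixed middle degree of the total complex. By the Künneth formula the number of logical qubits picks up a factor of the classical dimension, so $K\in\Theta(K_0\cdot m)$, while the total physical size scales as $N\in\Theta(N_0\cdot m)$. The distance analysis of \cite{ramanujancomplexesone,hastings2016weight} shows that the $X$-distance is amplified by the classical distance, giving $D_X\in\Omega(D_{X,0}\cdot m)$, whereas the $Z$-distance is preserved up to constants, so that setting $m\asymp N_0^{2/3}$ equalises both branches at $\Omega(N_0)$.

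Substituting the parameters gives $N\in\Theta(N_0^{5/3})$, hence $N_0\in\Theta(N^{3/5})$, and therefore $K\in\Theta(N_0^{2/3}\cdot N_0^{2/3})=\Theta(N_0^{4/3})=\Theta(N^{4/5})$ and $D\in\Omega(N_0)=\Omega(N^{3/5})$ as claimed.

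The two subtleties I would need to verify are (i) preservation of the LDPC property and (ii) compatibility of the balancing with the subsystem splitting $H_1=H_1^{\mathcal L}\oplus H_1^{\mathcal G}$ of \Cref{sec:subsystemofourthing}. Point (i) is routine: each check of the tensor product is a sum of tensor products of bounded-weight checks and each qubit lies in a bounded number of them, so weight multiplies but remains constant with $N$. The main obstacle is (ii): one must verify that the direct-sum decomposition of the quantum (co)homology is preserved under tensoring with the classical complex, so that the amplified distance in fact applies to the \emph{logical} subsystem and hence bounds the dressed distance from below. This should follow from naturality of the Künneth formula applied to the splitting inherited from the balanced product construction, and indeed the paragraph preceding the corollary asserts exactly this compatibility; spelling this out and plugging it into the quantitative bounds of \cite{hastings2016weight,ramanujancomplexesone} is the step I would triple-check.
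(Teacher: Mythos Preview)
Your proposal is correct and follows essentially the same approach as the paper: apply the Evra--Kaufman--Z\'emor distance balancing with a good classical code of length $m\asymp N_0^{2/3}$ to the family of \Cref{thm:explicitfamily}, yielding the stated parameters after the same substitution you carry out. The paper's argument is terser but identical in content, including the assertion (your point~(ii)) that the balancing respects the horizontal/vertical subsystem splitting.
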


\subsubsection{Expander graphs $X$}
The graphs~$X$ will be LPS-expanders defined in \Cref{sec:explicit_expanders}.
Choose primes~$p$ and~$q$ such that $$\left(\dfrac{p}{q}\right) = -1$$ so that the graphs $X=X_{p,q}$ are Cayley graphs of the group~$\operatorname{PGL}(2,q)$.
Hence we have $$|X_0| = |\operatorname{PGL}(2,q)| = q(q^2-1),$$ their degree is $s=p+1$ and their second largest eigenvalues satisfy the bound $\lambda_2 \leq 2\sqrt{s-1}$.

\subsubsection{Subgroups $H$}
The subgroups~$H$ for the balanced product need to be cyclic subgroups which act freely on the graphs~$X.$ 
Furthermore, to obtain regular quotient graphs, we need to assume the quotient condition, that is,  there is no edge connecting vertices~$g$ and~$gh$ for~$h\in H$ (see \Cref{sec:balancedproductsgraphs}). 
Since all edges in~$X$ are of the form~$(g,sg)$ for~$s\in S_{p,q}$ this is equivalent to 
\begin{align}\label{eqn:quotientcondition}
	S_{p,q}\cap gHg^{-1}=\emptyset \text{ for all } g\in \operatorname{PGL}(2,q).
\end{align}
The quotient condition implies that we act freely on the edges~$X_1$.

We claim that these properties are fulfilled by the subgroup of unipotent upper-triangular matrices 
\begin{align*}
	H = \left\lbrace \begin{bmatrix}
	1 & x \\
	0 & 1
	\end{bmatrix} \mid x \in \mathbb{F}_q \right\rbrace \subset \operatorname{PGL}(2,q).
\end{align*}
First~$H$ is isomorphic to the additive group of $\mathbb{F}_q$ and hence a cyclic group of order $q$. Since $X$ is a Cayley graph $H$ acts freely on the vertices~$X_0$.

To show that no element in $H$ is conjugate to an element in~$S_{p,q}$, see \Cref{eqn:quotientcondition}, we argue via determinants. 
Note that the determinant of an element $g\in \operatorname{PGL}(2,q)$ is defined up to multiplication with a square, so $\det(g)\in \mathbb{F}_q^\times/(\mathbb{F}_q^\times)^2.$ We have
$$\det(ghg^{-1})=1\in \mathbb{F}_q^\times/(\mathbb{F}_q^\times)^2\text{ and }\det(s)=p\in \mathbb{F}_q^\times/(\mathbb{F}_q^\times)^2$$
for all $h\in H,$ $g\in \operatorname{PGL}(2,q)$ and $s\in S_{p,q}.$ 
However, by assumption $p$ is not a square modulo $q$ and hence $1\neq p$ in $\mathbb{F}_q^\times/(\mathbb{F}_q^\times)^2.$

\subsubsection{Local codes $L$}
In order for the Tanner code $C(X,L)$ to be a code with a linear number of encoded bits and linear distance we need it to satisfy the constraints given by \Cref{eqn:sipser_spielman_k} and \Cref{eqn:expandercode_local_distance_demand}, namely that it has to have rate $k_L/2 > 1/2$ and distance larger than the second eigenvalue of the graph $d_L > \lambda_2$.
Furthermore, in order to be able to apply \Cref{thm:expanderbitdegree} we need that the dual code~$L^\perp$ has distance larger than the second eigenvalue of the graph $d_L > \lambda_2$ as well.

\Cref{thm:gilbertvarshamovplus} guarantees the existence of such a local code~$L$ with the lower bound $d_L,d_{L^\perp} \geq \delta s$ for some $\delta \in (0,0.11)$.
Although \Cref{thm:gilbertvarshamovplus} does not give a construction of this code~$L$, we can find it by brute force in~$O(1)$ time with respect to the size of~$X$.

\begin{proof}[Proof of \Cref{thm:explicitfamily}]
With the discussion above, choose $p=401$, $\delta = 0.1$, $\alpha_{\operatorname{ho}} = 10^{-3}$, $\alpha_{\operatorname{co}} = 10^{-5}$.
Then using \Cref{thm:expanderviolatedchecks} and \Cref{thm:expanderbitdegree} we get that $\beta_{\operatorname{ho}}$ and $\beta_{\operatorname{co}}$ are bounded below by positive constants.
The family of codes are the horizontal subsystem balanced product codes.
Using \Cref{cor:distanceboundssybsystemcode} and letting $q\rightarrow \infty$ we obtain a code family with the desired properties.
We note that in our construction we essentially took Kronecker products of sparse matrices so that the resulting parity check matrices are sparse as well which in turn implies that the family is LDPC.
\end{proof}

\section{Conclusion}\label{sec:conclusion}
We have demonstrated that balanced products of classical codes give rise to explicit families of LDPC quantum codes with exceptional asymptotic properties.
In this manuscript we focused on balanced products with cycle graphs. 
This prevents the number of logical qubits~$K$ and cohomological distance~$D_X$ from being linear. 
We conjecture that \emph{good} LDPC quantum codes with linear number of logical qubits and distance can obtained by replacing cycle graphs by good classical codes.

\begin{conjecture*}
The balanced product of two good classical LDPC codes over groups of order $\Theta(N)$ gives rise to a family of \emph{good} LDPC quantum codes.

In particular, the horizontal subsystem codes associated to the balanced products $C(X_{p,q},L)\otimes_{\operatorname{PGL}(2,q)}C(X_{p,q},L)^*$ form a family of \emph{good} LDPC quantum codes.
\end{conjecture*} 

We were able to show that these codes have constant encoding rate.
We note that trivial counting arguments do not apply due to a potentially non-free action of the symmetry group on the (co)homology classes, but it requires a more careful analysis of the double complex.
We leave the question if they also attain linear distance as an open problem.

A further line of research is finding efficient decoding algorithms for our codes. There are a number of interesting and efficient decoding algorithms for expander codes. It seems reasonable that they can be adapted to our setting as done in~\cite{fawzi2018constant}.


%

\appendices
\section{Spectral Sequences}\label{sec:spectralsequence}
In general, the computation of the homology of a total complex $\Tot(E)$ can be quite subtle and subject of the powerful mathematical formalism of \emph{spectral sequences}, which we want to very briefly explore. For a nice reference, see \cite[Section III.7.3]{gelfand_methods_2003}.
Simply stated, the formalism is an iterative procedure to obtain better and better approximations of the homology of $\Tot(E).$
Precisely, the spectral sequence of the double complex $E$ consists of so called \emph{pages} of arrays of vector spaces $E^{r}_{p,q}$ labeled by a non-negative integer $r,$ such that \begin{gather*}
\bigoplus_{p+q=n}E^{r}_{p,q} \text{ ``converges to'' } H_{n}(\Tot(E))
\end{gather*}
 as $r\rightarrow\infty$ and one suggestively writes
 $$E^{r}_{p,q}\Rightarrow H_{n}(\Tot(E)), \text{ for } p+q=n.$$
Furthermore, there is an iterative procedure to go from one page to the next. 
Namely, the $r$-th page of a spectral sequence is equipped with differentials $$\del_{r,p,q}:E^{r}_{p,q}\to E^{r}_{p-r,q+r-1}$$ such that the next page $E^{r+1}_{p,q}$ is obtained by taking homology along those differentials. 
 
Spectral sequences can be very intricate. 
However, in many circumstances one can show that all differentials on a certain page~$r_{0},$ where~$r_{0}$ is hopefully very small, vanish.  
In this case, one says that the spectral sequence \emph{degenerates on page~$r_{0}$}. 
Then, all following pages will be the same and there are isomorphisms
$$\bigoplus_{p+q=n}E^{r}_{p,q}\cong H_{n}(\Tot(E))$$
for all~$r\geq r_{0}.$

Let us briefly discuss the first three pages of the spectral sequence of the double complex $E.$
The $0$-th page of is simply given by $E$ itself $$E^{0}_{p,q}=E_{p,q}.$$ 
Its differential $$\del_{0}=\del^{v}: E^{0}_{p,q}=E_{p,q}\rightarrow E_{p,q-1}=E_{p,q-1}^{0}.$$
is the vertical differential of $E.$
\begin{figure}
\begin{center}
\begin{tikzcd}
 {E_{p,q}} \arrow[d, "{\del^v=\del^{0}}"'] & {E_{p-1,q}} \arrow[d, "{\del^v=\del^{0}}"] \\
 {E_{p,q-1}} & {E_{p-1,q-1}}   &       
\end{tikzcd}
\end{center}
\caption{A part of the $0$-th page of the spectral sequence $E^{r}_{p,q}\Rightarrow H^{n}(\Tot(E))$.}
\end{figure}
The first page is given by the homology of this differential
$$E^{1}_{p,q}=H_{q}(E_{p,\bullet},\del^{v}).$$
Since  $\del^{v}$ and $\del^{h}$ commute, the horizontal differential $\del^{h}$ induces maps
\begin{align*}\del_{1}=\del^{h}: E^{1}_{p,q}=&H_{q}(E_{p,\bullet},\del^{v})\\&\to E^{1}_{p-1,q}=H_{q}(E_{p-1,\bullet},\del^{v})\end{align*}
which constitute the differential on the first page.
\begin{figure}
\begin{center}
\begin{tikzcd}
H_{q}(E_{p,\bullet},\del^{v}) \arrow[r, "{\del^h}"] & H_{q}(E_{p-1,\bullet},\del^{v})\\
H_{q-1}(E_{p,\bullet},\del^{v})\arrow[r, "{\del_{p,q-1}^h}"']  &H_{q-1}(E_{p-1,\bullet},\del^{v})  &       
\end{tikzcd}
\end{center}
\caption{A part of the first page of the spectral sequence $E^{r}_{p,q}\Rightarrow H^{n}(\Tot(E))$.}
\end{figure}
Iterating once more, the second page is obtained by taking homology along this differential
$$E^{2}_{p,q}=H_{p}(H_{q}(E_{\bullet,\bullet},\del^{v}),\del^{h}).$$
In other words, the second page is obtained by first taking vertical and then horizontal homology of the double complex. It is already quite involved to write down the differential on the second page.
\section{Hyperbolic Surfaces and their Symmetries}\label{sec:hyperbolic}

\subsubsection{Hyperbolic geometry}
The geometry of space with constant negative curvature is called hyperbolic.
This geometry can be realized in different models consisting of a set of points and a hyperbolic distance function.
We will consider the \emph{Poincar\'e disc model} in which the set of points is given by the interior of a disc of unit radius
\begin{align*}
	\mathbb{H}^2 = \lbrace z\in \mathbb{C} \mid \lVert z \rVert < 1 \rbrace .
\end{align*}
and the distance between two points~$x,y \in \mathbb{H}^2_{d}$ is given by
\begin{align*}
	\dist(x,y) = \cosh^{-1} \left( 1+\frac{2\, \lVert x - y \rVert^2}{(1-\lVert x \rVert^2)(1-\lVert y \rVert^2)} \right) .
\end{align*}
Lengths are highly distorted but the angles in which lines intersect are faithfully represented.
Shortest lines (geodesics) connecting two points are given by circular arcs which intersect the unit circle in right angles.

A fact of hyperbolic trigonometry which we will use later is the following: 
Consider a triangle in~$\mathbb{H}^2$ with internal angles~$\alpha, \beta, \gamma$.
It must hold that
\begin{align}\label{eqn:hyperbolic_triangle_angles}
	\alpha + \beta + \gamma < \pi .
\end{align}
A proof of this fact and more background on hyperbolic geometry can be found in~\cite{ratcliffe2006foundations}.

\subsubsection{Isometry group}\label{sec:hyperbolic_isometries}
Bijective maps $\iota : \mathbb{H}^2 \rightarrow \mathbb{H}^2$ which leave the distance between any two points invariant, i.e.
\begin{align}\label{eqn:isometries}
\dist(\iota(x),\iota(y)) = \dist(x,y)
\end{align}
for all $x,y\in \mathbb{H}^2$, are called \emph{hyperbolic isometries}.
Just as for the euclidean plane~$\mathbb{E}^2$, they can be reflections, rotations, translations or combinations thereof.
The set of all isometries forms a group under composition called the isometry group~$\operatorname{Isom}(\mathbb{H}^2)$.

Note that any translation (in hyperbolic or euclidean space) can be written in terms of two reflections along suitably chosen parallel lines.
The same holds for rotations where the two lines intersect at the fixed-point of the rotation.
In fact~$\operatorname{Isom}(\mathbb{E}^2)$ and~$\operatorname{Isom}(\mathbb{H}^2)$ are both generated by reflections.
We call an isometry \emph{orientation-preserving} if it consists of an even number of reflections.
Orientation-preserving isometries form a subgroup of index~$2$ which we denote $\operatorname{Isom}^+(\mathbb{H}^2)$.

In the Poincar\'e model isometries can be realized in terms of a matrix group.\footnote{It is more common to use the so-called upper-half plane model to define these isometries. However, the half-plane model is equivalent to the Poincar\'e disc model via the Cayley transformation.}
We define the action of matrices on $\mathbb{H}^2$ via fractional linear transformations
\begin{align}\label{eqn:fractional_linear_trafo}
\begin{bmatrix}
a & b \\
c & d
\end{bmatrix}
\cdot z = \frac{az+b}{cz+d} .
\end{align}
These transformations are also called \emph{M\"obius transformations}.
By enforcing \Cref{eqn:isometries} we see that real $2\times 2$-matrices with determinant~$+1$ leave the distance function~$\dist_h$ invariant.
They therefore describe isometries of~$\mathbb{H}^2$.
Furthermore, it is easy to see that multiplying a matrix with~$-1$ defines the same isometry.

Let $\operatorname{SL}(2,\mathbb{R})$ be the group of real $2\times 2$-matrices with determinant~$+1$, called the \emph{special linear group}.
As the overall sign does not affect the transformation in \Cref{eqn:fractional_linear_trafo} we consider the \emph{projective special linear group} $\operatorname{PSL}(2,\mathbb{R}) = \operatorname{SL}(2,\mathbb{R}) / \{\pm 1\}$.
A classic result of hyperbolic geometry is that the fractional linear transformations of~$\operatorname{PSL}(2,\mathbb{R})$ are in fact exactly the orientation preserving isometries, i.e. we have
\begin{align*}
	\operatorname{Isom}^+(\mathbb{H}^2) \simeq \operatorname{PSL}(2,\mathbb{R}) .
\end{align*}

Finally, we note that a subgroup~$H$ of~$\operatorname{Isom}^+(\mathbb{H}^2)$ operates fixed-point free if and only if~$H$ is torsion-free, i.e. if it contains no elements of finite order.

\subsubsection{Hyperbolic tessellations}
Just as we can tessellate the euclidean plane~$\mathbb{E}^2$ with polygonal shapes, we can tessellate the hyperbolic plane~$\mathbb{H}^2$ with polygons as well.
A particular class of tessellations are \emph{regular tessellations} which are edge-to-edge coverings by copies of a regular polygon with the same number of polygons meeting at a vertex.
We will call the polygons of the tessellation \emph{faces}.
Regular tessellations are completely determined by the number of edges of each face~$r$ (r-gons) and the number of faces incident to each vertex~$s$.
This information is summarized in the \emph{Schl\"afli symbol} $\{r,s\}$.
Not every pair of numbers $r$ and $s$ is a valid tessellation due to geometric constraints.
This can be seen as follows:
subdivide a face into $2r$ triangles by cutting along all lines of symmetry.
Each triangle has internal angles~$\pi/2$,~$\pi/r$ and~$\pi/s$.
In~$\mathbb{E}^2$ the sum of these angles has to be $\pi$, so that the only possible regular tessellations of~$\mathbb{E}^2$ are tessellations by squares~$\{4,4\}$, hexagons~$\{6,3\}$ and triangles~$\{3,6\}$.
In the hyperbolic plane the restriction due to \Cref{eqn:hyperbolic_triangle_angles} allows for an infinite number of regular tessellations, namely all $\{r,s\}$ such that
\begin{align}\label{eqn:hyperbolic_schlaefli_condition}
	\frac{1}{r} + \frac{1}{s} < \frac{1}{2} .
\end{align}

\begin{figure}
	\centering
	\includegraphics[width=0.7\linewidth]{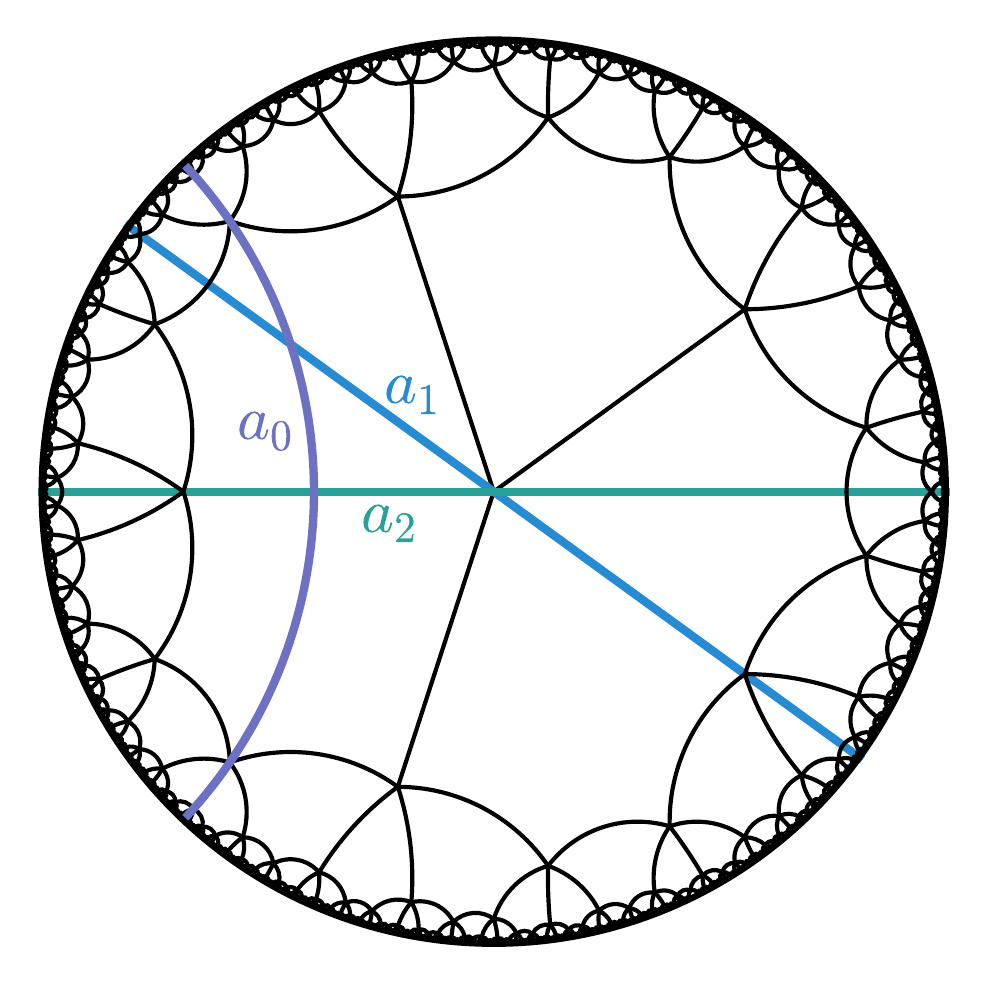} 
	\caption{Picture of a regular $\{5,5\}$ tessellation in in the Poincar\'e disc model.
	All line segments are geodesics with respect to the hyperbolic geometry.}
	\label{fig:regular_tessellation}
\end{figure}

\subsubsection{Symmetries of regular tessellations}
Regular tessellations are highly symmetric.
The group of symmetries of a regular tessellation is generated by reflections along lines of symmetry.
These lines of symmetry subdivide the tessellation into identical triangles with internal angles~$\pi/2$, $\pi/r$ and~$\pi/s$.
The symmetry group acts freely and transitively on the triangles, meaning that no triangle is stabilized by the group action and every triangle can be mapped onto any other.
By fixing one arbitrary triangle and assigning it the identity element of the group, we have a one-to-one correspondence between the triangles and the group elements.
Each triangle is uniquely determined by a triple consisting of a vertex, an edge and a face of the tessellation (all adjacent to one another).
Such triples are also known as \emph{flags}.
Two triangles are adjacent if and only if their corresponding flags differ in exactly one of their entries.

All symmetries of the tessellation must preserve the distance which makes them isometries.
Let $G_{r,s}$ denote the symmetry group of a hyperbolic tessellation with Schl\"afli symbol~$\{r,s\}$, then we have that $G_{r,s}\leq \operatorname{Isom}(\mathbb{H}^2)$.

The symmetry group of a regular tessellation~$\{r,s\}$ can be defined as a finitely presented group in terms of the generators and their relations.
There are three generators: $a_0$, $a_1$ and~$a_2$, each a reflection on one of the sides of a triangle.
The generator~$a_0$ reflects on the side of the triangle opposed to the vertex of the tessellation,~$a_1$ reflects on the side of the triangle opposed to the edge of the tessellation and~$a_2$ reflects on the side of the triangle opposed to the middle of the face of the tessellation.

We can define~$G_{r,s}$ as a finitely presented group in terms of its generators and the relations they obey.
As the generators are reflections we must have that $a_i^2$ is equal to the neutral element of the group.
Furthermore, the product of two generators forms a rotation.
For example, $a_0 a_1$ is a rotation around the center of a face which has order~$r$.
Similarly,~$a_1 a_2$ and~$a_0a_2$ are rotations around a vertex and an edge so that their orders are~$s$ and~$2$, respectively.
These are in fact all relations so that the symmetry group of a $\{r,s\}$ tessellation is
\begin{align*}
G_{r,s} = \left\langle a_0, a_1, a_2 \mid a_0^2, a_1^2, a_2^2, (a_0\, a_1)^{r}, (a_1 a_2)^s, (a_0 a_2)^2 \right\rangle .
\end{align*}
There exists a index 2 subgroup in $G_{r,s}$ which is generated by the rotations $\rho = a_0 a_1$ and $\sigma = a_1 a_2$.
The group is the subgroup of orientation preserving symmetries and denoted
\begin{align*}
	G_{r,s}^+ = \left\langle \rho, \sigma \mid \rho^{r}, \sigma^s, (\rho \sigma)^2 \right\rangle .
\end{align*}
The groups~$G_{r,s}^+$ are a special case of Fuchsian triangle groups, which are also denoted~$\Delta(2,r,s)$ in the literature.
Note that from the discussion in \Cref{sec:hyperbolic_isometries} it follows that
\begin{align}\label{eqn:triangle_group_in_PSL}
	G_{r,s}^+ \leq \operatorname{Isom}^+(\mathbb{H}^2)\simeq \operatorname{PSL}(2,\mathbb{R}) .
\end{align}

\subsubsection{Compact hyperbolic surfaces}\label{sec:hyperbolic_compact}
We  can  use  the  identification  between  the  fundamental triangles  and  the  group  elements  to  obtain  tessellations  of closed  surfaces.  
The  idea  is  to  consider  finite  quotients  of the infinite group, which leave the local structure of the group invariant. 
Geometrically, the procedure essentially consists of finding  translations  and  identifying  points  which  differ  by these translations. 
For example, on the 2D euclidean plane we can take an arbitrary translation and by identifying all points differing  by  this  translation  we  obtain  a  cylinder  of  infinite length. 
Taking a second translation, which is not co-linear with the first one, we obtain a torus.

In euclidean space all translation commute, but this is not true any longer in curved spaces, in particular in~$\mathbb{H}^2$.
Therefore we have to find a set of translations which, as a whole, commutes with all other elements of the symmetry group~$G_{r,s}$.
In the language of group theory; we are looking for a normal subgroup~$N$ of~$G_{r,s}$ which operates without fixed-points on~$\mathbb{H}^2$.
In the simplest case we consider a single translation~$\tau$ and take the normal closure of the group generated by~$\tau$ with respect to either~$G_{r,s}$ or~$G_{r,s}^{+}$ depending on whether we want to restrict ourselves to orientation preserving isometries:
\begin{align}\label{eqn:compactification_translation}
	N_\tau^{(+)} = \langle \tau \rangle^{G_{r,s}^{(+)}} = \lbrace g^{-1} t g \mid g\in G_{r,s}^{(+)},\, t \in \langle \tau \rangle  \rbrace 
\end{align}
If~$N_\tau$ has finite index in~$G_{r,s}^{(+)}$ then~$\mathbb{H}^2/N_\tau^{(+)}$ is a closed surface with finite area.

Such surfaces give rise to quantum codes with a linear encoding rate~\cite{breuckmann2016constructions}.


\section*{Acknowledgment}
NPB acknowledges support through his UCLQ Fellowship and the EPSRC Prosperity Partnership in Quantum Software for Simulation and Modelling (EP/S005021/1).
NPB would like to thank Johannes Keller, Jascha Ulrich and Tobias K\"uhn for many illuminating discussions on fiber bundles. JNE warmly thanks Wolfgang Soergel and Britta Kaisers.

\ifCLASSOPTIONcaptionsoff
  \newpage
\fi



\bibliographystyle{IEEEtran}
\bibliography{IEEEabrv,bibliography.bib}

%
%

%

%
%
%

\vfill

\begin{IEEEbiographynophoto}{Nikolas P. Breuckmann}
	holds a UCLQ Research Fellowship at University College London. He is interested in quantum information and related fields. He obtained his PhD at RWTH Aachen University working with Prof. Barbara Terhal on quantum fault-tolerance and quantum complexity theory. He has worked in industry at PsiQuantum, a Bay Area based start-up building a silicon-photonics based quantum computer.
\end{IEEEbiographynophoto}

\begin{IEEEbiographynophoto}{Jens N. Eberhardt} is a postdoctoral fellow at the University of Bonn. He is interested in geometric representation theory and its applications. He obtained his PhD at the University of Freiburg working with Prof. Wolfgang Soergel.  His previous stations include RWTH Aachen University, UCLA and the Max Planck Institute for Mathematics in Bonn. 
\end{IEEEbiographynophoto}




\end{document}